\documentclass[submission,copyright,creativecommons]{eptcs}

\pdfoutput=1

\usepackage{amsthm, amsmath, amssymb, dsfont, physics, textcomp}
\usepackage{tikz}
\usepackage{tikzit}

\tikzstyle{dot}=[inner sep=0.3mm, minimum width=2mm, minimum height=2mm, draw, shape=circle, font={\footnotesize}, tikzit fill=magenta]
\tikzstyle{white dot}=[dot, fill=white, text depth=-0.2mm, tikzit category=ZH-pf]
\tikzstyle{gray dot}=[dot, fill=black, text depth=-0.2mm, tikzit category=ZH-pf]
\tikzstyle{gray phase dot}=[gray dot, fill=black, tikzit fill=magenta]
\tikzstyle{hadamard}=[fill=yellow, draw, inner sep=0.6mm, minimum height=1.5mm, minimum width=1.5mm, shape=rectangle, tikzit shape=rectangle, tikzit category=ZH-pf]
\tikzstyle{small hadamard}=[fill=yellow, draw, inner sep=0.6mm, minimum height=1.5mm, minimum width=1.5mm, tikzit shape=rectangle]
\tikzstyle{small gray hadamard}=[fill=black, draw, inner sep=0.6mm, minimum height=1.5mm, minimum width=1.5mm, tikzit shape=rectangle]
\tikzstyle{halfscalar}=[star, fill=black, draw=black, minimum size=6pt, inner sep=0pt]
\tikzstyle{scalar}=[regular polygon, fill=black, draw=black, minimum size=6pt, inner sep=0pt, regular polygon sides=3]
\tikzstyle{box}=[shape=rectangle, text height=1.5ex, text depth=0.25ex, yshift=0.2mm, fill=white, draw=black, minimum height=3mm, minimum width=5mm, font={\small}]
\tikzstyle{smallbox}=[draw, fill=white, inner sep=0.6mm, minimum height=1.5mm, minimum width=1.5mm, font={\scriptsize}, shape=rectangle]
\tikzstyle{multiply box}=[fill=white, draw, inner sep=0.6mm, minimum height=1ex, minimum width=2ex, text depth=0ex, text height=1.25ex, tikzit shape=rectangle, font={$\cdot$}]
\tikzstyle{Z dot}=[inner sep=0mm, minimum size=2mm, shape=circle, draw=black, fill={rgb,255: red,216; green,248; blue,216}, tikzit fill={rgb,255: red,216; green,248; blue,216}]
\tikzstyle{Z phase dot}=[minimum size=5mm, font={\footnotesize\boldmath}, shape=rectangle, rounded corners=2mm, inner sep=0.2mm, outer sep=-2mm, scale=0.8, tikzit shape=circle, draw=black, fill={zx_green}, tikzit fill={rgb,255: red,216; green,248; blue,216}, tikzit draw=blue]
\tikzstyle{X dot}=[Z dot, shape=circle, draw=black, fill={zx_red}]
\tikzstyle{X phase dot}=[Z phase dot, tikzit shape=circle, tikzit draw=blue, fill={zx_red}, font={\footnotesize\color{black}\boldmath}]
\tikzstyle{lbl}=[font={\scriptsize}]
\tikzstyle{monoid}=[shape=semicircle, fill=white, draw=black, inner sep=0.5mm, rotate=180]
\tikzstyle{gray monoid}=[shape=semicircle, fill=black, draw=black, inner sep=0.5mm, rotate=180]
\tikzstyle{red monoid}=[shape=semicircle, fill={zx_red}, draw=black, inner sep=0.5mm, rotate=180]
\tikzstyle{yellow monoid}=[shape=semicircle, fill=yellow, draw=black, inner sep=0.5mm, rotate=180]
\tikzstyle{comonoid}=[shape=semicircle, fill=white, draw=black, inner sep=0.5mm]
\tikzstyle{gray comonoid}=[shape=semicircle, fill=black, draw=black, inner sep=0.5mm]
\tikzstyle{red comonoid}=[shape=semicircle, fill={zx_red}, draw=black, inner sep=0.5mm]
\tikzstyle{yellow comonoid}=[shape=semicircle, fill=yellow, draw=black, inner sep=0.5mm]
\tikzstyle{green bra}=[fill={rgb,255: red,216; green,248; blue,216}, draw=black, regular polygon, regular polygon sides=3, tikzit fill={rgb,255: red,216; green,248; blue,216}, tikzit draw=black, inner sep=0 mm, outer sep=0 mm, shape border rotate=0, font={\tiny}, minimum height=9mm]
\tikzstyle{green ket}=[fill={rgb,255: red,216; green,248; blue,216}, draw=black, regular polygon, regular polygon sides=3, tikzit fill={rgb,255: red,216; green,248; blue,216}, tikzit draw=black, inner sep=0 mm, outer sep=0 mm, shape border rotate=180, font={\tiny}, minimum height=9mm]
\tikzstyle{ket}=[draw=black, regular polygon, regular polygon sides=3, tikzit draw=black, inner sep=0 mm, outer sep=0 mm, shape border rotate=180, font={\tiny}, minimum height=9mm]
\tikzstyle{gn}=[inner sep=0mm, minimum size=2mm, shape=circle, draw=black, fill={rgb,255: red,216; green,248; blue,216}, tikzit fill={rgb,255: red,216; green,248; blue,216}]
\tikzstyle{srn}=[Z dot, shape=circle, draw=black, fill={rgb,255: red,232; green,165; blue,165}, tikzit fill={rgb,255: red,232; green,165; blue,165}, minimum size=1mm]
\tikzstyle{rn}=[Z dot, shape=circle, draw=black, fill={rgb,255: red,232; green,165; blue,165}, tikzit fill={rgb,255: red,232; green,165; blue,165}]
\tikzstyle{pn}=[Z dot, fill={rgb,255: red,255; green,200; blue,240}, tikzit fill={rgb,255: red,255; green,200; blue,240}]
\tikzstyle{yn}=[Z dot, fill=yellow, tikzit fill=yellow]
\tikzstyle{H box}=[rectangle, fill=yellow, draw=black, xscale=1, yscale=1, font={\small}, inner sep=0.75pt, minimum width=0.15cm, minimum height=0.15cm, tikzit shape=rectangle]
\tikzstyle{yellow hadamard}=[fill=yellow, draw=black, shape=rectangle, inner sep=0.6mm, minimum height=1.5mm, minimum width=1.5mm]
\tikzstyle{ug}=[regular polygon, regular polygon sides=3, fill={zx_red}, draw=black, inner sep=0pt, minimum width=1em, tikzit draw=blue]
\tikzstyle{st}=[star, star points=5, fill=white, draw=black, inner sep=1.2pt, line width=1.2pt, tikzit fill=blue, tikzit draw=red, tikzit category=ZH-pf]
\tikzstyle{not}=[fill={rgb,255: red,180; green,180; blue,180}, draw=black, shape=circle, font={$\neg$}, dot]
\tikzstyle{bbindex}=[font={\color{blue}\footnotesize}]
\tikzstyle{wide point}=[fill=white, draw, shape=isosceles triangle, shape border rotate=-90, isosceles triangle stretches=true, inner sep=0pt, minimum width=1.5cm, minimum height=6.12mm, yshift=-0.0mm]
\tikzstyle{medium gray box}=[semilarge box, fill={rgb,255: red,180; green,180; blue,180}]
\tikzstyle{small box}=[rectangle, inline text, fill=white, draw, minimum height=5mm, yshift=-0.5mm, minimum width=5mm, font={\small}]
\tikzstyle{small gray box}=[small box, fill={rgb,255: red,180; green,180; blue,180}]
\tikzstyle{medium box}=[rectangle, inline text, fill=white, draw, minimum height=5mm, yshift=-0.5mm, minimum width=8mm, font={\small}]
\tikzstyle{wire label}=[font={\footnotesize}, tikzit fill=blue, anchor=west, shape=rectangle, inner sep=1pt, xshift=-1mm]
\tikzstyle{0 control}=[minimum size=1mm, shape=circle, draw=black, fill=white, font={\footnotesize\boldmath}, inner sep=0.5pt]
\tikzstyle{Xplus dot}=[Z dot, shape=circle, draw=black, fill=red]
\tikzstyle{neg0 control}=[0 control, inner sep=0 mm, fill=white, draw=black]
\tikzstyle{hz}=[small hadamard, fill={rgb,255: red,216; green,248; blue,216}, shape=rectangle, tikzit fill={rgb,255: red,216; green,248; blue,216}, minimum height=1.5 mm, minimum width=0.75 mm, tikzit draw=black, text width=0.1 mm, inner sep=0.1 mm]
\tikzstyle{hx}=[small hadamard, fill=red, shape=rectangle, tikzit fill=red, minimum height=1.5 mm, minimum width=0.75 mm, tikzit draw=black, text width=0.1 mm, inner sep=0.1 mm]
\tikzstyle{ctrl}=[inner sep=0mm, minimum size=1mm, shape=circle, draw=black, fill=black]
\tikzstyle{targ}=[draw=black, fill=white, font={\footnotesize\boldmath}, minimum size=0.5mm, inner sep=-0.5mm, shape=circle, tikzit shape=circle, tikzit draw=black]
\tikzstyle{XD dot}=[shape=XDdot, inner sep=2pt, draw=black, tikzit fill={rgb,255: red,255; green,191; blue,191}]
\tikzstyle{XD phase dot}=[shape=XDdotphase, minimum size=4.75mm, font={\footnotesize}, inner sep=.1mm, outer sep=0mm, scale=0.8, tikzit shape=circle, rounded corners=1.9mm, draw=black, tikzit fill={rgb,255: red,255; green,191; blue,191}, tikzit draw=blue]

\tikzstyle{gray}=[-, draw={blue!60!white}, tikzit draw=blue]
\tikzstyle{blue}=[-, draw={blue!60!white}, tikzit draw=blue]
\tikzstyle{brace edge}=[-, tikzit draw=blue, decorate, decoration={brace,amplitude=1mm,raise=-1mm}]
\tikzstyle{diredge}=[->]
\tikzstyle{not edge}=[-, dashed, dash pattern=on 2pt off 1.5pt, thick, draw={rgb,255: red,255; green,68; blue,68}]
\tikzstyle{thin}=[-, line width=0.10mm]

\usetikzlibrary{calc,decorations.pathreplacing,positioning,shapes,cd}
\usetikzlibrary{positioning}
\usetikzlibrary{shapes}
\usetikzlibrary{chains}
\usepackage{quantikz}
\usepackage{xurl}
\usepackage{hyperref}
\usepackage[capitalize,nameinlink,sort,noabbrev]{cleveref}

\newcommand{\Z}{\mathbb{Z}}
\newcommand{\N}{\mathbb{N}}
\newcommand{\C}{\mathbb{C}}

\newcommand{\Zz}{\mathbb{Z}[\omega]}
\newcommand{\Zzk}[1]{\mathbb{Z}[\omega_{#1}]}
\newcommand{\Tz}{\mathbb{Z}[1/3,\omega]}

\newcommand{\Tzk}[1]{\mathbb{Z}[1/3,\omega_{#1}]}
\newcommand{\s}[1]{\{#1\}}
\newcommand{\un}{\mathrm{U}_n(\Tz)}
\newcommand{\unn}[1]{\mathrm{U}_{#1}(\Tz)}
\newcommand{\uu}{\mathrm{U}(\Tz)}
\newcommand{\unk}[1]{\mathrm{U}_n(\Tzk{#1})}
\newcommand{\unkk}[2]{\mathrm{U}_{#1}(\Tzk{#2})}
\newcommand{\uuk}[1]{\mathrm{U}(\Tzk{#1})}
\newcommand{\gens}{\mathcal{S}}
\newcommand{\gensn}{\mathcal{S}_n}
\newcommand{\gensnn}[1]{\mathcal{S}_{#1}}
\newcommand{\lde}{\operatorname{lde}} 
\newcommand{\calu}{\mathcal{U}}
\newcommand{\calv}{\mathcal{V}}
\newcommand{\calw}{\mathcal{W}}
\newcommand{\vecc}{\ket{c}}

\newcommand{\rotk}[1]{T_{#1}}
\newcommand{\gatesetk}[1]{\mathcal{G}_{#1}}

\newcommand{\gatesetkone}{\mathcal{G}}

\newtheoremstyle{break}
  {}
  {}
  {\itshape}
  {}
  {\bfseries}
  {.}
  {\newline}
  {}

\theoremstyle{plain}
\newtheorem{theorem}{Theorem}[section]
\newtheorem{lemma}[theorem]{Lemma}
\newtheorem{proposition}[theorem]{Proposition}
\newtheorem*{proposition*}{Proposition}
\newtheorem{corollary}[theorem]{Corollary}
\theoremstyle{break}

\theoremstyle{definition}
\newtheorem{definition}[theorem]{Definition}

\theoremstyle{remark}
\newtheorem{remark}[theorem]{Remark}

\newcommand{\urlaltve}[2]{\href{#2}{\nolinkurl{#1}}}

\usepackage{xcolor}
 \hypersetup{
     colorlinks,
     linkcolor={red!80!black},
     citecolor={green!80!black},
     urlcolor={blue!80!black}
 }

\title{
  Exact Synthesis of Multiqutrit Clifford-Cyclotomic Circuits
  }

\author{Andrew N. Glaudell
\institute{Photonic Inc.}
\email{andrewglaudell@gmail.com} \and
Neil J. Ross
\institute{Dalhousie University}
\email{neil.jr.ross@dal.ca} \and
John van de Wetering
\institute{University of Amsterdam}
\email{john@vdwetering.name} \and 
Lia Yeh
\institute{University of Oxford}
\email{lia.yeh@cs.ox.ac.uk}
}

\date{\today}

\begin{document}

\maketitle

\begin{abstract}
  It is known that the matrices that can be exactly represented by a
  multiqubit circuit over the Toffoli+Hadamard, Clifford+$T$, or, more
  generally, Clifford-cyclotomic gate set are precisely the unitary
  matrices with entries in the ring $\Z[1/2,\zeta_k]$, where $k$ is a
  positive integer that depends on the gate set and $\zeta_k$ is a
  primitive $2^k$-th root of unity. In the present paper, we establish
  an analogous correspondence for qutrits. We define the multiqutrit
  Clifford-cyclotomic gate set of degree $3^k$ by extending the
  classical qutrit gates $X$, $CX$, and $CCX$ with the Hadamard gate
  $H$ and the $T_k$ gate $T_k=\mathrm{diag}(1,\omega_k, \omega_k^2)$,
  where $\omega_k$ is a primitive $3^k$-th root of unity. This gate
  set is equivalent to the qutrit Toffoli+Hadamard gate set when
  $k=1$, and to the qutrit Clifford+$T_k$ gate set when $k>1$. We then
  prove that a $3^n\times 3^n$ unitary matrix $U$ can be represented
  by an $n$-qutrit circuit over the Clifford-cyclotomic gate set of
  degree $3^k$ if and only if the entries of $U$ lie in the ring
  $\mathbb{Z}[1/3,\omega_k]$.
\end{abstract}

\section{Introduction}
\label{sec:intro}

\subsection{Background}
\label{ssec:background}

In quantum computing, \textbf{synthesis} refers to the process of
converting a representation of a unitary into a quantum circuit. In
\textbf{exact synthesis} the unitary is typically given as a matrix,
and the goal is to produce a circuit that implements the matrix
exactly. This is in contrast to \textbf{approximate synthesis}, where
the circuit is only required to implement the given matrix up to some
prescribed error budget.

A solution to an exact synthesis problem for a gate set $\mathcal{G}$
sometimes characterizes the unitary matrices that can be exactly
represented by a circuit over $\mathcal{G}$. For instance, the 
matrices with entries in the ring $\mathbb{Z}[1/2]$ of dyadic
rationals corresponds precisely to the unitary matrices that can be
represented using the Toffoli gate and the tensor product $H\otimes H$
of the Hadamard gate with itself \cite{Amy2020}. Similarly,
Clifford+$T$ circuits correspond to unitary matrices with entries in
$\mathbb{Z}[1/2, e^{2\pi i /8}]$ \cite{Giles2013a}. More generally, it
was recently shown that multiqubit circuits over the
Clifford-cyclotomic gate set of degree $k$, which extends the Clifford 
gate set with a $z$-rotation by angle $2\pi/2^k$, correspond to
unitary matrices with entries the ring $\mathbb{Z}[1/2,e^{2\pi
    i/2^k}]$ \cite{qubitcyclo}.

In this paper, we consider the exact synthesis problem for
qutrits. Like for qubits, fault-tolerant universal quantum computation
has been theoretically devised for qutrits through magic state
distillation~\cite{AnwarH2012qutritmsd,
  CampbellE2012tgatedistillation, PrakashS2020qutritgolay} or gauge
fixing of colour codes~\cite{WatsonF2015quditcolorcodes}. In recent
years, qudit operations have been demonstrated on many experimental
platforms~\cite{HrmoP2023quditions,KasperV2022quditcoldatoms,WeggemansJ2022quditRydberg,YurtalanM2020Walsh-Hadamard},
with error rates competitive to qubit
operations~\cite{RingbauerM2018ququartphotonic,
  ChiY2022quditphotonicprocessor}. Qutrit exact synthesis problems,
however, have received less attention than their qubit counterparts
and only a few results exist: a normal form for single-qutrit
Clifford+$T$ unitaries \cite{qutritGlaudell,prakashqutrit}, a proof
that all classically reversible functions on trits can be implemented
using Clifford+$T$ circuits \cite{liaRC}, and an exact synthesis
result for single-qutrit Clifford+$R$ unitaries
\cite{KalraAR2024synth1qutrit}.

Let $k$ be a positive integer and let $\omega_k\in\C$ be the
\textbf{primitive $3^k$-th root of unity} $\omega_k = e^{2\pi
  i/3^k}$. For simplicity, we write $\omega$ for $\omega_1$. The
single-qutrit \textbf{Pauli $X$} gate, \textbf{Pauli $Z$} gate,
\textbf{phase} gate $S$, and \textbf{Hadamard} gate $H$ are defined
below.
\[
X = 
\begin{bmatrix}
\cdot     & \cdot    & 1 \\
1 & \cdot & \cdot \\
\cdot & 1    & \cdot \\
\end{bmatrix}
\qquad
Z = 
\begin{bmatrix}
1     & \cdot    & \cdot \\
\cdot & \omega & \cdot \\
\cdot & \cdot    & \omega^2 \\
\end{bmatrix}
\qquad
S = 
\begin{bmatrix}
1     & \cdot    & \cdot \\
\cdot & 1 & \cdot \\
\cdot & \cdot    & \omega \\
\end{bmatrix}
\qquad
H = \frac{-\omega^2}{\sqrt{-3}}
\begin{bmatrix}
1 & 1        & 1 \\
1 & \omega   & \omega^2 \\
1 & \omega^2 & \omega \\
\end{bmatrix}
\]
The two-qutrit \textbf{controlled-$X$} gate $CX$ is the permutation
matrix whose action on the computational basis is defined by
$\ket{i}\ket{j} \mapsto \ket{i}\ket{i+j}$, with addition performed
modulo 3. The three-qutrit \textbf{doubly-controlled-$X$} gate $CCX$
(or \textbf{Toffoli} gate) is similarly defined by
$\ket{i}\ket{j}\ket{k} \mapsto \ket{i}\ket{j}\ket{k+ij}$. The gate set
$\s{H, S, CX}$ is the \textbf{Clifford} gate set. Now define the
single-qutrit $T_k$ gate
\[
T_k = 
\begin{bmatrix}
1     & \cdot    & \cdot \\
\cdot & \omega_k & \cdot \\
\cdot & \cdot    & \omega^2_k \\
\end{bmatrix}.
\]
When $k=2$, $T_k$ is the qutrit $T$ gate \cite{HowardM2012quditTgate}.

The \textbf{Clifford-cyclotomic gate set of degree $3^k$} is the gate
set $\gatesetk{k} = \s{X, CX, CCX, H, \rotk{k}}$. When $k=1$, we have
$\rotk{1} = Z = H X H^\dagger$, so that the Clifford-cyclotomic gate
set of degree $3$ is equivalent to the qutrit
\textbf{Toffoli+Hadamard} gate set \cite{qupitZH}. As we will show
below, when $k\geq 2$, the gate set $\gatesetk{k}$ is equivalent (up
to a single ancillary qutrit) to the \textbf{Clifford+$\rotk{k}$} gate
set $\s{H, S, CX, \rotk{k}}$. In particular, the Clifford-cyclotomic
gate set of degree 9 is equivalent to the well-known qutrit
\textbf{Clifford+$T$} gate set
\cite{qutritGlaudell,metaplectic,prakashqutrit,liaRC}. Because
$T_{k+1}^3 = T_k$, the Clifford-cyclotomic gate sets form a hierarchy
of universal gate sets whose first level is given by the
Toffoli+Hadamard gate set, whose second level is given by the
Clifford+$T$ gate set, and whose subsequent levels are given by finer
and finer extensions of the Clifford gate set.

Now consider the ring $\Tzk{k}$, which can be defined as the smallest
unital subring of $\C$ containing $1/3$ and $\omega_k$. Since
$-\omega^2/\sqrt{-3} = \omega^2 (1-\omega)/3$, the entries of $X$,
$CX$, $CCX$, $H$, and $T_k$ lie in $\Tzk{k}$. Hence, any $n$-qutrit
circuit over $\gatesetk{k}$ must represent a unitary matrix with
entries in $\Tzk{k}$. The purpose of this paper is to show that the
converse implication is also true.

\subsection{Contributions}
\label{ssec:contribs}

We show that a $3^n\times 3^n$ unitary matrix $U$ can be exactly
represented by an $n$-qutrit circuit over the Clifford-cyclotomic gate
set of degree $3^k$ if and only if the entries of $U$ belong to the
ring $\Tzk{k}$. Furthermore, we show that no more than $k+1$ ancillae
are required for this purpose.

We therefore solve the exact synthesis problem for multiqutrit
Toffoli+Hadamard circuits, multiqutrit Clifford+$T$ circuits, and,
more generally, multiqutrit Clifford-cyclotomic circuits. To the best
of our knowledge, this is the first time that a multiqudit exact
synthesis result is established for any prime $d>2$.

A similar hierarchy of Clifford-cyclotomic gate sets exists for
qubits, and the correspondence between Clifford-cyclotomic circuits
and matrices with entries in rings of algebraic integers also holds in
that case \cite{qubitcyclo}. Following \cite{qubitcyclo}, we prove our
result inductively. We first show that circuits over $\gatesetk{1}$
correspond to unitary matrices over $\Tz$ by reasoning as in
\cite{Amy2020,Giles2013a,greylyn}. This serves as the base case of our
induction. Then, we use properties of the ring extension $\Tzk{k}
\subseteq \Tzk{k+1}$ and the theory of catalytic embeddings
\cite{catemb} to establish the inductive step.

\subsection{Contents}
\label{ssec:contents}

The paper is organized as follows. We discuss the necessary
number-theoretic prerequisites in \cref{sec:ringsandgroups}. In
\cref{sec:generators}, we introduce a convenient generating set for
the group $\un$ of $n$-dimensional unitary matrices with entries in
the ring $\Tz$, and in \cref{sec:circuits} we show that the elements
of this generating set can be represented by Clifford-cyclotomic
circuits of degree $3$ (explicit circuit decompositions are given in
\cref{app:decompositions}). We introduce catalytic embeddings in
\cref{sec:catembs}. We leverage the results of the previous sections
in \cref{sec:exactsynth} to prove our main result. We comment on the
complexity of the produced circuits in \cref{sec:complexity} and we
conclude in \cref{sec:conc}.

\paragraph{Disclaimer:}
After the present work was completed, it was brought to our attention
that related results were independently established in
\cite{KalraAR2024synthmultiqutrit}.

\section{Rings and Groups}
\label{sec:ringsandgroups}

In this section, we discuss the rings and groups which will be
important in the rest of the paper. In what follows, when $u$, $u'$,
and $v$ are elements of a ring $R$, we write $u\equiv_v u'$ if $u$ is
congruent to $u'$ modulo $v$, i.e., if $u-u' = rv$ for some $r\in R$.

\subsection{The Ring \texorpdfstring{$\Zzk{k}$}{Z[ωk]}}
\label{ssec:zzk}

\begin{definition}
  \label{def:zeta}
  Let $k\geq 1$. The \textbf{primitive $3^k$-th root of unity}
  $\omega_k\in\C$ is defined as $\omega_k=e^{2\pi i/3^k}$.
\end{definition}

We have, for $k>1$, $\omega_k^3 = \omega_{k-1}$, $\omega_k^{3^k}=1$,
$\omega_k^\dagger = \omega_k^{3^k-1}$, and $\omega_k^0 + \omega_k^1 +
\ldots +\omega_k^{3^k-1} =0$. As mentioned in \cref{sec:intro}, we
often write $\omega$ for $\omega_1$.

\begin{definition}
  \label{def:cyclotk}
  Let $k\geq 1$. The ring $\Zzk{k}$ of \textbf{cyclotomic integers of
    degree $3^k$} is the smallest subring of $\C$ that contains
  $\omega_k$.
\end{definition}

The ring $\Zzk{k}$ can be defined in a variety of ways
\cite{cyclo}. It will be useful for our purposes to note that
$\Zz=\s{a+b\omega \mid a,b\in\Z}$, and that, for $k\geq 2$,
\[
\Zzk{k} = \s{a+b\omega_k + c\omega_k^2 \mid a,b,c\in\Zzk{k-1}}.
\]
Furthermore, the expression of an element of $\Zzk{k}$ as a linear
combination of elements of $\Zzk{k-1}$ is unique. The ring $\Zzk{k}$
is closed under complex conjugation and, for $k\geq 2$, we have
$\Zzk{k-1}\subseteq\Zzk{k}$.

\subsection{Properties of \texorpdfstring{$\Zz$}{Z[ω]}}
\label{ssec:propszz}

We now record some useful properties of $\Zz$. If $u=a+b\omega\in\Zz$,
then
\begin{equation}
  \label{eq:norm}
  u^\dagger u = (a+b\omega)(a+b\omega^2) = a^2 + ab(\omega+\omega^2) +
  b^2 = a^2 - ab + b^2.
\end{equation}
In particular, if $u\in\Zz$, then $u^\dagger u$ is a nonnegative
integer, since the Euclidean norm of a complex number is always
nonnegative.

\begin{definition}
  \label{def:lambda}
  We define $\lambda\in\Zz$ as $\lambda=1-\omega$.
\end{definition}

By \cref{eq:norm}, we have $\lambda^\dagger\lambda = 3$. Similarly, we
have $\lambda^2=1-2\omega+\omega^2= -3\omega$, so that
$3=-\lambda^2\omega^2$. Hence, $3\equiv_\lambda 0$.

\begin{proposition}
  \label{prop:zzetamod}
  We have
  \begin{itemize}
  \item $\Zz/(3) \cong \s{0,1,2, \omega, 2\omega, 1+\omega, 1+2\omega,
    2+\omega, 2+2\omega}\cong \Z/(3) + \omega\Z/(3)$ and
  \item $\Zz/(\lambda) \cong \s{0,1,2}\cong\Z/(3)$.
  \end{itemize}
\end{proposition}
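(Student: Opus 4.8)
The plan is to identify $\Zz$ with the polynomial ring $\Z[x]/(x^2+x+1)$ by sending $\omega$ to $x$ (using $\omega^2+\omega+1=0$), and then to compute the two quotients directly.

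For the first item, I would use that $\s{1,\omega}$ is a free $\Z$-basis of $\Zz$, so that $\Zz/(3)=\Zz/3\Zz$ is, as an abelian group, $(\Z/3\Z)\oplus(\Z/3\Z)$ with the nine cosets represented by $a+b\omega$ for $a,b\in\s{0,1,2}$; these representatives are pairwise incongruent modulo $3$ precisely because of freeness. This simultaneously yields the explicit list of representatives and the identification with $\Z/(3)+\omega\Z/(3)$, the induced ring structure being the evident one, with products reduced using $\omega^2\equiv_3 2+2\omega$.

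For the second item, I would first note that $(3)\subseteq(\lambda)$, since $3=-\lambda^2\omega^2=\lambda\cdot(-\lambda\omega^2)$, so that there is a canonical surjection $\Zz/(3)\twoheadrightarrow\Zz/(\lambda)$. Because $\lambda=1-\omega\equiv_\lambda 0$ we have $\omega\equiv_\lambda 1$, hence $a+b\omega\equiv_\lambda a+b$ for all $a,b\in\Z$, and therefore the composite $\Z\hookrightarrow\Zz\twoheadrightarrow\Zz/(\lambda)$ is already surjective. It remains to identify its kernel $\Z\cap(\lambda)$ with $3\Z$: the inclusion $3\Z\subseteq\Z\cap(\lambda)$ is immediate, and conversely if $n\in\Z$ with $n=\lambda\mu$ for some $\mu\in\Zz$, then \cref{eq:norm} gives $n^2=n^\dagger n=(\lambda^\dagger\lambda)(\mu^\dagger\mu)=3\,\mu^\dagger\mu$ with $\mu^\dagger\mu\in\Z$, whence $3\mid n$ since $3$ is prime. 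Thus $\Zz/(\lambda)\cong\Z/3\Z\cong\s{0,1,2}$.

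None of this is difficult; the one place that needs a moment's thought is verifying that $\Z\cap(\lambda)=3\Z$ rather than all of $\Z$ (equivalently, that $\lambda$ is a genuine non-unit), and this is exactly what the norm identity $\lambda^\dagger\lambda=3$ provides. A more structural alternative, which also clarifies the relationship between the two quotients, is to observe that $\Zz/(3)\cong (\Z/(3))[x]/((x-1)^2)$ is a local ring whose maximal ideal is generated by the image of $\lambda=1-\omega$ and has residue field $\Z/(3)$, giving the second isomorphism at once.
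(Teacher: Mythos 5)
Your proof is correct and follows the same route as the paper's (which is only a two-line sketch): reduce modulo $3$ using the $\Z$-basis $\s{1,\omega}$, and reduce modulo $\lambda$ using $3\equiv_\lambda 0$ and $\omega\equiv_\lambda 1$. The one substantive thing you add is the verification that the residues $0,1,2$ are genuinely distinct modulo $\lambda$ (i.e.\ that $\Z\cap(\lambda)=3\Z$, via the norm identity $\lambda^\dagger\lambda=3$), a point the paper's proof leaves implicit; this is a worthwhile completion rather than a different approach.
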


\begin{proof}
  The first item follows from the fact that $3\equiv_3 0$. The second
  item follows from the fact that $3\equiv_\lambda 0$ and the fact
  that $\omega\equiv_\lambda 1$.
\end{proof}

\begin{proposition}
  \label{prop:residues}
  If $u\in\Zz$, then $u^\dagger u \equiv_\lambda 0$ or $u^\dagger u
  \equiv_\lambda 1$.
\end{proposition}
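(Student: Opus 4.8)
The plan is to reduce the claim to an elementary computation modulo $3$. Write $u = a + b\omega$ with $a,b\in\Z$. By \cref{eq:norm} we have $u^\dagger u = a^2 - ab + b^2$, which is a nonnegative integer. The key algebraic observation is the identity $a^2 - ab + b^2 = (a+b)^2 - 3ab$. Since $3 = -\lambda^2\omega^2$ is divisible by $\lambda$ in $\Zz$ (as recorded just after \cref{def:lambda}), this identity gives $u^\dagger u \equiv_\lambda (a+b)^2$.

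Next I would invoke \cref{prop:zzetamod}: under the isomorphism $\Zz/(\lambda)\cong\Z/(3)$, the integer $(a+b)^2$ maps to its residue modulo $3$. Since every square in $\Z/(3)$ is either $0$ or $1$, it follows that $(a+b)^2\equiv_\lambda 0$ or $(a+b)^2\equiv_\lambda 1$, and hence the same dichotomy holds for $u^\dagger u$.

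There is no genuine obstacle here; the only point requiring a small amount of care is the bookkeeping between congruence modulo the rational integer $3$ and congruence modulo the ring element $\lambda$, which is entirely handled by the relation $3\equiv_\lambda 0$. As an alternative to the identity above, one can simply tabulate $a^2 - ab + b^2 \bmod 3$ over the nine values of $(a\bmod 3, b\bmod 3)$ and observe that the residue $2$ never occurs, which yields the same conclusion.
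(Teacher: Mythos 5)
Your proof is correct and follows essentially the same route as the paper: both reduce $u^\dagger u = a^2 - ab + b^2$ to $(a+b)^2$ modulo $\lambda$ (the paper writes $-ab\equiv_\lambda 2ab$, which is your identity $a^2-ab+b^2=(a+b)^2-3ab$ in disguise) and then use that the only squares in $\Z/(3)\cong\Zz/(\lambda)$ are $0$ and $1$. No issues.
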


\begin{proof}
  Let $u=a+b\omega\in\Zz$. By \cref{prop:zzetamod},
  $\Zz/(\lambda)\cong \Z/(3)$. By \cref{eq:norm},
  \[
  u^\dagger u = a^2-ab+b^2\equiv_\lambda a^2 +2ab +b^2 = (a+b)^2.
  \]
  Hence $u^\dagger u$ is a square modulo $\lambda$ and therefore
  cannot be congruent to 2, since 0 and 1 are the only squares in
  $\Z/(3)$.
\end{proof}

\begin{proposition}
  \label{prop:residuesquares}
  If $u\in\Zz$, then $u\not\equiv_\lambda 0$ if and only if $u\equiv_3
  \pm \omega^x$ for some $x\in\s{0,1,2}$.
\end{proposition}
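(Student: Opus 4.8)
The plan is to reduce the whole statement to a finite computation in the nine-element ring $\Zz/(3)$ described in \cref{prop:zzetamod}. The key background fact, already recorded in the excerpt, is that $3 = -\lambda^2\omega^2$, so $3 \equiv_\lambda 0$; this means reduction modulo $\lambda$ factors through reduction modulo $3$, and hence whether or not $u \equiv_\lambda 0$ depends only on the class of $u$ in $\Zz/(3)$.

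For the ``if'' direction, suppose $u \equiv_3 \pm\omega^x$. Since $3 \equiv_\lambda 0$ we get $u \equiv_\lambda \pm\omega^x$, and since $\lambda = 1-\omega$ we have $\omega \equiv_\lambda 1$, so $u \equiv_\lambda \pm 1$ in $\Zz/(\lambda) \cong \Z/(3)$. As $\pm 1 \neq 0$ in $\Z/(3)$, this gives $u \not\equiv_\lambda 0$. For the ``only if'' direction, I would go through the nine residue classes of \cref{prop:zzetamod} and, using $\omega \equiv_\lambda 1$ and $3 \equiv_\lambda 0$, identify which ones map to $0$ in $\Zz/(\lambda)$: precisely the classes of $0$, $1+2\omega$, and $2+\omega$ (each of which reduces to $3 \equiv_\lambda 0$). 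The remaining six classes — those of $1$, $2$, $\omega$, $2\omega$, $1+\omega$, and $2+2\omega$ — reduce to $1$ or $2$ modulo $\lambda$. Thus $u \not\equiv_\lambda 0$ exactly when the class of $u$ modulo $3$ lies in this six-element set. Finally, using $\omega^2 = -1-\omega$, one checks that the six elements $1, \omega, \omega^2, -1, -\omega, -\omega^2$ reduce modulo $3$ to $1, \omega, 2+2\omega, 2, 2\omega, 1+\omega$, which are pairwise distinct and exhaust exactly the six classes identified above. Hence $u \equiv_3 \pm\omega^x$ for some $x \in \{0,1,2\}$ if and only if $u \not\equiv_\lambda 0$.

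There is no real conceptual obstacle here; the argument is a counting/bookkeeping exercise. The only points requiring a little care are confirming that reduction modulo $\lambda$ is genuinely well-defined on the quotient $\Zz/(3)$ (which follows from $3 = -\lambda^2\omega^2$) and verifying that the six units $\pm\omega^x$ remain pairwise distinct modulo $3$, so that the match with the six nonzero-modulo-$\lambda$ classes is a genuine bijection rather than a collapse. Both are immediate from \cref{prop:zzetamod} and the relation $1+\omega+\omega^2 = 0$.
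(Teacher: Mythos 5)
Your proposal is correct and follows essentially the same route as the paper: both reduce the claim to a finite inspection of the nine residue classes of $\Zz/(3)$ from \cref{prop:zzetamod}, compute their images in $\Zz/(\lambda)$ via $\omega\equiv_\lambda 1$, and match the six nonzero classes with $\pm\omega^x$ using $1+\omega+\omega^2=0$. Your added remarks on well-definedness and on the six units being pairwise distinct modulo $3$ are correct but not a substantive departure from the paper's table-based argument.
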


\begin{proof}
  The table below lists the elements of $\Zz/(3)$ as given by
  \cref{prop:zzetamod}, together with their residues modulo $\lambda$.
  \begin{center}
  \begin{tabular}[b]{c|c}
  $\Zz/(3)$ & $\Zz/(\lambda)$ \\
  \hline
  $0$ & $0$ \\
  $1$ & $1$ \\
  $2$ & $2$ \\
  $\omega$ & $1$ \\
  $2\omega$ & $2$ \\
  $1+\omega$ & $2$ \\
  $1+2\omega$ & $0$ \\
  $2+\omega$ & $0$ \\
  $2+2\omega$ & $1$ 
  \end{tabular}
  \end{center} 
  The statement then follows by inspection of the table, using the
  fact that $1+\omega = -\omega^2\equiv_3 -\omega^2$ and $2\equiv_3
  -1$.
\end{proof}

\subsection{Denominators}
\label{ssec:denoms}

\begin{definition}
  \label{def:localized}
  Let $k\geq 1$. The ring $\Tzk{k}$ is defined as $\Tzk{k}=\s{u/3^\ell
    \mid u \in \Zzk{k} \mbox{ and } \ell\in\N}$.
\end{definition}

Because the elements of $\Zzk{k}$ can be expressed as linear
combinations of elements of $\Zzk{k-1}$, the elements of $\Tzk{k}$ can
similarly be expressed as linear combinations of elements of
$\Tzk{k-1}$. In particular, for $k\geq 2$, every element $u$ of
$\Tzk{k}$ can be uniquely written as $u=a+b\omega_k+c\omega_k^2$ with
$a,b,c\in\Tzk{k-1}$.

The ring $\Tzk{k}$ is the localization of $\Zzk{k}$ by the powers of
3. Alternatively, $\Tzk{k}$ can be thought of as the localization of
$\Zzk{k}$ by the powers of $\lambda$. Indeed, since
$3=-\omega^2\lambda^2$, we have $3^{-\ell} =
(-\omega^2\lambda^2)^{-\ell}=(-\omega)^\ell(\lambda)^{-2\ell}$. As a
result, any element of $\Tzk{k}$ can be written as $u/\lambda^\ell$
for some $u\in\Zzk{k}$ and some $\ell\in\N$. We leverage this fact to
define, in the usual way (see \cite{Amy2020,Giles2013a,greylyn}), the
notions of \textbf{$\lambda$-denominator exponent} and \textbf{least
  $\lambda$-denominator exponent}.

\begin{definition}
  \label{def:lde}
  Any nonnegative integer $\ell$ such that $v\in\Tzk{k}$ can be
  written as $v=u/\lambda^\ell$ with $u\in \Zzk{k}$ is
  \textbf{$\lambda$-denominator exponent} of $v$. The smallest such
  $\ell$ is the \textbf{least $\lambda$-denominator exponent} of $v$
  and is denoted $\lde(v)$.
\end{definition}

The notions of denominator exponent and least denominator exponent
extend to matrices (and therefore to vectors) with entries in
$\Tzk{k}$: an integer $\ell$ is a $\lambda$-denominator exponent of a
matrix $M$ if it is a $\lambda$-denominator exponent of all of the
entries of $M$; the smallest such $\ell$ is the least
$\lambda$-denominator exponent of $M$.

\subsection{The Group \texorpdfstring{$\unk{k}$}{Un(Z[ζk,1/3]}}
\label{ssec:unzk}

\begin{definition}
  \label{def:unitarygroup}
  We write $\unk{k}$ for the group of $n$-dimensional unitary matrices
  with entries in $\Tzk{k}$ and $\uuk{k}$ for the collection of all
  unitary matrices with entries in $\Tzk{k}$.
\end{definition}

\section{Generators for \texorpdfstring{$\un$}{Un(Z[zeta,1/3]}}
\label{sec:generators}

Following \cite{Amy2020,Giles2013a,greylyn,NC}, we use
\textbf{$m$-level matrices} to define a subset of $\un$ which we will
show to be a generating set.

\begin{definition}
  \label{def:matrices}
  The matrices $(-1)$, $(\omega)$, $X$, and $H$ are defined as
  follows:
  \[
  (-1) = 
  \begin{bmatrix}
  -1
  \end{bmatrix},
  \quad
  (\omega) =
  \begin{bmatrix}
  \omega
  \end{bmatrix},
  \quad
  X =
  \begin{bmatrix}
  0 & 1 \\
  1 & 0
  \end{bmatrix},
  \quad
  \mbox{and}
  \quad
  H = \frac{-\omega^2}{\lambda}
  \begin{bmatrix}
  1 & 1 & 1 \\
  1 & \omega & \omega^2 \\
  1 & \omega^2 & \omega
  \end{bmatrix}.
  \]
\end{definition}

\begin{definition}
  \label{def:onetwolevel}
  Let $M$ be an $m\times m$ matrix, let $m\leq n$, and let $0\leq x_1<
  \ldots< x_m \leq n-1$. The \textbf{$m$-level matrix}
  $M_{[x_1,\ldots, x_m]}$ is the $n\times n$ matrix whose entries are
  given as follows
  \[
  {M_{[x_1,\ldots, x_m]}}_{i,j} = 
  \begin{cases}
  M_{i',j'} & \mbox{if } i=x_{i'} \mbox{ and } j = x_{j'}, \\
  I_{i,j} & \mbox{otherwise.}
  \end{cases}
  \]
\end{definition}

For example, for $n=4$, we have
\[
(\omega)_{[1]} = 
\begin{bmatrix}
1     & \cdot  & \cdot & \cdot \\
\cdot & \omega & \cdot & \cdot \\
\cdot & \cdot  & 1     & \cdot \\
\cdot & \cdot  & \cdot & 1 
\end{bmatrix}
\qquad
\mbox{ and }
\qquad
H_{[0,2,3]} = \frac{-\omega^2}{\lambda}
\begin{bmatrix}
1     & \cdot  & 1        & 1 \\
\cdot & \lambda/(-\omega^2)      & \cdot    & \cdot \\
1     & \cdot  & \omega   & \omega^2 \\
1     & \cdot  & \omega^2 & \omega 
\end{bmatrix}.
\]
When applied to a vector $\ket{u}$, the matrix $(\omega)_{[1]}$ acts
as $(\omega)$ on the entry of index 1 and the matrix $H_{[0,2,3]}$
acts as $H$ on the entries of index $0$, $2$, and $3$.

\begin{definition}
  \label{def:generators}
  We write $\gensn$ for the subset of $\un$ defined as
  \[
  \gensn = \s{(-1)_{[x]}, (\omega)_{[x]}, X_{[x,y]}, H_{[x,y,z]} \mid
    0\leq x<y<z \leq n-1}.
  \]
\end{definition}

\begin{lemma}
  \label{lem:hadamardreduction}
  Let $u_0,u_1,u_2\in\Zz$ be such that $u_0\not\equiv_\lambda 0$,
  $u_1\not\equiv_\lambda 0$, and $u_2\not\equiv_\lambda 0$. Then there
  exists $x_0,x_1,x_2\in\s{0,1,2}$ and $y_0,y_1,y_2\in\s{0,1}$ such
  that
  \[
  H 
  (\omega)_{[0]}^{x_0}(\omega)_{[1]}^{x_1}(\omega)_{[2]}^{x_2} 
  (-1)_{[0]}^{y_0} (-1)_{[1]}^{y_1} (-1)_{[2]}^{y_2}
  \begin{bmatrix}
  u_0 \\ u_1 \\ u_2
  \end{bmatrix}
  =
  \begin{bmatrix}
  u_0' \\ u_1' \\ u_2'
  \end{bmatrix}
  \]
  for some $u_0',u_1',u_2'\in\Zz$ such that $u_0'\equiv_\lambda 0$,
  $u_1'\equiv_\lambda 0$, and $u_2'\equiv_\lambda 0$.
\end{lemma}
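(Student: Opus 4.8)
The plan is to mimic the standard Gauss–Jordan-style exact synthesis argument (as in \cite{Amy2020,Giles2013a,greylyn}) but applied locally to the three coordinates $u_0,u_1,u_2$, using \cref{prop:residues} as the arithmetic engine. First I would observe that since each $u_i\not\equiv_\lambda 0$, \cref{prop:residues} forces $u_i^\dagger u_i\equiv_\lambda 1$ for $i=0,1,2$. By \cref{prop:residuesquares}, each $u_i$ is then congruent modulo $3$ to $\pm\omega^{x}$ for some $x\in\{0,1,2\}$; equivalently, after multiplying coordinate $i$ by a suitable power $(\omega)_{[i]}^{x_i}$ and possibly by $(-1)_{[i]}^{y_i}$, we may assume each $u_i\equiv_3 1$. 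The point of normalising to the \emph{same} residue $1$ in all three slots is that it makes the three entries of the sum $u_0+\omega^a u_1+\omega^b u_2$ (which is what $H$ computes, up to the global factor $-\omega^2/\lambda$) controllable modulo $\lambda$, since $\omega\equiv_\lambda 1$.

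The key computation is then to check that after this normalisation, applying $H$ produces a vector all of whose entries are divisible by $\lambda$ in $\Zz$. Concretely, the new entries are $u_j' = (-\omega^2/\lambda)(u_0 + \omega^{j}u_1 + \omega^{2j}u_2)$ for $j=0,1,2$ (reading off the rows of the $H$ matrix). Each bracketed sum $u_0+\omega^{j}u_1+\omega^{2j}u_2$ is congruent modulo $3$ to $1+\omega^{j}+\omega^{2j}$, which equals $3$ when $j=0$ (hence $\equiv_\lambda 0$, in fact $\equiv_3 0$) and equals $0$ when $j\in\{1,2\}$ (since $1+\omega+\omega^2=0$). In all three cases the bracketed sum lies in $(3)\subseteq(\lambda^2)$, so dividing by $\lambda$ leaves an element of $(\lambda)\subseteq\Zz$; thus each $u_j'\in\Zz$ and $u_j'\equiv_\lambda 0$, as required. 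I would want to be a little careful that the global prefactor $-\omega^2$ is a unit in $\Zz$ (it is, being a power of $\omega$ up to sign) so it does not affect divisibility by $\lambda$.

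The main obstacle — really the only nontrivial point — is verifying the normalisation step cleanly: namely that for \emph{each} $u_i$ with $u_i\not\equiv_\lambda 0$ there genuinely exist $x_i\in\{0,1,2\}$ and $y_i\in\{0,1\}$ with $(-1)^{y_i}\omega^{x_i}u_i\equiv_3 1$, and that this is exactly the content of \cref{prop:residuesquares} (which gives $u_i\equiv_3\pm\omega^{x}$, whence multiplying by $\pm\omega^{-x}$, and noting $\omega^{-x}$ is itself a nonnegative power of $\omega$ since $\omega^3=1$, yields residue $1$). Once all three coordinates are normalised to residue $1$ modulo $3$, the $H$-step calculation above is forced and purely mechanical. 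I would present the argument in the order: (1) invoke \cref{prop:residues} to get $u_i^\dagger u_i\equiv_\lambda 1$; (2) invoke \cref{prop:residuesquares} and choose $x_i,y_i$ so that the diagonal matrices normalise each coordinate to $\equiv_3 1$; (3) compute $H$ applied to the normalised vector and check divisibility by $\lambda$ entrywise using $1+\omega+\omega^2=0$ and $3\in(\lambda^2)$; (4) conclude. Note this also pins down that the $(\omega)_{[i]}$ exponents and $(-1)_{[i]}$ exponents in the statement are precisely the normalising data, so the existence claim is immediate from the construction.
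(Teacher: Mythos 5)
Your proposal is correct and follows essentially the same route as the paper's proof: use \cref{prop:residuesquares} to choose $x_i,y_i$ normalising each coordinate to residue $1$ modulo $3$, then check that each row of $H$ applied to the normalised vector gives $1+\omega^j+\omega^{2j}\equiv_3 0$, so the entries lie in $(3)=(\lambda^2)$ and remain divisible by $\lambda$ after the $1/\lambda$ prefactor. The preliminary appeal to \cref{prop:residues} is harmless but not needed; otherwise the argument matches the paper's, with your final divisibility step spelled out slightly more explicitly.
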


\begin{proof}
  Let $j\in\s{0,1,2}$. Since $u_j\not\equiv_\lambda 0$, we have, by
  \cref{prop:residuesquares}, $u_j\equiv_3
  (-1)^{w_j}(\omega)^{z_j}$. Hence, setting $y_j=-w_j$ and $x_j=-z_j$,
  we get $(\omega)^{x_j}(-1)^{y_j}u_j\equiv_3 1$. Therefore,
  \[
  (\omega)_{[0]}^{x_0}(\omega)_{[1]}^{x_1}(\omega)_{[2]}^{x_2} 
  (-1)_{[0]}^{y_0} (-1)_{[1]}^{y_1} (-1)_{[2]}^{y_2}
  \begin{bmatrix}
  u_0 \\ u_1 \\ u_2
  \end{bmatrix}
  =
  \begin{bmatrix}
  v_0 \\ v_1 \\ v_2
  \end{bmatrix}
  \]
  for some $v_0,v_1,v_2\in\Zz$ such that $v_0\equiv_3 v_1\equiv_3
  v_2\equiv_3 1$. The result then follows by computation, since $v_0
  +v_1+v_2 \equiv_3 v_0+ \omega v_1 +\omega^2 v_2 \equiv_3 v_0+
  \omega^2 v_1 + \omega v_2 \equiv_3 0$.
\end{proof}

\begin{lemma}
  \label{lem:base}
  Let $\ket{u}\in \Tz^n$ be a unit vector. If $\lde\ket{u}=0$, then
  $\ket{u}=\pm\omega^x\ket{j}$ for some $0\leq x \leq 2$ and some
  $0\leq j \leq n-1$.
\end{lemma}

\begin{proof}
  Let $\ket{u} \in \Tz^n$. Since $\lde\ket{u}=0$, we have
  $\ket{u}\in\Zz^n$. Since $\ket{u}$ is a unit vector, we also have
  \[
  1 = \braket{u} = \sum u_j^\dagger u_j
  \]
  with $u_j\in\Zz$. Because each $u_j^\dagger u_j$ is a nonnegative
  integer, there must be exactly one $j$ for which $u_j^\dagger
  u_j=1$, while $u_{j'}^\dagger u_{j'}=0$ for all $j'\neq j$. If
  $u_j^\dagger u_j = 1$ then $a_j^2 -a_jb_j +b_j^2=1$, and this
  equation can only be true if $a=\pm 1$ and $b=0$, $a=0$ and $b=\pm
  1$, or $a=b=\pm1$. In the first case, $\ket{u} = \pm \ket{j}$, in
  the second case, $\ket{u} = \pm \omega \ket{j}$, and in the third
  case, $\ket{u} = \pm \omega^2 \ket{j}$,
\end{proof}

\begin{lemma}
  \label{lem:step}
  Let $\ket{u}\in \Tz^n$ be a unit vector. If $\lde\ket{u}>0$, then
  there exists $G_0,\ldots, G_q\in\gensn$ such that $\lde (G_q\cdots
  G_0 \ket{u})<\lde\ket{u}$.
\end{lemma}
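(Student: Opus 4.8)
The plan is to mimic the classical qubit exact-synthesis argument (as in \cite{Amy2020,Giles2013a,greylyn}), reducing the least $\lambda$-denominator exponent of the unit vector $\ket{u}$ by one using only generators from $\gensn$. Write $\ell=\lde\ket{u}>0$, so that $\ket{u}=\ket{v}/\lambda^\ell$ with $\ket{v}\in\Zz^n$ and $\ket{v}\not\equiv_\lambda 0$ (otherwise $\ell$ would not be least). Since $\ket{u}$ is a unit vector, $\braket{v}{v}=3^\ell$, hence $\sum_j v_j^\dagger v_j \equiv_\lambda 0$. By \cref{prop:residues}, each $v_j^\dagger v_j$ is congruent to $0$ or $1$ modulo $\lambda$, so the number of indices $j$ with $v_j\not\equiv_\lambda 0$ must be a multiple of $3$.

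The key combinatorial step is then: group those ``bad'' indices (the $j$ with $v_j\not\equiv_\lambda 0$) into triples. For each triple $\{x,y,z\}$, the three corresponding entries $v_x,v_y,v_z$ are all $\not\equiv_\lambda 0$, so \cref{lem:hadamardreduction} applies: there is a product of $H_{[x,y,z]}$ with suitable powers of $(\omega)_{[x]},(\omega)_{[y]},(\omega)_{[z]},(-1)_{[x]},(-1)_{[y]},(-1)_{[z]}$ — all of which lie in $\gensn$ — that sends $(v_x,v_y,v_z)$ to a triple $(v_x',v_y',v_z')$ with every component $\equiv_\lambda 0$, while fixing all other entries of $\ket{v}$. (One should note $H_{[x,y,z]}$ requires $x<y<z$, but reordering the triple is harmless since the three generator types are symmetric in the roles of the indices, or one can simply sort the triple first.) Applying this for every triple in turn, we obtain $G_0,\dots,G_q\in\gensn$ with $G_q\cdots G_0\ket{v}=\ket{v'}$ where \emph{every} entry of $\ket{v'}$ is divisible by $\lambda$. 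Hence $\ket{v'}=\lambda\ket{w}$ for some $\ket{w}\in\Zz^n$, and therefore $G_q\cdots G_0\ket{u}=\ket{v'}/\lambda^\ell=\ket{w}/\lambda^{\ell-1}$, which has $\lambda$-denominator exponent at most $\ell-1<\ell$. Since the $G_i$ are unitary, $\ket{w}/\lambda^{\ell-1}$ is again a unit vector, completing the reduction.

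The one point requiring care — and the main obstacle — is verifying that the entries left untouched by a given $H_{[x,y,z]}$-block do not have their $\lambda$-divisibility status spoiled: the $(\omega)_{[x]}$ and $(-1)_{[x]}$ factors are diagonal and only scale single coordinates by units, so they preserve $\not\equiv_\lambda 0$ and $\equiv_\lambda 0$ alike, and $H_{[x,y,z]}$ acts as the identity off the coordinates $x,y,z$. Thus once a triple has been ``cleared'' to be $\equiv_\lambda 0$, later blocks acting on disjoint triples leave it alone, and the unit-vector norm constraint $\braket{v}{v}=3^\ell$ guarantees the bad indices always come in a number divisible by $3$ at the start. A minor subtlety is that after the very first block is applied the vector is no longer obviously of the form ``$\ket{v}$ with $\ket{v}\not\equiv_\lambda 0$'' as a whole, but this is fine: we only ever need that each \emph{untreated} bad triple still consists of non-$\lambda$-divisible entries, which the disjointness of the triples and the unit-scaling observation above ensures. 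Setting $q+1$ to be the total number of generator factors used (three $(\omega)$-powers, three $(-1)$-powers, and one $H$ per triple, of which there are at most $n/3$) then gives the claimed $G_0,\dots,G_q\in\gensn$.
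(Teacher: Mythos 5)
Your proof is correct and follows essentially the same route as the paper's: write $\ket{u}=\ket{v}/\lambda^{\ell}$, use $\braket{v}=3^{\ell}\equiv_\lambda 0$ together with \cref{prop:residues} to conclude that the entries with $v_j\not\equiv_\lambda 0$ come in a multiple of three, and then clear each such triple with \cref{lem:hadamardreduction}. Your additional checks (disjointness of the triples, that the diagonal unit factors and the off-triple identity action of $H_{[x,y,z]}$ preserve the $\lambda$-divisibility of untouched entries) are details the paper leaves implicit, and they are handled correctly.
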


\begin{proof}
  Write $\ket{u}$ as $\ket{v}/\lambda^\ell$, with
  $\ell=\lde\ket{u}$. Since $\braket{u}=1$ and $\lambda^\dagger
  \lambda =3$, we get $3^\ell = \braket{v} = \sum v_j^\dagger
  v_j$. Hence, $\sum v_j^\dagger v_j\equiv_\lambda 0$. By
  \cref{prop:residues}, $v_j^\dagger v_j$ is either 0 or 1 modulo
  $\lambda$, and by \cref{prop:zzetamod},
  $\Zz/(\lambda)\cong\Z/(3)$. Thus, the number of $v_j$ such that
  $v_j\not\equiv_\lambda 0$ must be a multiple of 3. Hence, we can
  group the entries of $\ket{v}$ into triples and apply
  \cref{lem:hadamardreduction} to each such triple. This maps
  $\ket{u}$ to some $\ket{u}'$ of lower least denominator exponent.
\end{proof}

\begin{lemma}
  \label{lem:column}
  Let $\ket{u}\in \Tz^n$ be a unit vector and let $0\leq j \leq
  n-1$. Then there exists $G_0,\ldots, G_q\in\gensn$ such that
  $G_q\cdots G_0\ket{u}=\ket{j}$.
\end{lemma}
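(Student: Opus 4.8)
The plan is to prove the statement by induction on the least $\lambda$-denominator exponent $\lde\ket{u}$, using \cref{lem:step} to drive this quantity down and \cref{lem:base} to finish once it reaches $0$.

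First I would dispose of the base case $\lde\ket{u}=0$. Here \cref{lem:base} tells us that $\ket{u}=\pm\omega^x\ket{j'}$ for some $0\leq x\leq 2$ and some $0\leq j'\leq n-1$. Multiplying on the left by a suitable power of $(\omega)_{[j']}$ and, if needed, by $(-1)_{[j']}$ clears the scalar and produces $\ket{j'}$; a final application of $X_{[\min(j,j'),\max(j,j')]}$ (or nothing at all, if $j=j'$) turns this into $\ket{j}$. All of these matrices belong to $\gensn$ by \cref{def:generators}.

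For the inductive step, I would assume $\lde\ket{u}=\ell>0$ and that the claim holds for every unit vector with least $\lambda$-denominator exponent strictly less than $\ell$. By \cref{lem:step} there exist $G_0,\ldots,G_q\in\gensn$ such that $\ket{u'}=G_q\cdots G_0\ket{u}$ satisfies $\lde\ket{u'}<\ell$. Since each $G_i$ is unitary, $\ket{u'}$ is again a unit vector in $\Tz^n$, so the induction hypothesis supplies $G'_0,\ldots,G'_{q'}\in\gensn$ with $G'_{q'}\cdots G'_0\ket{u'}=\ket{j}$; concatenating the two sequences of generators gives the desired result. Because $\lde$ takes values in $\N$, this recursion terminates after finitely many steps.

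The substantive work of this lemma has already been carried out upstream, in \cref{lem:hadamardreduction} and \cref{lem:step}, so almost nothing remains here beyond bookkeeping; there is no real obstacle. The only points requiring a little care are respecting the ordering constraint $x<y$ in the notation $X_{[x,y]}$ (a single transposition always suffices, so this is harmless) and observing that the scalar corrections in the base case are realised by the one-level matrices $(\omega)_{[j']}$ and $(-1)_{[j']}$, both of which lie in $\gensn$.
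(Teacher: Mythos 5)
Your proof is correct and follows essentially the same route as the paper: induction on $\lde\ket{u}$, with \cref{lem:base} handling the base case (clearing the scalar via $(\omega)_{[j']}$ and $(-1)_{[j']}$, then permuting with an $X$-type two-level matrix) and \cref{lem:step} driving the inductive step. No gaps.
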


\begin{proof}
  By induction on $\lde\ket{u}$. If $\lde(\ket{u})=0$, then, by
  \cref{lem:base}, $\ket{u}=\pm\omega^xe_{j'}$ for some $0\leq j' \leq
  n-1$ and some $0\leq x \leq 2$. We can therefore reduce $\ket{u}$ to
  $\ket{j}$ by applying $(-1)_{[j']}$, $(\omega)_{[j']}$, and
  $X_{[j,j']}$ or $X_{[j',j]}$, as needed. If $\lde\ket{u}>0$, then,
  by \cref{lem:step}, there exists $G_p,\ldots, G_0\in\gensn$ such
  that $\lde (G_p\cdots G_0\ket{u}) < \lde(\ket{u})$. We can then
  conclude by applying the induction hypothesis to $G_p\cdots
  G_0\ket{u}$.
\end{proof}

\begin{proposition}
  \label{prop:generation}
  Let $U$ be an $n\times n$ matrix. Then $U\in\un$ if and only if $U$
  can be written as a product of elements of $\gensn$.
\end{proposition}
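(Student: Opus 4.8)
The statement is a standard "exact synthesis = characterization" result, and the proof has two directions. The forward direction ($U \in \un$ implies $U$ is a product of generators) is the substantive part; the converse is a routine check.

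For the converse, I would simply observe that each generator in $\gensn$ lies in $\un$: the matrices $(-1)$, $(\omega)$, $X$, and $H$ are unitary with entries in $\Tz$ (recall $-\omega^2/\lambda = \omega^2(1-\omega)/3 \in \Tz$), and embedding an $m\times m$ unitary as an $m$-level matrix via \cref{def:onetwolevel} preserves both unitarity and the property of having entries in $\Tz$. Since $\un$ is a group, any product of elements of $\gensn$ is again in $\un$.

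For the forward direction, I would argue column by column, reducing $U$ to the identity by left-multiplication by generators. Given $U \in \un$, its last column $\ket{u}$ is a unit vector in $\Tz^n$, so by \cref{lem:column} there exist $G_0, \ldots, G_q \in \gensn$ with $G_q \cdots G_0 \ket{u} = \ket{n-1}$. Then $G_q \cdots G_0 U$ is still unitary with entries in $\Tz$, and its last column is $\ket{n-1}$; since the matrix is unitary, its last row is then also $\bra{n-1}$, so $G_q \cdots G_0 U$ is block-diagonal of the form $U' \oplus (1)$ with $U' \in \unn{n-1}$. I would then induct on $n$: the base case $n=1$ forces $U = (1)$ or, more carefully, $U$ is a $1\times 1$ unitary over $\Tz$, hence a root of unity in $\Tz$, which one checks is $\pm \omega^x$ — expressible via $(-1)$ and $(\omega)$. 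Actually, to keep the induction clean I would phrase it so that at each stage I clear one more column; the generators $M_{[x_1,\ldots,x_m]}$ acting on any subset of indices make this bookkeeping straightforward, since reducing the $(n-1)$-dimensional block just uses generators supported on indices $\{0,\ldots,n-2\}$. Running the induction to completion yields $G \cdots U = I$ for a product $G$ of generators, and since each generator is its own inverse up to another generator (e.g. $(\omega)_{[x]}^{-1} = (\omega)_{[x]}^2$, $X_{[x,y]}^{-1} = X_{[x,y]}$, $H_{[x,y,z]}^{-1} = H_{[x,y,z]}^3$ up to scalars — or more simply, $\gensn$ is closed under inverses as a subset of the group), we can solve for $U$ as a product of elements of $\gensn$.

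The main obstacle is not conceptual but lies in getting the inductive step and the "closed under inverses" claim stated cleanly: one must check that $\gensn$ itself contains inverses of its elements (or that it suffices to work with the group generated by $\gensn$ and separately note each generator's inverse is such a product), and that after clearing the last column the residual block genuinely lies in $\unn{n-1}$ with the embedding compatible with $\gensnn{n-1} \subseteq \gensn$. Both are mechanical given the setup already in place, so I expect the proof to be short, citing \cref{lem:column} for the key reduction and handling the rest by a one-line induction.
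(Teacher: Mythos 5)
Your proof is correct and follows essentially the same route as the paper: iterated application of \cref{lem:column} to clear the columns one at a time, reducing the matrix to the identity. The only difference is cosmetic — the paper runs the reduction on $U^\dagger$ rather than $U$, so that $G_0\cdots G_q\, U^\dagger = I$ yields $U = G_0\cdots G_q$ directly and the closure-under-inverses check you carry out (e.g.\ $H^{-1}=H^{3}$, which in fact holds exactly since $H^4=I$ with the paper's normalization) is never needed.
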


\begin{proof}
  The right-to-left direction is immediate. For the left-to-right
  direction, consider the matrix $U^\dagger \in \un$. Iteratively
  applying \cref{lem:column} to the columns of $U^\dagger$ yields a
  sequence $G_0, \ldots, G_q$ of elements of $\gensn$ such that
  \[
  G_0 G_1 \cdots Gq U^\dagger = I,
  \]
  and we can therefore write $U$ as $U=G_0G_1\cdots G_q$.
\end{proof}

\section{Exact Synthesis of Toffoli+Hadamard Circuits}
\label{sec:circuits}

Let $\gatesetkone$ be a set of quantum gates. A unitary matrix $U$ can
be \textbf{represented by a circuit over $\gatesetkone$} if there
exists a circuit $C$ over $\gatesetkone$ such that, for any state
$\ket{u}$, we have $C\ket{u} = U\ket{u}$. The circuit may use
ancillary qutrits, but these must start and end the computation in the
same state. If that state can be arbitrary, the ancillary qutrits are
said to be \textbf{borrowed}; if that state is required to be
$\ket{0}$, the ancillary qutrits are said to be \textbf{fresh}. Unless
otherwise specified, ancillae are assumed to be fresh. Note that if a
matrix can be represented by a circuit using $m$ borrowed ancillae,
then it can also be represented by a circuit using $m$ fresh ancillae.

Recall from \cref{sec:intro} that the Clifford-cyclotomic gate set
$\gatesetk{k}$ is defined as $\gatesetk{k} = \s{X, CX, CCX, H,
  \rotk{k}}$. In \cref{app:decompositions} we prove that
$\gatesetk{1}$ is equivalent to the Toffoli+Hadamard gate set, up to
two borrowed ancillae and that, when $k\geq 2$, $\gatesetk{k}$ is
equivalent to the Clifford+$\rotk{k}$ gate set $\s{H, S, CX,
  \rotk{k}}$, up to a single borrowed ancilla. The next proposition
shows that all of the elements of $\gensnn{3^n}$ can be represented by
a circuit over $\gatesetk{1}$ using no more than 2 borrowed
ancillae. The proof of the proposition can be found in
\cref{app:decompositions}.

\begin{proposition}
  \label{prop:circuits}
  If $U\in\gensnn{3^n}$, then $U$ can be represented by a circuit over
  $\gatesetk{1}$ using at most 2 borrowed ancillae.
\end{proposition}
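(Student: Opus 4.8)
The plan is to go through the four types of generators in $\gensnn{3^n}$ — namely $(-1)_{[x]}$, $(\omega)_{[x]}$, $X_{[x,y]}$, and $H_{[x,y,z]}$ — and show that each can be realized by a Clifford-cyclotomic circuit of order $3$ acting on $n$ qutrits together with at most two borrowed ancillae. The key conceptual point is that an index $x\in\{0,\dots,3^n-1\}$ corresponds to an $n$-trit string $x_1x_2\cdots x_n$, i.e.\ to a computational basis state $\ket{x_1}\cdots\ket{x_n}$ of the $n$-qutrit space, and an $m$-level matrix $M_{[x^{(1)},\dots,x^{(m)}]}$ is exactly the operator that applies $M$ in the subspace spanned by the corresponding $m$ basis states and acts as the identity elsewhere. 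So the task reduces to: (i) realize the single-level gates $(-1)$ and $(\omega)$ as "multiply-controlled" versions of known single-qutrit gates, (ii) realize the two-level $X$ as a multiply-controlled qutrit-$X$-on-a-two-dimensional-subspace, and (iii) realize the three-level $H$ as a multiply-controlled Hadamard. Throughout, the two borrowed ancillae are what let us compactly compute the AND (product over $\Z/3$ conditions) of up to $n$ control trits via Toffoli gates, uncomputing afterwards so the ancillae are returned unchanged.

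I would carry out the steps as follows. First, handle the easy combinatorial bookkeeping: since the circuit is allowed to permute which physical wire plays which role, and since single-qutrit permutations ($X$ and its powers) are in $\gatesetk{1}$, it suffices to treat the generators up to relabelling of basis states by local $X$-powers; in particular any single index $x$ can be moved to the all-zero string and any pair $x<y$ or triple $x<y<z$ can be brought into a convenient canonical position. Second, I would build a "reflected Toffoli ladder": using $CCX$ and borrowed ancillae, compute into an ancilla the indicator that the first $n-1$ trits equal a prescribed pattern; standard Toffoli-network constructions do this with one or two borrowed ancillae and $O(n)$ gates, and everything is uncomputed at the end. Third, with that indicator available as a control, each generator is a singly-(ancilla-)controlled local gate on the last qutrit: $(-1)_{[x]}$ is a controlled-$(-1)$ (note $-1$ is a Clifford-cyclotomic single-qutrit gate, being $\operatorname{diag}(-1,1,1)$ up to conjugation, expressible via $H,S,X$ since $\gatesetk{1}$ contains the qutrit Clifford group together with $Z=T_1$), $(\omega)_{[x]}$ is a controlled-$(\omega)$ (and $(\omega)=\operatorname{diag}(\omega,1,1)$ is conjugate to $Z^{\dagger}\cdot$scalar, again available), $X_{[x,y]}$ is a controlled embedding of the two-dimensional $X$ into one qutrit (a controlled single-qutrit reflection/transposition, which is a Clifford operation), and $H_{[x,y,z]}$ is simply a controlled-$H$ on the last qutrit when $x,y,z$ have been arranged to be the three basis states of that qutrit with the other trits fixed. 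Controlling any of these single-qutrit Clifford-cyclotomic gates on an ancilla trit is itself doable with a constant number of $\gatesetk{1}$ gates and at most one further borrowed ancilla, by the usual trick of conjugating the target gate's "action" by a $CX$/$CCX$-computed control.

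The step I expect to be the real obstacle is the last one: cleanly controlling the single-qutrit gates — especially $H$ and the embedded two-level $X$ — on a qutrit-valued (rather than bit-valued) control, while keeping the ancilla budget at two total. For $H$ one cannot just naively "add a control wire" the way one does for a Pauli; one typically needs a decomposition of controlled-$H$ into Clifford+$T_1$-type pieces, or a gadget that uses the structure $H X H^{\dagger}=Z$ and the fact that controlled-$Z$-type diagonal gates are easy, to push the control through. Managing this without blowing the ancilla count — in particular reusing the same two borrowed ancillae across the control-computation ladder and the controlled-gate gadget, and verifying they are genuinely restored — is where the bulk of the careful work lies; this is also presumably why the detailed circuit diagrams are deferred to \cref{app:decompositions}. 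Everything else (the reduction to canonical index positions, the Toffoli ladder, the identification of $(-1)$ and $(\omega)$ as Clifford-cyclotomic single-qutrit gates) is routine once the controlled-gate gadgets are in hand.
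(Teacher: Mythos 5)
Your high-level architecture is the same as the paper's: conjugate each generator into a canonical position, compute an indicator of the control pattern with a Toffoli ladder on borrowed ancillae, and then apply a controlled version of the relevant single-qutrit gate. However, two of the steps you dismiss as routine are exactly where the content of the proposition lies, and as stated they do not go through. First, local $X$-powers do \emph{not} suffice to bring an arbitrary pair or triple of levels into canonical position: tensor products of $X^{a_i}$ act on indices as translations of $(\mathbb{Z}/3)^n$, and translations preserve the difference of two indices, so e.g.\ the pair $\{00,11\}$ cannot be mapped to $\{00,01\}$ this way. One needs the full group of ternary classical reversible permutations; the paper obtains this from Theorem~2 of \cite{liaRC} (any ternary reversible function is an ancilla-free circuit of $\ket{0}$-controlled $X$ gates) together with the equivalence of $\gatesetk{1}$ with the $\{C_{\ket{0}}X, H\}$ gate set up to a borrowed ancilla.

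Second, and more seriously, your claim that $\mathrm{diag}(1,1,-1)$ is ``expressible via $H,S,X$'' as a Clifford-type gate is false: $-1$ is not a power of $\omega$, the gate does not normalize the qutrit Pauli group, and it is precisely the metaplectic ($R$) gate that strictly extends the Clifford group. Its realization over $\gatesetk{1}$ is the hardest part of the proposition and the sole reason the ancilla count is $2$ rather than $1$; the paper builds it from a multiply-controlled $(-\omega H)$-type gate combined with controlled phase gates. Relatedly, $S=(\omega)_{[2]}$ is not in the group generated by $X$, $H$, $Z$ on a single qutrit (the single-qutrit Clifford group modulo Paulis has order $24$ while $H$ generates only an order-$4$ subgroup), so $(\omega)_{[x]}$ also requires a genuine ancilla-assisted construction; and the exactness requirement (no global or relative phases) means both $H_{[x,y,z]}$ and $(-1)_{[x]}$ need controlled global-phase corrections built from the $1$-level matrices. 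You correctly flagged the controlled-$H$ as an obstacle, but the $(-1)$ gate is an equally essential obstacle that your argument assumes away, so the proof is incomplete without the explicit gadgets of the paper's appendix.
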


Using \cref{prop:circuits} we are now in a position to define an exact
synthesis algorithm for multiqutrit Toffoli+Hadamard circuits.

\begin{theorem}
  \label{prop:exactsynthbase}
  If $U\in\unkk{3^n}{1}$, then $U$ can be represented by an $n$-qutrit
  circuit over $\gatesetk{1}$ using at most 2 ancillae.
\end{theorem}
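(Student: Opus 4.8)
The plan is to combine \cref{prop:generation} with \cref{prop:circuits} in the obvious way. First I would take $U\in\unkk{3^n}{1}$, i.e.\ a $3^n\times 3^n$ unitary with entries in $\Tz$. By \cref{prop:generation} (applied with $n$ replaced by $3^n$), I can write $U$ as a finite product $U = G_1 G_2 \cdots G_q$ of elements of $\gensnn{3^n}$. By \cref{prop:circuits}, each factor $G_i$ can be represented by a circuit over $\gatesetk{1}$ using at most $2$ borrowed ancillae. Concatenating these circuits in order gives a circuit over $\gatesetk{1}$ that represents $U$; because the ancillae used by \cref{prop:circuits} are \emph{borrowed}, the same two ancilla wires can be reused across all the factors (each $G_i$ returns them to their incoming state), so the total ancilla count stays at $2$ rather than growing with $q$.

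The one point that needs a word of care is the bookkeeping about ancillae. The generators in $\gensnn{3^n}$ act on a $3^n$-dimensional space, which is exactly the state space of $n$ qutrits, so no ancillae are needed merely to host the matrices themselves; the only ancillae are the two borrowed ones internal to the decompositions of \cref{prop:circuits}. Since a borrowed ancilla can always be replaced by a fresh one (as noted in the paragraph preceding \cref{prop:circuits}), the resulting $n$-qutrit circuit over $\gatesetk{1}$ uses at most $2$ fresh ancillae, which is the statement of the theorem. I would also remark that the argument is constructive: the proof of \cref{prop:generation} (via \cref{lem:column}, \cref{lem:step}, \cref{lem:hadamardreduction}) is an explicit column-reduction algorithm, so composing it with the explicit circuits of \cref{prop:circuits} yields an actual synthesis procedure, not just an existence result.

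There is essentially no hard step here: the theorem is the packaging of the two preceding results, and the main (minor) obstacle is simply making the ancilla accounting airtight — confirming that "borrowed" genuinely lets the two wires be shared sequentially across arbitrarily many generator-circuits without interference, and that the fresh-vs-borrowed reduction applies at the end. Everything else is immediate from \cref{prop:generation} and \cref{prop:circuits}.
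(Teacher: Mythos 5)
Your proposal is correct and follows essentially the same route as the paper: decompose $U$ into generators via \cref{prop:generation}, represent each generator by a circuit over $\gatesetk{1}$ via \cref{prop:circuits}, and observe that the 2 borrowed ancillae can be replaced by (and shared as) 2 fresh ancillae. The extra ancilla bookkeeping you spell out is sound but not a departure from the paper's argument.
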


\begin{proof}
  By \cref{prop:generation}, $\gensnn{3^n}$ generates
  $\unkk{3^n}{1}$. Hence, it is sufficient to show that the elements
  of $\gensnn{3^n}$ can be represented by an $n$-qutrit circuit over
  $\gatesetk{1}$. This follows from \cref{prop:circuits}, since if 2
  borrowed ancillae suffice to construct a circuit for $U$, then 2
  fresh ancillae are also sufficient for this purpose.
\end{proof}

\section{Catalytic Embeddings}
\label{sec:catembs}

\begin{definition}
  \label{def:catemb}
  Let $\calu$ and $\calv$ be collections of unitaries. An
  \textbf{$m$-dimensional catalytic embedding} of $\calu$ into $\calv$
  is a pair $(\phi,\ket{c})$ of a function $\phi:\calu \to \calv$ and
  a vector $\ket{c}\in\C^m$ such that if $U\in\calu$ has dimension $n$
  then $\phi(U)\in\calv$ has dimension $nm$ and
  \[
  \phi(U) (\ket{u}\otimes\ket{c}) = (U\ket{u})\otimes \ket{c}
  \]
  for every $\ket{u}\in\C^n$. The vector $\vecc$ is the
  \textbf{catalyst} of the catalytic embedding $(\phi,\ket{c})$. We
  sometimes express the fact that $(\phi,\ket{c})$ is a catalytic
  embedding of $\calu$ into $\calv$ by writing
  $(\phi,\ket{c}):\calu\to\calv$.
\end{definition}

\begin{definition}
  \label{def:concat}
  Let $(\phi,\ket{c}):\calu\to\calv$ and
  $(\phi',\ket{c}'):\calv\to\calw$ be catalytic embeddings of
  dimension $m$ and $m'$, respectively. The \textbf{concatenation} of
  $(\phi,\ket{c})$ and $(\phi',\ket{c}')$ is the $m'm$-dimensional
  catalytic embedding $(\phi',\ket{c}')\circ(\phi,\ket{c})$ defined by
  $(\phi',\ket{c}')\circ(\phi,\ket{c}) = (\phi'\circ
  \phi,\ket{c}\otimes\ket{c}')$.
\end{definition}

The concatenation of catalytic embeddings is associative and the
catalytic embedding $(1_{\calu},[1]):\calu\to\calu$ acts as an
identity for concatenation.

\begin{definition}
  \label{def:pcm}
  Let $k\geq 2$. We define $\Omega_k$ and $\ket{c_k}$ as
  \[
  \Omega_k = 
  \begin{bmatrix}
  \cdot & \cdot & \omega_{k-1} \\
  1     & \cdot & \cdot \\
  \cdot & 1     & \cdot
  \end{bmatrix}
  \qquad
  \mbox{ and }
  \qquad
  \ket{c_k}=\frac{1}{\lambda}
  \begin{bmatrix}
  1 \\
  \omega_{k}^{-1} \\
  \omega_{k}^{-2} \\
  \end{bmatrix}.
  \]
\end{definition}

The matrix $\Omega_k$ is unitary and the vector $\ket{c_k}$ is an
eigenvector of $\Omega_k$ for eigenvalue $\omega_{k}$. Indeed, since
$\omega_{k-1}=\omega_{k}^3$, we have
\begin{equation}
\label{eq:eigenvector}
\Omega_k \ket{c_k} = \frac{1}{\lambda}
\begin{bmatrix}
\cdot & \cdot & \omega_{k-1} \\
1     & \cdot & \cdot \\
\cdot & 1     & \cdot
\end{bmatrix}
\begin{bmatrix}
1 \\
\omega_{k}^{-1} \\
\omega_{k}^{-2} \\
\end{bmatrix} =
\frac{1}{\lambda}
\begin{bmatrix}
\omega_k \\
1 \\
\omega_k^{-1}
\end{bmatrix} =
\frac{\omega_k}{\lambda}
\begin{bmatrix}
1 \\
\omega_k^{-1} \\
\omega_k^{-2}
\end{bmatrix} = \omega_k\ket{c_k}.
\end{equation}

Now recall from \cref{ssec:denoms} that, for $k\geq 2$, every
$u\in\Tzk{k}$ can be written uniquely as a linear combination of the
form $u=a+b\omega_k +c\omega_k^2$, where
$a,b,c\in\Tzk{k-1}$. Therefore, every matrix $U$ over $\Tzk{k}$ can be
uniquely written as $U = A + B\omega_{k} + C\omega_{k}^2$, where $A$,
$B$, and $C$ are matrices over $\Tzk{k-1}$. We use this fact below to
define a function $\uuk{k} \to \uuk{k-1}$.

\begin{proposition}
  \label{prop:catembfunc}
  Let $k\geq 2$. The assignment
  \[
  A + B\omega_{k} + C\omega_{k}^2 \longmapsto A\otimes I + B\otimes
  \Omega_{k} + C\otimes \Omega_{k}^2
  \]
  defines a function $\phi_k:\uuk{k} \to \uuk{k-1}$.
\end{proposition}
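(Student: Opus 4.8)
The plan is to verify directly that the assignment $\Phi\colon A + B\omega_k + C\omega_k^2 \mapsto A\otimes I + B\otimes\Omega_k + C\otimes\Omega_k^2$ sends unitaries to unitaries, and that it is well defined. Well-definedness is immediate from the uniqueness of the decomposition $u = a + b\omega_k + c\omega_k^2$ with $a,b,c\in\Tzk{k-1}$ recalled just before the statement: a matrix $U$ over $\Tzk{k}$ has a unique expression $U = A+B\omega_k+C\omega_k^2$ with $A,B,C$ over $\Tzk{k-1}$, so $\Phi(U)$ is unambiguously defined, and since $I$, $\Omega_k$, $\Omega_k^2$ all have entries in $\Tzk{k-1}$ (here we use that $\omega_{k-1}\in\Zzk{k-1}$), the matrix $\Phi(U)$ has entries in $\Tzk{k-1}$. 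So the only real content is that $\Phi(U)$ is unitary whenever $U$ is.

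The key idea is that $\Phi$ is the matrix-level shadow of the ring homomorphism $\Tzk{k}\to M_3(\Tzk{k-1})$ sending $\omega_k\mapsto\Omega_k$. First I would check that $\Omega_k$ has order dividing $3^k$ and, more precisely, that $p(\Omega_k)=0$ where $p$ is the minimal polynomial of $\omega_k$ over $\Tzk{k-1}$, namely $p(t)=t^3-\omega_{k-1}$: indeed a direct computation gives $\Omega_k^3 = \omega_{k-1}I$. Consequently the map $\Tzk{k}\to M_3(\Tzk{k-1})$, $\omega_k\mapsto\Omega_k$, is a well-defined ring homomorphism, and applying it entrywise to an $n\times n$ matrix over $\Tzk{k}$ is exactly $\Phi$ followed by the canonical shuffle $M_n(M_3(R))\cong M_{3n}(R)$. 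Ring homomorphisms preserve products and the identity, so $\Phi(UV)=\Phi(U)\Phi(V)$ and $\Phi(I_n)=I_{3n}$; it remains only to handle the adjoint.

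For the adjoint, I would show $\Phi(U^\dagger)=\Phi(U)^\dagger$. Writing $U = A+B\omega_k+C\omega_k^2$, complex conjugation acts on $\Tzk{k}$ by $\omega_k\mapsto\omega_k^{-1}=\omega_k^{3^k-1}$, equivalently $\omega_k^2\mapsto\omega_k^{3^k-2}$, and one reduces $U^\dagger = A^\dagger+B^\dagger\overline{\omega_k}+C^\dagger\overline{\omega_k^2}$ back to the standard form using $\omega_k^3=\omega_{k-1}$; then $\Phi(U^\dagger)$ is obtained by the same substitution $\omega_k\mapsto\Omega_k$. On the other side, $\Phi(U)^\dagger = A^\dagger\otimes I + B^\dagger\otimes\Omega_k^\dagger + C^\dagger\otimes(\Omega_k^2)^\dagger$, so it suffices to verify $\Omega_k^\dagger = \Omega_k^{-1}$, i.e. that $\Omega_k$ is unitary (stated in the text after \cref{def:pcm}, but I would note it follows from $\Omega_k$ being a monomial matrix whose single nonzero entry in each row/column is a root of unity), and then that $\Omega_k^{-1}$ is the image of $\omega_k^{-1}$ under the homomorphism — which is automatic since ring homomorphisms send the inverse of $\omega_k$ (a unit, as $\omega_k\cdot\omega_k^{3^k-1}=1$) to the inverse of $\Omega_k$. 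Combining: $\Phi$ preserves products, identities, and adjoints, so if $U^\dagger U = I$ then $\Phi(U)^\dagger\Phi(U)=\Phi(U^\dagger U)=\Phi(I)=I$, hence $\Phi(U)\in\uuk{k-1}$.

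The only mildly delicate point — the step I would expect to need the most care — is keeping the two "coordinate systems" straight: $\Phi$ lands in $M_n\otimes M_3 \cong M_{3n}$, and one must confirm that the tensor-factor order in $A\otimes I + B\otimes\Omega_k + C\otimes\Omega_k^2$ matches the identification used for the homomorphism $M_n(M_3(R))\cong M_{3n}(R)$, so that multiplicativity really does transfer. Once that bookkeeping is fixed, every claim reduces to the two elementary facts $\Omega_k^3=\omega_{k-1}I$ and $\Omega_k$ unitary, both checked by inspection.
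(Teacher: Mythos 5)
Your proposal is correct, and it takes a genuinely different route from the paper's. The paper proceeds by brute force: it writes out the equation $U^\dagger U = I$ in the coordinates $A$, $B$, $C$ (using $\omega_k^\dagger = \omega_{k-1}^\dagger\omega_k^2$) to extract the three component identities $A^\dagger A + B^\dagger B + C^\dagger C = I$ and $A^\dagger B + B^\dagger C + C^\dagger A\omega_{k-1}^\dagger = A^\dagger C + B^\dagger A\omega_{k-1}^\dagger + C^\dagger B\omega_{k-1}^\dagger = 0$, and then observes that, because $\Omega_k^\dagger = \omega_{k-1}^\dagger\Omega_k^2$, the product $U'^\dagger U'$ expands into exactly the same three combinations tensored with $I$, $\Omega_k$, $\Omega_k^2$. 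You instead package the computation conceptually: since $\Omega_k^3 = \omega_{k-1}I$, evaluation $\omega_k \mapsto \Omega_k$ gives a ring homomorphism $\Tzk{k}\to M_3(\Tzk{k-1})$ (well defined precisely because of the uniqueness of the decomposition $a+b\omega_k+c\omega_k^2$, i.e., because $t^3-\omega_{k-1}$ generates all relations satisfied by $\omega_k$ over $\Tzk{k-1}$), this homomorphism intertwines complex conjugation with the adjoint because $\Omega_k$ is unitary, and unitarity of $\phi_k(U)$ then follows formally from $\phi_k(U^\dagger U)=\phi_k(U)^\dagger\phi_k(U)$. The two arguments rest on the same algebraic fact — that $\Omega_k$ is a unitary matrix satisfying the minimal polynomial of $\omega_k$ over $\Tzk{k-1}$ — which the paper only mentions in the remark following \cref{prop:catembkmone}. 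Your version is more structural and yields multiplicativity of $\phi_k$ for free (a fact the paper does not state but which holds); the paper's version is more elementary and avoids the $M_n(M_3(R))\cong M_{3n}(R)$ bookkeeping that you rightly flag as the one point requiring care. Both are complete proofs.
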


\begin{proof}
  Let $U\in\uuk{k}$ and write $U$ as $U=A + B\omega_{k}+C\omega_{k}^2$
  for some matrices $A$, $B$, and $C$ over $\Tzk{k-1}$. Now let
  $U'=A\otimes I + B\otimes \Omega_{k}+ C\otimes \Omega_{k}^2$. It is
  clear that $U'$ is a matrix with entries in $\Tzk{k-1}$. We now show
  that $U'$ is unitary. Since $U$ is unitary and since $U=A +
  B\omega_{k}+C\omega_{k}^2$, we can express the equation $U^\dagger
  U=I$ in terms of $A$, $B$, and $C$. Using $\omega_k^\dagger
  =\omega_{k-1}^\dagger\omega_k^2$, this yields
  \[
  (A^\dagger A + B^\dagger B + C^\dagger C)+ (A^\dagger B + B^\dagger
  C + C^\dagger A\omega_{k-1}^\dagger)\omega_{k} + (A^\dagger C +
  B^\dagger A\omega_{k-1}^\dagger + C^\dagger
  B\omega_{k-1}^\dagger)\omega_{k}^2=I.
  \]
  Hence, $A^\dagger A + B^\dagger B + C^\dagger C=I$ and $A^\dagger B
  + B^\dagger C + C^\dagger A\omega_{k-1}^\dagger=A^\dagger C +
  B^\dagger A\omega_{k-1}^\dagger + C^\dagger
  B\omega_{k-1}^\dagger=0$. Now note that $\Omega_k^\dagger =
  \omega_{k-1}^\dagger\Omega_k^2$, so that $U'^\dagger U'$ is equal to
  \[
  (A^\dagger A + B^\dagger B + C^\dagger C)\otimes I+ (A^\dagger B +
  B^\dagger C + C^\dagger A\omega_{k-1}^\dagger)\otimes \Omega_{k} +
  (A^\dagger C + B^\dagger A\omega_{k-1}^\dagger + C^\dagger
  B\omega_{k-1}^\dagger)\otimes\Omega_{k}^2.
  \]
  Hence, $U'^\dagger U'=I$. Reasoning analogously shows that
  $U'U'^\dagger = I$, so that $U'$ is indeed unitary.
\end{proof}

\begin{proposition}
  \label{prop:catembkmone}
  Let $k\geq 2$. The pair $(\phi_k,\ket{c_k})$ is a 3-dimensional
  catalytic embedding of $\uuk{k}$ into $\uuk{k-1}$.
\end{proposition}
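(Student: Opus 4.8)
The plan is to verify the defining equation of a catalytic embedding directly from \cref{def:catemb}: we must show that for every $U\in\uuk{k}$ of dimension $n$ and every $\ket{u}\in\C^n$, we have $\phi_k(U)(\ket{u}\otimes\ket{c_k}) = (U\ket{u})\otimes\ket{c_k}$, and that $\phi_k(U)$ has dimension $3n$. The dimension claim is immediate since $\Omega_k$ is $3\times 3$, and \cref{prop:catembfunc} already tells us $\phi_k(U)\in\uuk{k-1}$, so the real content is the intertwining identity.

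First I would write $U = A + B\omega_k + C\omega_k^2$ with $A,B,C$ over $\Tzk{k-1}$, so that $\phi_k(U) = A\otimes I + B\otimes\Omega_k + C\otimes\Omega_k^2$ by definition. Applying this to $\ket{u}\otimes\ket{c_k}$ gives
\[
\phi_k(U)(\ket{u}\otimes\ket{c_k}) = (A\ket{u})\otimes\ket{c_k} + (B\ket{u})\otimes(\Omega_k\ket{c_k}) + (C\ket{u})\otimes(\Omega_k^2\ket{c_k}).
\]
Now the key fact is \cref{eq:eigenvector}: $\ket{c_k}$ is an eigenvector of $\Omega_k$ with eigenvalue $\omega_k$, hence $\Omega_k\ket{c_k} = \omega_k\ket{c_k}$ and $\Omega_k^2\ket{c_k} = \omega_k^2\ket{c_k}$. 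Substituting these in, the right-hand side becomes
\[
(A\ket{u})\otimes\ket{c_k} + \omega_k(B\ket{u})\otimes\ket{c_k} + \omega_k^2(C\ket{u})\otimes\ket{c_k} = \bigl((A + B\omega_k + C\omega_k^2)\ket{u}\bigr)\otimes\ket{c_k} = (U\ket{u})\otimes\ket{c_k},
\]
which is exactly the required identity. Finally I would note that $\ket{c_k}\in\C^3$, so the embedding is $3$-dimensional as claimed.

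There is essentially no obstacle here: the proposition is a short consequence of \cref{prop:catembfunc} (which does the real work of checking unitarity) together with the eigenvector computation \cref{eq:eigenvector}. The only point requiring a word of care is that $\phi_k(U)$ lands in $\uuk{k-1}$ rather than merely in the unitaries over $\C$, but that is precisely what \cref{prop:catembfunc} supplies, so it suffices to cite it.
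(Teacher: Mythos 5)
Your proposal is correct and follows essentially the same route as the paper: cite \cref{prop:catembfunc} for membership in $\uuk{k-1}$ and the dimension count, then verify the intertwining identity by expanding $\phi_k(U)(\ket{u}\otimes\ket{c_k})$ termwise and applying the eigenvector relation $\Omega_k\ket{c_k}=\omega_k\ket{c_k}$ from \cref{eq:eigenvector}. No gaps.
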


\begin{proof}
  By \cref{prop:catembfunc}, $\phi_k:\uuk{k} \to \uuk{k-1}$ is a
  function and, by construction, if $U\in\uuk{k}$ has dimension $n$,
  then $\phi_k(U)$ has dimension $3n$. Moreover, if $\ket{u}\in\C^n$,
  then
  \begin{align*}
  \phi_k(U) (\ket{u}\otimes\ket{c_k}) &= (A\otimes I + B \otimes \Omega_{k} + C\otimes \Omega_{k}^2) (\ket{u}\otimes\ket{c_k})\\
   &= A\otimes I (\ket{u}\otimes\ket{c_k})+ B \otimes \Omega_{k}(\ket{u}\otimes\ket{c_k})+ C \otimes \Omega_{k}^2(\ket{u}\otimes\ket{c_k})\\
   &= A\ket{u}\otimes I \ket{c_k}+ B\ket{u} \otimes \Omega_{k}\ket{c_k} + C\ket{u} \otimes \Omega_{k}^2\ket{c_k} \\
  &= A\ket{u}\otimes \ket{c_k}+ B\ket{u} \otimes \omega_{k}\ket{c_k} + C\ket{u} \otimes \omega_{k}^2\ket{c_k}\\ 
  &= A\ket{u}\otimes \ket{c_k}+ \omega_{k}B\ket{u} \otimes \ket{c_k} + \omega_{k}^2C\ket{u} \otimes \ket{c_k}\\ 
  &= (A\ket{u}+ \omega_{k}B\ket{u}+ \omega_{k}^2C\ket{u})\otimes\ket{c_k} \\ 
  &=(U\ket{u})\otimes \ket{c_k}. 
  \end{align*}
  Hence, $(\phi_k,\ket{c_k})$ is a catalytic embedding.
\end{proof}

\begin{remark}
  The catalytic embedding constructed in \cref{prop:catembfunc} and
  \cref{prop:catembkmone} takes advantage of the fact that the matrix
  $\Omega_k$ and the algebraic number $\omega_k$ have many properties
  in common. Importantly, the polynomial $x^3 - \omega_{k-1}$ is both
  the characteristic polynomial of $\Omega_k$ and the minimal
  polynomial of $\omega_k$ over the ring $\Tzk{k-1}$. This
  construction generalizes to many other rings of interest (see
  \cite{catemb}).
\end{remark}

\begin{corollary}
  \label{cor:concatemb}
  Let $k\geq 2$. There is a $3^{k-1}$-dimensional catalytic embedding
  $(\phi,\vecc):\uuk{k}\to\uu$.
\end{corollary}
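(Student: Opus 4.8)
The plan is to obtain the $3^{k-1}$-dimensional catalytic embedding $(\phi,\vecc):\uuk{k}\to\uu$ by iterating the single-step catalytic embeddings of \cref{prop:catembkmone} and assembling them with the concatenation operation of \cref{def:concat}. Concretely, for each $j$ with $2\leq j \leq k$, \cref{prop:catembkmone} gives a $3$-dimensional catalytic embedding $(\phi_j,\ket{c_j}):\uuk{j}\to\uuk{j-1}$. Forming the concatenation
\[
(\phi,\vecc) \;=\; (\phi_2,\ket{c_2})\circ(\phi_3,\ket{c_3})\circ\cdots\circ(\phi_k,\ket{c_k})
\]
yields, by the definition of concatenation, a catalytic embedding from $\uuk{k}$ to $\uuk{1}=\uu$ whose underlying function is $\phi_2\circ\phi_3\circ\cdots\circ\phi_k$ and whose catalyst is $\vecc = \ket{c_k}\otimes\ket{c_{k-1}}\otimes\cdots\otimes\ket{c_2}$ (up to the bracketing fixed by \cref{def:concat}). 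The dimension of the concatenation is the product of the individual dimensions, namely $3\cdot 3\cdots 3 = 3^{k-1}$, as claimed.

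The key steps, in order, are: first, invoke \cref{prop:catembkmone} to get the chain of single-step embeddings $(\phi_j,\ket{c_j})$ for $j=2,\ldots,k$; second, recall from the remark following \cref{def:concat} that concatenation of catalytic embeddings is well-defined and associative, so the iterated concatenation above is unambiguous; third, read off from \cref{def:concat} that the dimension multiplies under concatenation, giving $3^{k-1}$, and that the resulting map indeed has domain $\uuk{k}$ and codomain $\uuk{1}=\uu$. One can also handle the degenerate case $k=1$ (or $k=2$) separately: when $k=1$ there is nothing to do and the identity catalytic embedding $(1_{\uu},[1])$ works in dimension $3^0=1$; when $k=2$ the statement is exactly \cref{prop:catembkmone}.

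I expect this proof to be essentially a bookkeeping exercise rather than to contain a genuine obstacle: the substantive content has already been established in \cref{prop:catembfunc} and \cref{prop:catembkmone}, and all that remains is to check that the dimensions, the source, and the target compose correctly. The only point that requires a small amount of care is the exact form and bracketing of the catalyst vector $\vecc$ under repeated application of the rule $(\phi',\ket{c}')\circ(\phi,\ket{c}) = (\phi'\circ\phi,\ket{c}\otimes\ket{c}')$, and, relatedly, confirming that associativity of concatenation lets us write the $(k-1)$-fold composite without specifying a parenthesization. Since the corollary statement only asserts existence, these details do not need to be spelled out in full, so the proof can remain short.
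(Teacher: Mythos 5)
Your proof is correct and is essentially the paper's own argument: apply \cref{prop:catembkmone} repeatedly to obtain the chain of $3$-dimensional catalytic embeddings $\uuk{k}\to\uuk{k-1}\to\cdots\to\uu$ and concatenate them, with the dimension multiplying to $3^{k-1}$. The only (immaterial) difference is the tensor ordering of the catalyst: a literal reading of \cref{def:concat} gives $\ket{c_k}\otimes\cdots\otimes\ket{c_2}$ as you write, while the paper records it as $\ket{c_2}\otimes\cdots\otimes\ket{c_k}$; either convention yields the claimed existence statement.
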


\begin{proof}
  Applying \cref{prop:catembkmone} repeatedly yields a sequence of
  catalytic embeddings
  \[
  \uuk{k} \xrightarrow{(\phi_k,\ket{c_k})}\cdots
  \xrightarrow{(\phi_3,\ket{c_3})}\uuk{2}\xrightarrow{(\phi_2,\ket{c_2})}
  \uu.
  \]
  Concatenating the catalytic embeddings in this sequence yields the
  desired result.
\end{proof}

Note that the catalyst $\vecc$ in the catalytic embedding
$(\phi,\vecc)$ of \cref{cor:concatemb} is the product state
$\vecc=\ket{c_2}\otimes \cdots \otimes \ket{c_k}$.

\section{Exact Synthesis of Clifford-Cyclotomic Circuits}
\label{sec:exactsynth}

We can now prove our main result, which will follow straightforwardly
from the results of \cref{sec:generators,sec:catembs,sec:circuits}.

\begin{proposition}
  \label{prop:exactsynthstep}
  Let $k\geq 2$. If $U\in\unkk{3^n}{k}$, then $U$ can be represented
  by an $n$-qutrit circuit over $\gatesetk{k}$ using at most $k+1$
  ancillae.
\end{proposition}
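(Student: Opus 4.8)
The plan is to mimic the qubit argument of \cite{qubitcyclo}, using the catalytic embedding of \cref{cor:concatemb} to reduce the order-$3^k$ synthesis problem to the already-solved order-$3$ case of \cref{prop:exactsynthbase}. Concretely, given $U\in\unkk{3^n}{k}$, apply the $3^{k-1}$-dimensional catalytic embedding $(\phi,\vecc):\uuk{k}\to\uu$ to obtain a unitary $\phi(U)$ of dimension $3^n\cdot 3^{k-1}=3^{n+k-1}$ with entries in $\Tz$; that is, $\phi(U)\in\unkk{3^{n+k-1}}{1}$. By \cref{prop:exactsynthbase}, $\phi(U)$ can be represented by an $(n+k-1)$-qutrit circuit over $\gatesetk{1}\subseteq\gatesetk{k}$ using at most $2$ ancillae. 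The defining property of the catalytic embedding gives $\phi(U)(\ket{u}\otimes\vecc)=(U\ket{u})\otimes\vecc$ for all $\ket{u}$, so if we can prepare the catalyst state $\vecc$ on $k-1$ extra qutrits, run this circuit, and leave $\vecc$ untouched, we obtain a circuit computing $U$ on the original $n$ qutrits.

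The key steps, in order, are: (i) note $\phi(U)$ is a matrix over $\Tz$ of dimension $3^{n+k-1}$, hence lies in $\unkk{3^{n+k-1}}{1}$, and invoke \cref{prop:exactsynthbase} to get a Toffoli+Hadamard circuit $C$ for it on $n+k-1$ qutrits plus $2$ fresh ancillae; (ii) observe $\gatesetk{1}\subseteq\gatesetk{k}$ since $Z=T_k^{3^{k-1}}$ (or more simply $Z=HXH^\dagger$ is already expressible, and anyway $T_1=Z$), so $C$ is a $\gatesetk{k}$-circuit; (iii) exhibit a $\gatesetk{k}$-circuit that prepares the catalyst $\vecc=\ket{c_2}\otimes\cdots\otimes\ket{c_k}$ on $k-1$ fresh qutrits from $\ket{0}^{\otimes(k-1)}$, and its inverse to unprepare it; (iv) assemble: prepare $\vecc$, apply $C$ (treating the $k-1$ catalyst qutrits and the $2$ Toffoli+Hadamard ancillae as the $k+1$ total ancillae), then unprepare $\vecc$; the catalytic property ensures the net action on the $n$ data qutrits is $U$ and all $k+1$ ancillae return to $\ket{0}$.

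The main obstacle is step (iii): showing the catalyst $\vecc$ can itself be prepared by a $\gatesetk{k}$-circuit. Each factor $\ket{c_j}=\tfrac1\lambda(1,\omega_j^{-1},\omega_j^{-2})^\top$ is a single-qutrit state with entries in $\Tzk{j}\subseteq\Tzk{k}$, so the state $\ket{0}\mapsto\ket{c_j}$ corresponds to a column of some unitary in $\unkk{3}{k}$; one wants to realize it without circular reasoning. The cleanest route is to exhibit the one-qutrit unitary sending $\ket 0$ to $\ket{c_j}$ explicitly as a short word in $H$ and $T_k$: indeed $H\ket 0=\tfrac{-\omega^2}{\lambda}(1,1,1)^\top$, and applying a diagonal gate $\mathrm{diag}(1,\omega_j^{-1},\omega_j^{-2})$ — a power of $T_k$, since $T_k^{3^{k-j}}=T_j$ and $T_j^{-1}=\mathrm{diag}(1,\omega_j^{-1},\omega_j^{-2})$ — followed by a global-phase correction (which on a genuine ancilla can be pushed onto the data register or absorbed, and in any case $-\omega^2$ is a unit in $\Tz$ realizable by $\gatesetk 1$ up to the usual phase conventions) yields $\ket{c_j}$ up to scalar. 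Since the scalar $-\omega^2/\lambda\cdot(\text{unit})$ and the normalization are consistent with $\ket{c_j}$ being a genuine unit vector, this gives the preparation circuit; concatenating over $j=2,\dots,k$ prepares $\vecc$. One then simply reads off the ancilla count: $k-1$ qutrits for $\vecc$ plus $2$ for the embedded Toffoli+Hadamard circuit equals $k+1$, as claimed.
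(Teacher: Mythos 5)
Your proposal is correct and follows essentially the same route as the paper: apply the concatenated catalytic embedding to land in $\unn{3^{n+k-1}}$, synthesize $\phi(U)$ via \cref{prop:exactsynthbase}, and prepare/unprepare each catalyst factor as $T_\ell^\dagger H\ket{0}$ with $T_\ell^\dagger=(T_k^\dagger)^{3^{k-\ell}}$, giving $k-1$ catalyst qutrits plus $2$ base-case ancillae. Your care about the global phase $-\omega^2$ in the catalyst preparation is warranted but harmless, exactly as you note, since a catalyst defined up to phase still satisfies the catalytic identity and the inverse preparation returns the ancilla to $\ket{0}$.
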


\begin{proof}
  Let $U\in\unkk{3^n}{k}$ and let $(\phi,\vecc)$ be the catalytic
  embedding constructed in \cref{cor:concatemb}, with
  $\vecc=\ket{c_2}\otimes \cdots \otimes \ket{c_k}$. We then have
  $\phi(U)\in\unn{3^{n+k-1}}$, so that, by \cref{prop:exactsynthbase},
  $\phi(U)$ can be represented by an $(n+k-1)$-qutrit circuit $C$ over
  $\gatesetk{1}$ using at most 2 fresh ancillae. By \cref{def:catemb},
  the action of $\phi(U)$ on an input of the form
  $\ket{u}\otimes\ket{c_2}\otimes \cdots \otimes \ket{c_k}$ can be
  depicted as below (where the ancillary qutrits used in $C$, if any,
  are omitted).
  \begin{center}
  \begin{quantikz}
  \lstick{$\ket{u}$}   & \gate[wires=4,nwires=3]{~~C~~} & \qw\rstick{$U\ket{u}$}\\
  \lstick{$\ket{c_k}$} &                                    & \qw\rstick{$\ket{c_k}$}\\
  \lstick{$\vdots$}  &                                    & \rstick{$\vdots$}\\ 
  \lstick{$\ket{c_2}$} &                                    & \qw\rstick{$\ket{c_2}$}      
  \end{quantikz}    
  \end{center}
  But, for $2\leq \ell \leq k$, we have $\ket{c_\ell} = T_\ell^\dagger
  H \ket{0}$ and $T_\ell^\dagger = (T_k^\dagger)^{3^{k-\ell}}$. Hence,
  we can construct the following circuit over $\gatesetk{k}$.
  \begin{center}
  \begin{quantikz}
  \lstick{$\ket{u}$}  & \qw \gategroup[4,steps=5,style={dashed,rounded corners,fill=blue!20, inner xsep=2pt},background,label style={label position=below,anchor=north,yshift=-0.2cm}]{{$D$}} &\qw & \gate[wires=4,nwires=3]{~~C~~} & \qw & \qw & \qw\rstick{$U\ket{u}$}\\
  \lstick{$\ket{0}$} & \gate{H}\vphantom{T_k^\dagger}\hphantom{~} & \gate{T_k^\dagger} & & \gate{T_k}\vphantom{T_k^dagger}\hphantom{~} & \gate{H}\vphantom{T_k^\dagger}\hphantom{~} & \qw\rstick{$\ket{0}$}\\
  \lstick{$\vdots$}  & & &                                    & & & \rstick{$\vdots$}\\ 
  \lstick{$\ket{0}$} & \gate{H}\vphantom{T_k^\dagger}\hphantom{~} & \gate{T_2^\dagger} &                    & \gate{T_2}\vphantom{T_k^\dagger}\hphantom{~} & \gate{H}\vphantom{T_k^\dagger}\hphantom{~}               & \qw\rstick{$\ket{0}$}      
  \end{quantikz}    
  \end{center}
  Since all of the ancillae in $D$ (including the ancillae potentially
  present in $C$) start and end the computation in the $\ket{0}$
  state, then $D$ is a circuit over $\gatesetk{k}$ which represents
  $U$ and uses at most $k+1$ (fresh) ancillae, as desired.
\end{proof}

\begin{remark}
  The circuit constructed in \cref{prop:exactsynthstep} actually use
  $k-1$ fresh ancillae and no more than 2 borrowed ancillae. For
  brevity, we simply stated the proposition in terms of fresh
  ancillae. One can amend the constructions in 
  \cref{app:decompositions} to reduce the total ancilla-count from 
  $k+1$ to $k$, at the cost of requiring all ancillae to be fresh.
\end{remark}

\begin{theorem}
  \label{thm:characterization}
  Let $k\geq 1$ and let $U$ be a $3^n\times 3^n$ unitary matrix. Then
  $U$ can be represented by an $n$-qutrit circuit over $\gatesetk{k}$
  if and only if $U\in\unkk{3^n}{k}$. Moreover, $k+1$ ancillae are
  always sufficient to construct a circuit for $U$.
\end{theorem}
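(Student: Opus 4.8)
The plan is to prove Theorem~\ref{thm:characterization} by assembling the pieces already established, with a short induction on $k$ that folds together the two regimes $k=1$ and $k\geq 2$. First I would dispense with the easy direction: if $U$ is represented by an $n$-qutrit circuit over $\gatesetk{k}$, then since every generator of $\gatesetk{k}$ has entries in $\Tzk{k}$ (as observed at the end of \cref{ssec:background}, using $-\omega^2/\sqrt{-3}=\omega^2(1-\omega)/3\in\Tzk{k}$), any product of such gates---tensored with identities on the untouched wires and on the ancillae---again has entries in $\Tzk{k}$. Restricting the resulting $3^{n+a}\times 3^{n+a}$ unitary (where $a$ is the number of ancillae) to the subspace where the ancillae are in state $\ket{0}$, and noting that the ancillae return to $\ket{0}$, the $3^n\times 3^n$ block implementing $U$ still has entries in $\Tzk{k}$, so $U\in\unkk{3^n}{k}$.

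For the converse, suppose $U\in\unkk{3^n}{k}$. When $k=1$, this is exactly \cref{prop:exactsynthbase}, which gives a circuit over $\gatesetk{1}$ using at most $2=k+1$ ancillae. When $k\geq 2$, this is exactly \cref{prop:exactsynthstep}, which gives a circuit over $\gatesetk{k}$ using at most $k+1$ ancillae. So in both cases $U$ is representable over $\gatesetk{k}$ with at most $k+1$ ancillae, which also establishes the ``moreover'' clause. This is essentially a one-paragraph synthesis of the two prior results, so the ``hard part'' here is really just the bookkeeping of stating both directions cleanly and confirming the ancilla count; there is no genuinely new obstacle, since the substantive work---the generating set of \cref{sec:generators}, the circuit decompositions of \cref{sec:circuits}, and the catalytic embedding of \cref{sec:catembs}---has already been carried out.

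If one wanted to present the argument more self-containedly, I would make the inductive structure explicit: the statement $P(k)$ = ``every $U\in\unkk{3^n}{k}$ is representable over $\gatesetk{k}$ with $\leq k+1$ ancillae'' has base case $P(1)$ from \cref{prop:exactsynthbase}, and $P(k)$ for $k\geq 2$ follows from $P(1)$ via the catalytic embedding $(\phi,\vecc):\uuk{k}\to\uu$ of \cref{cor:concatemb} together with the observation $\ket{c_\ell}=T_\ell^\dagger H\ket{0}$ and $T_\ell^\dagger=(T_k^\dagger)^{3^{k-\ell}}$ used in \cref{prop:exactsynthstep}; note that $P(k)$ does not actually need $P(k-1)$, since the embedding reduces all the way down to $\gatesetk{1}$ in one step. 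I would close by remarking, as in the remark following \cref{prop:exactsynthstep}, that the circuit in fact uses only $k-1$ fresh ancillae plus at most $2$ borrowed ones, so the $k+1$ bound is comfortably met.
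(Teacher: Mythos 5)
Your proposal is correct and matches the paper's proof exactly: the left-to-right direction follows from the gate entries lying in $\Tzk{k}$, and the right-to-left direction is obtained by citing \cref{prop:exactsynthbase} for $k=1$ and \cref{prop:exactsynthstep} for $k\geq 2$. The extra detail you supply on restricting to the ancilla-$\ket{0}$ subspace is a harmless elaboration of the same argument.
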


\begin{proof}
  The left-to-right direction is a consequence of the fact that the
  entries of the elements of $\gatesetk{k}$ lie in the ring
  $\Tzk{k}$. The right-to-left direction is given by
  \cref{prop:exactsynthbase,prop:exactsynthstep}.
\end{proof}

\section{Circuit Complexity}
\label{sec:complexity}

The proof of \cref{thm:characterization} is constructive: it provides
an algorithm to construct a circuit for a given matrix. In this
section, we briefly discuss the complexity of the resulting circuit,
reasoning as in \cite{li2023,Giles2013a}. We start by considering
\cref{prop:generation} before turning to \cref{thm:characterization}.

\begin{lemma}
  \label{lem:complexitybase}
  Let $U\in \unn{m}$ and let $\ell = \lde(U)$. The algorithm of
  \cref{prop:generation} expresses $U$ as a product of $O(2^m\ell)$
  elements of $\gensnn{m}$ in the worst case.
\end{lemma}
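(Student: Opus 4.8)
The strategy is to count the two nested loops in the algorithm implicit in \cref{prop:generation}. The outer loop runs over the $m$ columns of $U^\dagger$, and for each column it invokes \cref{lem:column}, whose inner loop performs a sequence of \cref{lem:step} reductions, each lowering the least $\lambda$-denominator exponent by at least one, followed by a final \cref{lem:base} cleanup step. So the first thing I would establish is a bound on the number of generators produced by a single call to \cref{lem:column} on a unit vector $\ket{u}\in\Tz^m$ with $\lde\ket{u}\le\ell$.

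For that, I would unpack \cref{lem:step}: it groups the nonzero-mod-$\lambda$ entries of $\ket{v}$ into triples and applies \cref{lem:hadamardreduction} to each triple. There are at most $\lfloor m/3\rfloor$ triples, and each application of \cref{lem:hadamardreduction} costs a bounded number (at most $1 + 3 + 3 = 7$, or some fixed constant) of generators from $\gensnn{m}$ — one $H_{[x,y,z]}$, up to three $(\omega)_{[x]}$'s, and up to three $(-1)_{[x]}$'s. Hence one reduction step costs $O(m)$ generators, and since $\lde$ strictly decreases, at most $\ell$ such steps are needed, for a total of $O(m\ell)$ generators to reach $\lde = 0$; the terminal \cref{lem:base} step adds only $O(1)$ more (a handful of $(-1)_{[x]}$, $(\omega)_{[x]}$, and one transposition $X_{[x,y]}$). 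So each column costs $O(m\ell)$ — though I should note that after processing a column and fixing it to $\ket{j}$, subsequent columns are handled in the orthogonal complement, and the $\lde$ of the remaining submatrix does not exceed $\ell$, so the per-column bound $O(m\ell)$ persists. Multiplying by the $m$ columns gives $O(m^2\ell)$.

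This is off from the claimed $O(2^m\ell)$; since $m^2 = O(2^m)$, the stated bound is certainly implied, so I would either prove the sharper $O(m^2\ell)$ (and remark it implies the claim) or, to match the paper's phrasing, simply observe $O(m^2\ell) \subseteq O(2^m\ell)$ and present the coarser estimate directly. I suspect the authors wrote $2^m$ because in the intended application $m = 3^n$ and what matters downstream is the dependence on $n$ and $\ell$, so a clean polynomial-in-$m$ bound is more than enough; I would state $O(m^2 \ell)$ and note it is $O(2^m\ell)$.

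The main obstacle is the bookkeeping in the inductive step of \cref{lem:column}: one must check that fixing earlier columns does not inflate the denominator exponent of the later columns, so that the $O(m\ell)$ per-column cost is uniform across all $m$ columns rather than degrading. This is true because the generators in $\gensnn{m}$ act as unitaries over $\Tz$ and preserve the property of having $\lde \le \ell$ on any unit vector living in the relevant coordinate subspace — but spelling this out carefully (the remaining columns, after left-multiplication by the accumulated generators, still form the columns of a unitary over $\Tz$ whose overall $\lde$ is at most the original $\ell$) is the one place where a little care is required rather than a routine count.
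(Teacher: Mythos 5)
Your accounting of the cost of a single column reduction agrees with the paper's: one application of \cref{lem:step} costs $O(m)$ generators (at most $\lfloor m/3\rfloor$ triples, each handled by a constant number of generators via \cref{lem:hadamardreduction}), and the number of such steps equals the least $\lambda$-denominator exponent of the column being reduced. The gap is exactly at the point you flag as ``the one place where a little care is required'': your claim that after the first column has been fixed, the least $\lambda$-denominator exponent of the remaining columns is still at most $\ell$. You assert this but do not prove it, and it does not hold. Each generator $H_{[x,y,z]}$ used to clear column $1$ is left-multiplied onto the \emph{entire} matrix, and since $H$ carries a factor $1/\lambda$, each of the $\ell$ reduction rounds can increase the lde of every other column by one. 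So after column $1$ is cleared, column $2$ can have lde as large as $2\ell$; after column $2$ is cleared, column $3$ can reach $4\ell$; and in general column $j$ can reach $2^{j-1}\ell$. This doubling is precisely why the paper's bound is $O\bigl(\sum_i 2^i\ell(m-i)\bigr) = O(2^m\ell)$ rather than your $O(m^2\ell)$. Unitarity of the accumulated product over $\Tz$ gives no control here: the generators preserve membership in $\unn{m}$, but nothing in \cref{sec:generators} shows they preserve the bound $\lde\le\ell$.

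As written, your argument establishes the stated per-column cost only for the first column. To repair it you must either prove your invariant that the lde of the residual matrix never exceeds $\ell$ (which would be a genuine sharpening of the lemma to $O(m^2\ell)$, but requires a new argument the paper does not supply and implicitly denies), or track the geometric growth of the lde across columns as the paper does, in which case you recover exactly the exponential sum and the $2^m$ factor. In its current form the proof has a hole at its central step, even though the final asserted bound happens to be weak enough that a correct version of either route would imply it.
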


\begin{proof}
  Consider the first column of $U$. In the worst case, its least
  denominator exponent is $\ell$. To reduce this least denominator
  exponent by one requires $O(m)$ operations. Hence, reducing the
  first column of $U$ completely requires $O(\ell m)$ operations in
  the worst case. The reduction of the first column may increase the
  least denominator exponent of the second column from $\ell$ to
  $2\ell$, since each entry of the second column may be affected by up
  to $\ell$ 3-level matrices in the course of this first
  reduction. Once the first column has been reduced, the second column
  may still have $m-1$ nonzero entries. Reducing the second column
  will hence require $O(2\ell(m-1))$ operations in the worst case. In
  general, reducing the $j$-th column will require
  $O(2^{j-1}\ell(m-j))$ operations in the worst case so that the
  overall reduction of $U$ requires at most
  \[
  O\left(\sum_{i=0}^{n-1} 2^i \ell (m-i)\right)
  \]
  operations. Simplifying the resulting sum yields a total of $O(2^m
  \ell)$ operations.
\end{proof}

\begin{theorem}
  \label{thm:complexitystep}
  Let $U\in\unkk{3^n}{k}$ and let $\ell = \lde(U)$. The algorithm of
  \cref{thm:characterization} represents $U$ as a circuit of
  $O((n+k)2^{3^{n+k-1}}\ell)$ gates in the worst case.
\end{theorem}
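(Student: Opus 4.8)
The plan is to chain together the complexity bounds already established, tracking how the catalytic embedding inflates the dimension. First I would recall from \cref{thm:characterization} that, given $U\in\unkk{3^n}{k}$, the algorithm proceeds by forming $\phi(U)\in\unn{3^{n+k-1}}$ via the concatenated catalytic embedding of \cref{cor:concatemb}, then applies the base-case synthesis of \cref{prop:exactsynthbase} to $\phi(U)$, and finally pads with the fixed preparation/unpreparation circuit $D$ consisting of $O(k)$ additional gates ($H$ and $T_\ell$ powers on the $k-1$ catalyst wires). So the gate count is dominated by the cost of synthesizing $\phi(U)$ as a Toffoli+Hadamard circuit.

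Next I would invoke \cref{prop:exactsynthbase} together with \cref{lem:complexitybase}: synthesizing a matrix in $\unn{m}$ with least $\lambda$-denominator exponent $\ell'$ costs $O(2^m\ell')$ elements of $\gensnn{m}$, and each such generator becomes $O(1)$ gates over $\gatesetk{1}$ by \cref{prop:circuits}. Here $m = 3^{n+k-1}$, so the count is $O(2^{3^{n+k-1}}\ell')$ where $\ell' = \lde(\phi(U))$. The one genuine thing to check is that $\ell'$ is controlled by $\ell = \lde(U)$; I would argue that each single application of $\phi_j$ (\cref{prop:catembkmone}) replaces a matrix over $\Tzk{j}$ by one over $\Tzk{j-1}$ whose entries are the coefficients $A,B,C$ in the expansion $U = A + B\omega_j + C\omega_j^2$ tensored with fixed integer matrices $I,\Omega_j,\Omega_j^2$, and that this operation does not increase the least $\lambda$-denominator exponent — the coefficients of $u/\lambda^\ell$ written in the $\omega_j$-basis still have denominator $\lambda^\ell$. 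Iterating $k-1$ times, $\lde(\phi(U)) \le \ell$, so $\ell' = O(\ell)$. Folding in the $O(k)$ gates of $D$ and the constant overhead of the $\le 2$ borrowed ancillae (which contribute $O(1)$ to the width but nothing to the asymptotic gate count per generator), the total is $O((n+k)\cdot 2^{3^{n+k-1}}\ell)$, where the $(n+k)$ factor absorbs both the wire count multiplying each generator's decomposition and the additive $O(k)$ from $D$.

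I would present this as: (i) state $m = 3^{n+k-1}$ and note $\phi(U)\in\unn{m}$; (ii) check $\lde(\phi(U)) \le \lde(U) = \ell$ using the denominator behaviour of the catalytic embedding; (iii) apply \cref{lem:complexitybase} to get $O(2^m\ell)$ generators; (iv) apply \cref{prop:circuits} to turn each generator into $O(1)$ gates, noting each gate acts on $O(n+k)$ qutrits so the circuit has $O((n+k)2^m\ell)$ gate-applications; (v) add the $O(k)$ gates of the preparation circuit $D$. Collecting terms yields $O((n+k)2^{3^{n+k-1}}\ell)$.

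The main obstacle is step (ii): making precise that the catalytic embedding does not blow up the least $\lambda$-denominator exponent. One has to be slightly careful because the embedding tensors with $\Omega_j$, whose powers have entries $\omega_{j-1}$ (a unit, hence $\lambda$-denominator exponent $0$), so no denominators are introduced there; and the coefficient extraction $U \mapsto (A,B,C)$ is $\Tzk{j-1}$-linear in a way that commutes with dividing by $\lambda^\ell$. A clean way to say this is: if $\ell$ is a $\lambda$-denominator exponent of $U$ over $\Zzk{k}$, then $\lambda^\ell U \in \Zzk{k}$-matrices, and writing $\lambda^\ell U = A' + B'\omega_k + C'\omega_k^2$ with $A',B',C'$ over $\Zzk{k-1}$ shows $\lambda^\ell \phi_k(U) = A'\otimes I + B'\otimes\Omega_k + C'\otimes\Omega_k^2$ has entries in $\Zzk{k-1}$, so $\ell$ is also a $\lambda$-denominator exponent of $\phi_k(U)$; induct. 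Everything else is bookkeeping on the sum $\sum_i 2^i\ell(m-i) = O(2^m\ell)$ already carried out in \cref{lem:complexitybase}.
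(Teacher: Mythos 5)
Your proposal follows the paper's proof essentially verbatim: embed $U$ into $\unn{3^{n+k-1}}$ via the catalytic embedding of \cref{cor:concatemb}, apply \cref{lem:complexitybase} to obtain $O(2^{3^{n+k-1}}\ell)$ generators, and convert each generator into gates over $\gatesetk{1}$ --- and you even make explicit the bound $\lde(\phi(U))\le\ell$ that the paper merely asserts, which is a welcome addition. The only point to correct is the phrasing that each generator becomes ``$O(1)$ gates'': a multi-level matrix acting on $n+k-1$ qutrits is realized via multiply-controlled gates whose decompositions over $\gatesetk{1}$ have length $O(n+k)$, and this per-generator cost (not a per-gate wire count) is exactly where the $(n+k)$ factor you correctly include comes from.
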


\begin{proof}
  The algorithm of \cref{thm:characterization} uses the catalytic
  embedding $(\phi,\vecc)$ of \cref{cor:concatemb} to construct a
  matrix $\phi(U)$ over $\Tz$. The dimension of $\phi(U)$ is
  $3^{n+k-1}$ and its least denominator exponent is no more than
  $\ell$. Hence, by \cref{lem:complexitybase}, the algorithm of
  \cref{prop:generation} will express $\phi(U)$ as a product of no
  more than $O(2^{3^{n+k-1}}\ell)$ elements of
  $\gensnn{3^{n+k-1}}$. It follows from the circuit constructions
  given in \cref{app:decompositions}, that each element of
  $\gensnn{3^{n+k-1}}$ can be represented by a circuit consisting of
  $O(n+k)$ gates. Hence, the circuit produced by
  \cref{thm:characterization} consists of no more than
  $O((n+k)2^{3^{n+k-1}}\ell)$ gates.
\end{proof}

\section{Conclusion}
\label{sec:conc}

We showed that the matrices that can be exactly represented by an
$n$-qutrit circuit over the Clifford-cyclotomic gate set of degree
$3^k$ are precisely the elements of $\unkk{3^n}{k}$. Moreover, we
showed that no more than $k+1$ ancillae are required to construct a
circuit for an element of $\unkk{3^n}{k}$.

Our proof contains an algorithm for synthesizing a circuit over
$\gatesetk{k}$, given a matrix in $\unkk{3^n}{k}$. However, the
circuits constructed in this way are very large and their optimization
is a promising direction for future research. It would be interesting
to reduce the gate-complexity of the circuits produced by
\cref{thm:characterization}. The techniques employed in
\cite{li2023,kliuchnikov2013synthesis} for the synthesis of multiqubit
Toffoli+Hadamard and Clifford+$T$ circuits are likely to apply in the
qutrit context as well. Similarly, it would also be interesting to
reduce the number of ancillae used by the algorithm. As
\cref{app:decompositions} shows, some of the ancillae can be removed
by choosing a slightly different gate set, but the bulk of the
ancillae come from the use of catalytic embeddings, so a different
synthesis technique may be required for more significant
savings. Along this line of inquiry, it would be interesting to
characterize the matrices that can be represented by ancilla-free
circuits. Such characterizations exist for qubit matrices
\cite{Amy2020,Giles2013a}, but are likely to be different for qutrits
\cite{liaRC}.

Finally, a natural generalization of this work would be to consider
higher-dimensional qudits. However, preliminary research suggests that
the techniques used here and in \cite{qubitcyclo} might not adapt
straightforwardly to primes larger than 3.  While it stands to reason
that some version of our results should continue to hold for larger
prime dimensions, proving this to be the case might require new ideas.

\subsection*{Acknowledgements}

The authors would like to thank Sarah Meng Li, Ewan Murphy, and the
anonymous reviewers of 21st International Conference on Quantum
Physics and Logic (QPL 2024) for insightful comments on an earlier
version of this paper. LY is funded by a Google PhD Fellowship. The
circuit diagrams in the proof of \cref{prop:exactsynthstep} were
typeset using Quantikz \cite{quantikz}.

\bibliographystyle{eptcs}
\bibliography{multiqutrit}

\begin{thebibliography}{10}
\providecommand{\bibitemdeclare}[2]{}
\providecommand{\surnamestart}{}
\providecommand{\surnameend}{}
\providecommand{\urlprefix}{Available at }
\providecommand{\url}[1]{\texttt{#1}}
\providecommand{\href}[2]{\texttt{#2}}
\providecommand{\urlalt}[2]{\href{#1}{#2}}
\providecommand{\doi}[1]{doi:\urlalt{https://doi.org/#1}{#1}}
\providecommand{\eprint}[1]{arXiv:\urlalt{https://arxiv.org/abs/#1}{#1}}
\providecommand{\bibinfo}[2]{#2}

\bibitemdeclare{unpublished}{catemb}
\bibitem{catemb}
\bibinfo{author}{Matthew \surnamestart Amy\surnameend},
  \bibinfo{author}{Matthew \surnamestart Crawford\surnameend},
  \bibinfo{author}{Andrew~N. \surnamestart Glaudell\surnameend},
  \bibinfo{author}{Melissa~L. \surnamestart Macasieb\surnameend},
  \bibinfo{author}{Samuel~S. \surnamestart Mendelson\surnameend} \&
  \bibinfo{author}{Neil~J. \surnamestart Ross\surnameend}
  (\bibinfo{year}{2023}): \emph{\bibinfo{title}{Catalytic embeddings of quantum
  circuits}}.
\newblock \bibinfo{note}{Preprint available from \arxiv{2305.07720}}.

\bibitemdeclare{unpublished}{qubitcyclo}
\bibitem{qubitcyclo}
\bibinfo{author}{Matthew \surnamestart Amy\surnameend},
  \bibinfo{author}{Andrew~N. \surnamestart Glaudell\surnameend},
  \bibinfo{author}{Shaun \surnamestart Kelso\surnameend},
  \bibinfo{author}{William \surnamestart Maxwell\surnameend},
  \bibinfo{author}{Samuel~S. \surnamestart Mendelson\surnameend} \&
  \bibinfo{author}{Neil~J. \surnamestart Ross\surnameend}
  (\bibinfo{year}{2023}): \emph{\bibinfo{title}{Exact Synthesis of Multiqubit
  Clifford-Cyclotomic Circuits}}.
\newblock \bibinfo{note}{Preprint available from \arxiv{2311.07741}}.

\bibitemdeclare{inproceedings}{li2023}
\bibitem{li2023}
\bibinfo{author}{Matthew \surnamestart Amy\surnameend},
  \bibinfo{author}{Andrew~N. \surnamestart Glaudell\surnameend},
  \bibinfo{author}{Sarah~Meng \surnamestart Li\surnameend} \&
  \bibinfo{author}{Neil~J. \surnamestart Ross\surnameend}
  (\bibinfo{year}{2023}): \emph{\bibinfo{title}{Improved Synthesis of
  Toffoli-Hadamard Circuits}}.
\newblock In \bibinfo{editor}{Martin \surnamestart Kutrib\surnameend} \&
  \bibinfo{editor}{Uwe \surnamestart Meyer\surnameend}, editors: {\slshape
  \bibinfo{booktitle}{Reversible Computation}}, \bibinfo{publisher}{Springer
  Nature Switzerland}, \bibinfo{address}{Cham}, pp. \bibinfo{pages}{169--209},
  \doi{10.1007/978-3-031-38100-3\_12}.
\newblock \bibinfo{note}{Also available from \arxiv{2305.11305}}.

\bibitemdeclare{article}{Amy2020}
\bibitem{Amy2020}
\bibinfo{author}{Matthew \surnamestart Amy\surnameend},
  \bibinfo{author}{Andrew~N. \surnamestart Glaudell\surnameend} \&
  \bibinfo{author}{Neil~J. \surnamestart Ross\surnameend}
  (\bibinfo{year}{2020}): \emph{\bibinfo{title}{{Number-theoretic
  characterizations of some restricted {C}lifford+{$T$} circuits}}}.
\newblock {\slshape \bibinfo{journal}{Quantum}} \bibinfo{volume}{4}, p.
  \bibinfo{pages}{252}, \doi{10.22331/q-2020-04-06-252}.
\newblock \bibinfo{note}{Also available from \arxiv{1908.06076}}.

\bibitemdeclare{article}{AnwarH2012qutritmsd}
\bibitem{AnwarH2012qutritmsd}
\bibinfo{author}{Hussain \surnamestart Anwar\surnameend},
  \bibinfo{author}{Earl~T \surnamestart Campbell\surnameend} \&
  \bibinfo{author}{Dan~E \surnamestart Browne\surnameend}
  (\bibinfo{year}{2012}): \emph{\bibinfo{title}{Qutrit magic state
  distillation}}.
\newblock {\slshape \bibinfo{journal}{New Journal of Physics}}
  \bibinfo{volume}{14}(\bibinfo{number}{6}), p. \bibinfo{pages}{063006},
  \doi{10.1088/1367-2630/14/6/063006}.
\newblock \bibinfo{note}{Also available from \arxiv{1202.2326}}.

\bibitemdeclare{article}{BocharovA2016optimalitymetaplectic}
\bibitem{BocharovA2016optimalitymetaplectic}
\bibinfo{author}{Alex \surnamestart Bocharov\surnameend}
  (\bibinfo{year}{2016}): \emph{\bibinfo{title}{A Note on Optimality of Quantum
  Circuits over Metaplectic Basis}}.
\newblock {\slshape \bibinfo{journal}{Quantum Information and Computation}}
  \bibinfo{volume}{18}, \doi{10.26421/QIC18.1-2-1}.
\newblock \bibinfo{note}{Also available from \arxiv{1606.02315}}.

\bibitemdeclare{article}{BocharovA2016ternaryarithmetics}
\bibitem{BocharovA2016ternaryarithmetics}
\bibinfo{author}{Alex \surnamestart Bocharov\surnameend},
  \bibinfo{author}{Shawn \surnamestart Cui\surnameend}, \bibinfo{author}{Martin
  \surnamestart Roetteler\surnameend} \& \bibinfo{author}{Krysta \surnamestart
  Svore\surnameend} (\bibinfo{year}{2016}): \emph{\bibinfo{title}{Improved
  Quantum Ternary Arithmetics}}.
\newblock {\slshape \bibinfo{journal}{Quantum Information and Computation}}
  \bibinfo{volume}{16}, pp. \bibinfo{pages}{862--884},
  \doi{10.26421/QIC16.9-10-8}.
\newblock \bibinfo{note}{Also available from \arxiv{1512.03824}}.

\bibitemdeclare{article}{BocharovA2017ternaryshor}
\bibitem{BocharovA2017ternaryshor}
\bibinfo{author}{Alex \surnamestart Bocharov\surnameend},
  \bibinfo{author}{Martin \surnamestart Roetteler\surnameend} \&
  \bibinfo{author}{Krysta~M. \surnamestart Svore\surnameend}
  (\bibinfo{year}{2017}): \emph{\bibinfo{title}{Factoring with qutrits: Shor's
  algorithm on ternary and metaplectic quantum architectures}}.
\newblock {\slshape \bibinfo{journal}{Phys. Rev. A}} \bibinfo{volume}{96}, p.
  \bibinfo{pages}{012306}, \doi{10.1103/PhysRevA.96.012306}.
\newblock \bibinfo{note}{Also available from \arxiv{1605.02756}}.

\bibitemdeclare{article}{CampbellE2012tgatedistillation}
\bibitem{CampbellE2012tgatedistillation}
\bibinfo{author}{Earl~T. \surnamestart Campbell\surnameend},
  \bibinfo{author}{Hussain \surnamestart Anwar\surnameend} \&
  \bibinfo{author}{Dan~E. \surnamestart Browne\surnameend}
  (\bibinfo{year}{2012}): \emph{\bibinfo{title}{{Magic-State Distillation in
  All Prime Dimensions Using Quantum Reed-Muller Codes}}}.
\newblock {\slshape \bibinfo{journal}{Phys. Rev. X}} \bibinfo{volume}{2}, p.
  \bibinfo{pages}{041021}, \doi{10.1103/PhysRevX.2.041021}.
\newblock \bibinfo{note}{Also available from \arxiv{1205.3104}}.

\bibitemdeclare{article}{ChiY2022quditphotonicprocessor}
\bibitem{ChiY2022quditphotonicprocessor}
\bibinfo{author}{Yulin \surnamestart Chi\surnameend}, \bibinfo{author}{Jieshan
  \surnamestart Huang\surnameend}, \bibinfo{author}{Zhanchuan \surnamestart
  Zhang\surnameend}, \bibinfo{author}{Jun \surnamestart Mao\surnameend},
  \bibinfo{author}{Zinan \surnamestart Zhou\surnameend},
  \bibinfo{author}{Xiaojiong \surnamestart Chen\surnameend},
  \bibinfo{author}{Chonghao \surnamestart Zhai\surnameend},
  \bibinfo{author}{Jueming \surnamestart Bao\surnameend},
  \bibinfo{author}{Tianxiang \surnamestart Dai\surnameend},
  \bibinfo{author}{Huihong \surnamestart Yuan\surnameend},
  \bibinfo{author}{Ming \surnamestart Zhang\surnameend},
  \bibinfo{author}{Daoxin \surnamestart Dai\surnameend},
  \bibinfo{author}{Bo~\surnamestart Tang\surnameend}, \bibinfo{author}{Yan
  \surnamestart Yang\surnameend}, \bibinfo{author}{Zhihua \surnamestart
  Li\surnameend}, \bibinfo{author}{Yunhong \surnamestart Ding\surnameend},
  \bibinfo{author}{Leif~K. \surnamestart Oxenl{\o}we\surnameend},
  \bibinfo{author}{Mark~G. \surnamestart Thompson\surnameend},
  \bibinfo{author}{Jeremy~L. \surnamestart O'Brien\surnameend},
  \bibinfo{author}{Yan \surnamestart Li\surnameend}, \bibinfo{author}{Qihuang
  \surnamestart Gong\surnameend} \& \bibinfo{author}{Jianwei \surnamestart
  Wang\surnameend} (\bibinfo{year}{2022}): \emph{\bibinfo{title}{A programmable
  qudit-based quantum processor}}.
\newblock {\slshape \bibinfo{journal}{Nature Communications}}
  \bibinfo{volume}{13}(\bibinfo{number}{1}), p. \bibinfo{pages}{1166},
  \doi{10.1038/s41467-022-28767-x}.

\bibitemdeclare{article}{CuiS2015universalmetaplectic}
\bibitem{CuiS2015universalmetaplectic}
\bibinfo{author}{Shawn~X. \surnamestart Cui\surnameend} \&
  \bibinfo{author}{Zhenghan \surnamestart Wang\surnameend}
  (\bibinfo{year}{2015}): \emph{\bibinfo{title}{Universal quantum computation
  with metaplectic anyons}}.
\newblock {\slshape \bibinfo{journal}{Journal of Mathematical Physics}}
  \bibinfo{volume}{56}(\bibinfo{number}{3}), p. \bibinfo{pages}{032202},
  \doi{10.1063/1.4914941}.
\newblock \bibinfo{note}{Also available from \arxiv{1405.7778}}.

\bibitemdeclare{article}{Giles2013a}
\bibitem{Giles2013a}
\bibinfo{author}{Brett \surnamestart Giles\surnameend} \&
  \bibinfo{author}{Peter \surnamestart Selinger\surnameend}
  (\bibinfo{year}{2013}): \emph{\bibinfo{title}{{Exact synthesis of multiqubit
  Clifford+$T$ circuits}}}.
\newblock {\slshape \bibinfo{journal}{Physical Review A}}
  \bibinfo{volume}{87}(\bibinfo{number}{3}), p. \bibinfo{pages}{032332},
  \doi{10.1103/PhysRevA.87.032332}.
\newblock \bibinfo{note}{Also available from \arxiv{1212.0506}}.

\bibitemdeclare{article}{qutritGlaudell}
\bibitem{qutritGlaudell}
\bibinfo{author}{Andrew~N. \surnamestart Glaudell\surnameend},
  \bibinfo{author}{Neil~J. \surnamestart Ross\surnameend} \&
  \bibinfo{author}{Jacob~M. \surnamestart Taylor\surnameend}
  (\bibinfo{year}{2019}): \emph{\bibinfo{title}{Canonical forms for
  single-qutrit {C}lifford+{$T$} operators}}.
\newblock {\slshape \bibinfo{journal}{Annals of Physics}}
  \bibinfo{volume}{406}, pp. \bibinfo{pages}{54--70},
  \doi{10.1016/j.aop.2019.04.001}.
\newblock \bibinfo{note}{Also available from \arxiv{1803.05047}}.

\bibitemdeclare{inproceedings}{metaplectic}
\bibitem{metaplectic}
\bibinfo{author}{Andrew~N. \surnamestart Glaudell\surnameend},
  \bibinfo{author}{Neil~J. \surnamestart Ross\surnameend},
  \bibinfo{author}{John \surnamestart van~de Wetering\surnameend} \&
  \bibinfo{author}{Lia \surnamestart Yeh\surnameend} (\bibinfo{year}{2022}):
  \emph{\bibinfo{title}{Qutrit Metaplectic Gates Are a Subset of
  {C}lifford+{$T$}}}.
\newblock In \bibinfo{editor}{Fran\c{c}ois \surnamestart Le~Gall\surnameend} \&
  \bibinfo{editor}{Tomoyuki \surnamestart Morimae\surnameend}, editors:
  {\slshape \bibinfo{booktitle}{17th Conference on the Theory of Quantum
  Computation, Communication and Cryptography (TQC 2022)}}, {\slshape
  \bibinfo{series}{Leibniz International Proceedings in Informatics (LIPIcs)}}
  \bibinfo{volume}{232}, \bibinfo{publisher}{Schloss Dagstuhl --
  Leibniz-Zentrum f{\"u}r Informatik}, \bibinfo{address}{Dagstuhl, Germany},
  pp. \bibinfo{pages}{12:1--12:15}, \doi{10.4230/LIPIcs.TQC.2022.12}.
\newblock \bibinfo{note}{Also available from \arxiv{2202.09235}}.

\bibitemdeclare{mastersthesis}{greylyn}
\bibitem{greylyn}
\bibinfo{author}{Seth Evenson~Murray \surnamestart Greylyn\surnameend}
  (\bibinfo{year}{2014}): \emph{\bibinfo{title}{Generators and relations for
  the group $\mathrm{U}_4(\mathbb{Z}[1/\sqrt{2},i])$}}.
\newblock Master's thesis, \bibinfo{school}{Dalhousie University}.
\newblock \bibinfo{note}{Available from \arxiv{1408.6204}}.

\bibitemdeclare{article}{HowardM2012quditTgate}
\bibitem{HowardM2012quditTgate}
\bibinfo{author}{Mark \surnamestart Howard\surnameend} \& \bibinfo{author}{Jiri
  \surnamestart Vala\surnameend} (\bibinfo{year}{2012}):
  \emph{\bibinfo{title}{Qudit versions of the qubit $\ensuremath{\pi}/8$
  gate}}.
\newblock {\slshape \bibinfo{journal}{Phys. Rev. A}} \bibinfo{volume}{86}, p.
  \bibinfo{pages}{022316}, \doi{10.1103/PhysRevA.86.022316}.
\newblock \bibinfo{note}{Also available from \arxiv{1206.1598}}.

\bibitemdeclare{article}{HrmoP2023quditions}
\bibitem{HrmoP2023quditions}
\bibinfo{author}{Pavel \surnamestart Hrmo\surnameend},
  \bibinfo{author}{Benjamin \surnamestart Wilhelm\surnameend},
  \bibinfo{author}{Lukas \surnamestart Gerster\surnameend},
  \bibinfo{author}{Martin~W. \surnamestart van Mourik\surnameend},
  \bibinfo{author}{Marcus \surnamestart Huber\surnameend},
  \bibinfo{author}{Rainer \surnamestart Blatt\surnameend},
  \bibinfo{author}{Philipp \surnamestart Schindler\surnameend},
  \bibinfo{author}{Thomas \surnamestart Monz\surnameend} \&
  \bibinfo{author}{Martin \surnamestart Ringbauer\surnameend}
  (\bibinfo{year}{2023}): \emph{\bibinfo{title}{Native qudit entanglement in a
  trapped ion quantum processor}}.
\newblock {\slshape \bibinfo{journal}{Nature Communications}}
  \bibinfo{volume}{14}(\bibinfo{number}{1}), p. \bibinfo{pages}{2242},
  \doi{10.1038/s41467-023-37375-2}.
\newblock \bibinfo{note}{Also available from \arxiv{2206.04104}}.

\bibitemdeclare{unpublished}{KalraAR2024synthmultiqutrit}
\bibitem{KalraAR2024synthmultiqutrit}
\bibinfo{author}{Amolak~Ratan \surnamestart Kalra\surnameend},
  \bibinfo{author}{Manimugdha \surnamestart Saikia\surnameend},
  \bibinfo{author}{Dinesh \surnamestart Valluri\surnameend},
  \bibinfo{author}{Sam \surnamestart Winnick\surnameend} \&
  \bibinfo{author}{Jon \surnamestart Yard\surnameend} (\bibinfo{year}{2024}):
  \emph{\bibinfo{title}{Multi-qutrit exact synthesis}}.
\newblock \bibinfo{note}{Preprint available from \arxiv{2405.08147}}.

\bibitemdeclare{unpublished}{KalraAR2024synth1qutrit}
\bibitem{KalraAR2024synth1qutrit}
\bibinfo{author}{Amolak~Ratan \surnamestart Kalra\surnameend},
  \bibinfo{author}{Dinesh \surnamestart Valluri\surnameend} \&
  \bibinfo{author}{Michele \surnamestart Mosca\surnameend}
  (\bibinfo{year}{2024}): \emph{\bibinfo{title}{Synthesis and Arithmetic of
  Single Qutrit Circuits}}.
\newblock \bibinfo{note}{Preprint available from \arxiv{2311.08696}}.

\bibitemdeclare{article}{KasperV2022quditcoldatoms}
\bibitem{KasperV2022quditcoldatoms}
\bibinfo{author}{Valentin \surnamestart Kasper\surnameend},
  \bibinfo{author}{Daniel \surnamestart González-Cuadra\surnameend},
  \bibinfo{author}{Apoorva \surnamestart Hegde\surnameend},
  \bibinfo{author}{Andy \surnamestart Xia\surnameend},
  \bibinfo{author}{Alexandre \surnamestart Dauphin\surnameend},
  \bibinfo{author}{Felix \surnamestart Huber\surnameend},
  \bibinfo{author}{Eberhard \surnamestart Tiemann\surnameend},
  \bibinfo{author}{Maciej \surnamestart Lewenstein\surnameend},
  \bibinfo{author}{Fred \surnamestart Jendrzejewski\surnameend} \&
  \bibinfo{author}{Philipp \surnamestart Hauke\surnameend}
  (\bibinfo{year}{2021}): \emph{\bibinfo{title}{Universal quantum computation
  and quantum error correction with ultracold atomic mixtures}}.
\newblock {\slshape \bibinfo{journal}{Quantum Science and Technology}}
  \bibinfo{volume}{7}(\bibinfo{number}{1}), p. \bibinfo{pages}{015008},
  \doi{10.1088/2058-9565/ac2d39}.
\newblock \bibinfo{note}{Also available from \arxiv{2010.15923}}.

\bibitemdeclare{unpublished}{quantikz}
\bibitem{quantikz}
\bibinfo{author}{Alastair \surnamestart Kay\surnameend} (\bibinfo{year}{2018}):
  \emph{\bibinfo{title}{Tutorial on the quantikz package}}.
\newblock \bibinfo{note}{Preprint available from \arxiv{1809.03842}}.

\bibitemdeclare{unpublished}{kliuchnikov2013synthesis}
\bibitem{kliuchnikov2013synthesis}
\bibinfo{author}{Vadym \surnamestart Kliuchnikov\surnameend}
  (\bibinfo{year}{2013}): \emph{\bibinfo{title}{Synthesis of unitaries with
  {C}lifford+{$T$} circuits}}.
\newblock \bibinfo{note}{Preprint available from \arxiv{1306.3200}}.

\bibitemdeclare{book}{NC}
\bibitem{NC}
\bibinfo{author}{Michael~A. \surnamestart Nielsen\surnameend} \&
  \bibinfo{author}{Isaac~L. \surnamestart Chuang\surnameend}
  (\bibinfo{year}{2000}): \emph{\bibinfo{title}{{Quantum Computation and
  Quantum Information}}}.
\newblock \bibinfo{series}{Cambridge Series on Information and the Natural
  Sciences}, \bibinfo{publisher}{Cambridge University Press},
  \doi{10.1017/CBO9780511976667}.

\bibitemdeclare{article}{PrakashS2020qutritgolay}
\bibitem{PrakashS2020qutritgolay}
\bibinfo{author}{Shiroman \surnamestart Prakash\surnameend}
  (\bibinfo{year}{2020}): \emph{\bibinfo{title}{Magic state distillation with
  the ternary Golay code}}.
\newblock {\slshape \bibinfo{journal}{Proceedings of the Royal Society A:
  Mathematical, Physical and Engineering Sciences}}
  \bibinfo{volume}{476}(\bibinfo{number}{2241}), p. \bibinfo{pages}{20200187},
  \doi{10.1098/rspa.2020.0187}.
\newblock \bibinfo{note}{Also available from \arxiv{2003.02717}}.

\bibitemdeclare{article}{prakashqutrit}
\bibitem{prakashqutrit}
\bibinfo{author}{Shiroman \surnamestart Prakash\surnameend},
  \bibinfo{author}{Akalank \surnamestart Jain\surnameend},
  \bibinfo{author}{Bhakti \surnamestart Kapur\surnameend} \&
  \bibinfo{author}{Shubangi \surnamestart Seth\surnameend}
  (\bibinfo{year}{2018}): \emph{\bibinfo{title}{Normal form for single-qutrit
  {C}lifford+{$T$} operators and synthesis of single-qutrit gates}}.
\newblock {\slshape \bibinfo{journal}{Physical Review A}} \bibinfo{volume}{98},
  p. \bibinfo{pages}{032304}, \doi{10.1103/PhysRevA.98.032304}.
\newblock \bibinfo{note}{Available from \arxiv{1803.05047}}.

\bibitemdeclare{article}{RingbauerM2018ququartphotonic}
\bibitem{RingbauerM2018ququartphotonic}
\bibinfo{author}{Martin \surnamestart Ringbauer\surnameend},
  \bibinfo{author}{Thomas~R. \surnamestart Bromley\surnameend},
  \bibinfo{author}{Marco \surnamestart Cianciaruso\surnameend},
  \bibinfo{author}{Ludovico \surnamestart Lami\surnameend},
  \bibinfo{author}{W.~Y.~Sarah \surnamestart Lau\surnameend},
  \bibinfo{author}{Gerardo \surnamestart Adesso\surnameend},
  \bibinfo{author}{Andrew~G. \surnamestart White\surnameend},
  \bibinfo{author}{Alessandro \surnamestart Fedrizzi\surnameend} \&
  \bibinfo{author}{Marco \surnamestart Piani\surnameend}
  (\bibinfo{year}{2018}): \emph{\bibinfo{title}{Certification and
  Quantification of Multilevel Quantum Coherence}}.
\newblock {\slshape \bibinfo{journal}{Phys. Rev. X}} \bibinfo{volume}{8}, p.
  \bibinfo{pages}{041007}, \doi{10.1103/PhysRevX.8.041007}.
\newblock \bibinfo{note}{Also available from \arxiv{1707.05282}}.

\bibitemdeclare{article}{qupitZH}
\bibitem{qupitZH}
\bibinfo{author}{Patrick \surnamestart Roy\surnameend}, \bibinfo{author}{John
  \surnamestart van~de Wetering\surnameend} \& \bibinfo{author}{Lia
  \surnamestart Yeh\surnameend} (\bibinfo{year}{2023}):
  \emph{\bibinfo{title}{The Qudit {ZH}-Calculus: Generalised Toffoli+Hadamard
  and Universality}}.
\newblock {\slshape \bibinfo{journal}{Electronic Proceedings in Theoretical
  Computer Science}} \bibinfo{volume}{384}, pp. \bibinfo{pages}{142--170},
  \doi{10.4204/eptcs.384.9}.
\newblock \bibinfo{note}{Also available from \arxiv{2307.10095}}.

\bibitemdeclare{unpublished}{selinger_odd}
\bibitem{selinger_odd}
\bibinfo{author}{Peter \surnamestart Selinger\surnameend}
  (\bibinfo{year}{2016}): \emph{\bibinfo{title}{Reversible $k$-valued logic
  circuits are finitely generated for odd $k$}}.
\newblock \bibinfo{note}{Available from \arxiv{1604.01646}}.

\bibitemdeclare{book}{cyclo}
\bibitem{cyclo}
\bibinfo{author}{L.~C. \surnamestart Washington\surnameend}
  (\bibinfo{year}{1982}): \emph{\bibinfo{title}{Introduction to Cyclotomic
  Fields}}.
\newblock \bibinfo{publisher}{Springer New York}, \bibinfo{address}{NY},
  \doi{10.1007/978-1-4612-1934-7}.

\bibitemdeclare{article}{WatsonF2015quditcolorcodes}
\bibitem{WatsonF2015quditcolorcodes}
\bibinfo{author}{Fern H.~E. \surnamestart Watson\surnameend},
  \bibinfo{author}{Earl~T. \surnamestart Campbell\surnameend},
  \bibinfo{author}{Hussain \surnamestart Anwar\surnameend} \&
  \bibinfo{author}{Dan~E. \surnamestart Browne\surnameend}
  (\bibinfo{year}{2015}): \emph{\bibinfo{title}{Qudit color codes and gauge
  color codes in all spatial dimensions}}.
\newblock {\slshape \bibinfo{journal}{Phys. Rev. A}} \bibinfo{volume}{92}, p.
  \bibinfo{pages}{022312}, \doi{10.1103/PhysRevA.92.022312}.
\newblock \bibinfo{note}{Also available from \arxiv{1503.08800}}.

\bibitemdeclare{article}{WeggemansJ2022quditRydberg}
\bibitem{WeggemansJ2022quditRydberg}
\bibinfo{author}{Jordi~R. \surnamestart Weggemans\surnameend},
  \bibinfo{author}{Alexander \surnamestart Urech\surnameend},
  \bibinfo{author}{Alexander \surnamestart Rausch\surnameend},
  \bibinfo{author}{Robert \surnamestart Spreeuw\surnameend},
  \bibinfo{author}{Richard \surnamestart Boucherie\surnameend},
  \bibinfo{author}{Florian \surnamestart Schreck\surnameend},
  \bibinfo{author}{Kareljan \surnamestart Schoutens\surnameend},
  \bibinfo{author}{Ji{\v{r}}{\'{\i}} \surnamestart
  Min{\'{a}}{\v{r}}\surnameend} \& \bibinfo{author}{Florian \surnamestart
  Speelman\surnameend} (\bibinfo{year}{2022}): \emph{\bibinfo{title}{Solving
  correlation clustering with {QAOA} and a Rydberg qudit system: a full-stack
  approach}}.
\newblock {\slshape \bibinfo{journal}{Quantum}} \bibinfo{volume}{6}, p.
  \bibinfo{pages}{687}, \doi{10.22331/q-2022-04-13-687}.
\newblock \bibinfo{note}{Also available from \arxiv{2106.11672v3}}.

\bibitemdeclare{inproceedings}{liaRC}
\bibitem{liaRC}
\bibinfo{author}{Lia \surnamestart Yeh\surnameend} \& \bibinfo{author}{John
  \surnamestart van~de Wetering\surnameend} (\bibinfo{year}{2022}):
  \emph{\bibinfo{title}{Constructing all qutrit controlled {C}lifford+{$T$}
  gates in {C}lifford+{$T$}}}.
\newblock In \bibinfo{editor}{Claudio~Antares \surnamestart Mezzina\surnameend}
  \& \bibinfo{editor}{Krzysztof \surnamestart Podlaski\surnameend}, editors:
  {\slshape \bibinfo{booktitle}{Reversible Computation}},
  \bibinfo{publisher}{Springer International Publishing},
  \bibinfo{address}{Cham}, pp. \bibinfo{pages}{28--50},
  \doi{10.1007/978-3-031-09005-9\_3}.
\newblock \bibinfo{note}{Also available from \arxiv{2204.00552}}.

\bibitemdeclare{article}{YurtalanM2020Walsh-Hadamard}
\bibitem{YurtalanM2020Walsh-Hadamard}
\bibinfo{author}{M.~A. \surnamestart Yurtalan\surnameend},
  \bibinfo{author}{J.~\surnamestart Shi\surnameend},
  \bibinfo{author}{M.~\surnamestart Kononenko\surnameend},
  \bibinfo{author}{A.~\surnamestart Lupascu\surnameend} \&
  \bibinfo{author}{S.~\surnamestart Ashhab\surnameend} (\bibinfo{year}{2020}):
  \emph{\bibinfo{title}{Implementation of a Walsh-Hadamard Gate in a
  Superconducting Qutrit}}.
\newblock {\slshape \bibinfo{journal}{Phys. Rev. Lett.}} \bibinfo{volume}{125},
  p. \bibinfo{pages}{180504}, \doi{10.1103/PhysRevLett.125.180504}.
\newblock \bibinfo{note}{Also available from \arxiv{2003.04879}}.

\bibitemdeclare{inproceedings}{ZiW2023optsynthmultictrlqudit}
\bibitem{ZiW2023optsynthmultictrlqudit}
\bibinfo{author}{Wei \surnamestart Zi\surnameend}, \bibinfo{author}{Qian
  \surnamestart Li\surnameend} \& \bibinfo{author}{Xiaoming \surnamestart
  Sun\surnameend} (\bibinfo{year}{2023}): \emph{\bibinfo{title}{Optimal
  Synthesis of Multi-Controlled Qudit Gates}}.
\newblock In: {\slshape \bibinfo{booktitle}{2023 60th ACM/IEEE Design
  Automation Conference (DAC)}}, pp. \bibinfo{pages}{1--6},
  \doi{10.1109/DAC56929.2023.10247925}.
\newblock \bibinfo{note}{Also available from \arxiv{2303.12979}}.

\end{thebibliography}

\newpage

\appendix

\section{Circuit Constructions}
\label{app:decompositions}

In this appendix, we show that the Clifford-cyclotomic gate set
$\gatesetk{k}$ is equivalent to the Clifford+$T_k$ gate set when
$k\geq 2$, we give a construction of the $CX$ gate over the $\s{X,
  CCX, H}$ gate set, and we provide a proof of
\cref{prop:circuits}. In addition, we show that the matrices
$(-1)_{[x]}$, $(\omega_k)_{[x]}$, $X_{[x_1,x_2]}$, and
$H_{[x_1,x_2,x_3]}$ can be represented by circuits over the
$\gatesetk{k}$ gate set using at most $k$ borrowed ancillae. The
constructions in this appendix are exact (i.e., not up to a global or
relative phase). Implementations of our constructions, for a fixed
number of controls, are available at
\url{https://github.com/lia-approves/qutrit-Clifford-cyclotomic}.

\subsection{Gate Set Equivalences}
\label{ssec:equivs}

Recall from \cref{sec:intro} that the qutrit Toffoli (or $CCX$) gate 
acts on computational basis states as
\[
  \ket{x,y,z} \mapsto \ket{x,y,z+xy},
\]
where the arithmetic operations are performed modulo 3. In higher 
prime dimension $d$, the Toffoli gate is defined similarly, except 
that the arithmetic operations are performed modulo $d$. The 
Toffoli gate can be represented in the qupit ZH-calculus 
\cite{qupitZH} as below.
\begin{equation}
  \label{eq:qupittof}
  \tikzfig{zh-tof}
\end{equation}
In \cref{eq:qupittof}, $\Lambda$ denotes the following type of 
control: if $U$ is a unitary and $\ket{c}$ and $\ket{t}$ are 
computational basis states, then 
$\Lambda(U)\ket{c}\ket{t} = \ket{c}\otimes (U^c\ket{t})$. In 
particular, $\Lambda(X)$ is the $CX$ gate and 
$\Lambda(\Lambda(X))=\Lambda(CX)$ is the $CCX$ gate.

We now recall the definition of the \textbf{$\ket{0}$-controlled $X$}
gate, which applies an $X$ gate to its target if and only if its control
is in the state $\ket{0}$ \cite{qupitZH}.

\begin{definition}
  Let $d$ be a prime. The qudit \textbf{$\ket{0}$-controlled $X$} gate
  acts on computational basis states as
  \[
      \ket{c, t} \ \mapsto\ 
       \begin{cases}
          \ket{c,t+1} \ &\text{if}\ c = 0, \mbox{ and}\\
          \ket{c,t} \ &\text{otherwise,}
      \end{cases}
  \]
  where arithmetic is performed modulo $d$.
\end{definition}

Remarkably, when $d$ is a prime greater than 2, the $X$ gate and the
$\ket{0}$-controlled $X$ gate suffices to generate all of the $d$-ary
classical reversible gates \cite{qupitZH}. Moreover, as was shown in
\cite{selinger_odd,liaRC}, when $d=3$, no ancillary qutrits are needed
for this purpose. In contrast, there is no collection of reversible
one and two-qubit gates that suffices to generate all of the binary
reversible gates.

\begin{theorem}[\cite{liaRC}, Theorem~2]
  \label{thm:daryrev}
  Any ternary classical reversible function $f:\s{0,1,2}^n\to
  \s{0,1,2}^n$ can be represented by an ancilla-free circuit of $X$
  and $\ket{0}$-controlled $X$ gates.
\end{theorem}

Here, we only need multiply-controlled Toffoli gates, which can be
built with a gate count linear in the number of controls, as in
\cite{qupitZH,ZiW2023optsynthmultictrlqudit}. The constructions of
\cite{qupitZH,ZiW2023optsynthmultictrlqudit} use no more borrowed
ancillae than there are controls. They can be made into ancilla-free
constructions by building Toffoli gates with $n/2$ controls using at
most $n/2$ borrowed ancillae. Following \cite{liaRC}, one can then
combine six of these Toffoli gates with $n/2$ controls to construct a
Toffoli gate with $n-1$ controls, and then combine 3 of these Toffoli
gates with $n-1$ controls to add the final control.

We now show that the $CX$ gate can be represented by a circuit over
$\s{X, CCX, H}$.

\begin{lemma}
  \label{lem:constructCX}
  The gate sets $\s{X, CX, CCX, H}$ and $\s{X, CCX, H}$ are equivalent
  up to a single borrowed ancilla.
\end{lemma}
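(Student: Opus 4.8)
The plan is as follows. Since $\s{X, CCX, H}\subseteq\s{X, CX, CCX, H}$, one direction of the equivalence is immediate and uses no ancilla, so the entire content of the lemma is to build a single $CX$ gate out of $X$ and $CCX$ using one borrowed ancilla (the $H$ gate will play no role). Write $c$ for the control wire of the $CX$ gate, $t$ for its target, and let $b$ be a borrowed ancilla qutrit in an arbitrary state. First I would construct a three-qutrit gadget whose action on computational basis states is $\ket{c}\ket{t}\ket{b}\mapsto\ket{c}\ket{t+c}\ket{b}$, with all arithmetic modulo $3$.

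The gadget is the following short sequence. Apply $CCX^{\dagger}$ — that is, the Toffoli gate twice, since $CCX$ has order $3$ and hence $CCX^{\dagger}=CCX^2$ — with control wires $c,b$ and target $t$; this sends $t\mapsto t-cb$. Next apply $X$ to $b$. Then apply $CCX$ with control wires $c,b$ and target $t$; because the ancilla wire now carries $b+1$, this sends $t\mapsto(t-cb)+c(b+1)=t+c$. Finally apply $X$ to $b$ twice more, which restores $b\mapsto b+3=b$. Thus the $b$-dependent contributions to the target cancel and the ancilla is returned to its initial value.

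To finish, I would observe that the gadget acts on every computational basis state exactly as $CX_{c,t}\otimes I_b$ does, and since two linear maps that agree on a basis are equal, the gadget implements $CX$ on the pair $(c,t)$ while leaving $b$ untouched; in particular $b$ is genuinely borrowed, since the induced operation is the identity on $b$ for every ancilla state, even an entangled one. Replacing each $CX$ in a circuit over $\s{X, CX, CCX, H}$ by this gadget — reusing one borrowed ancilla throughout — then yields an equivalent circuit over $\s{X, CCX, H}$, which together with the trivial reverse inclusion establishes equivalence up to a single borrowed ancilla.

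The only delicate point — and the one that makes this a legitimate \emph{borrowed}-ancilla construction rather than a fresh-$\ket{1}$ construction — is exactly the cancellation: one must check that the coefficient of $cb$ in the target truly vanishes, so that the net operation on $c$ and $t$ is independent of the ancilla's state, and that the $X$ gates are arranged so the ancilla returns to its original value modulo $3$. Everything else is routine bookkeeping.
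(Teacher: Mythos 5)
Your construction is correct: the sequence $CCX^{\dagger}_{c,b;t}$, $X_b$, $CCX_{c,b;t}$, $X_b^2$ sends $\ket{c,t,b}\mapsto\ket{c,\,t-cb+c(b+1),\,b+3}=\ket{c,t+c,b}$ on every basis state, hence equals $CX_{c,t}\otimes I_b$ as a unitary, so the ancilla is genuinely borrowed. This is essentially the same approach as the paper, whose proof likewise just exhibits an explicit single-borrowed-ancilla circuit for $CX$ over $\s{X, CCX, H}$.
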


\begin{proof}
  The circuit below represents the $CX$ gate using a single borrowed
  ancilla.
  \[
      \tikzfig{cx-circ-decomp}
  \]
\end{proof}

\begin{proposition}
  \label{prop:construct-czerox}
  Let $C_{\ket{0}}X$ denote the qutrit $\ket{0}$-controlled $X$
  gate. Then the gate sets $\s{X, CCX, H}$ and $\s{X,C_{\ket{0}}X,H}$
  are equivalent up to a single borrowed ancilla.
\end{proposition}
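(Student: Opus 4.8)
The plan is to establish the equivalence generator-by-generator. Since $H$ lies in both gate sets, it suffices to prove two things: (i) $C_{\ket{0}}X$ can be represented over $\s{X, CCX, H}$ using at most one borrowed ancilla, and (ii) each of $X$ and $CCX$ can be represented over $\s{C_{\ket{0}}X, H}$ using at most one borrowed ancilla. Applying each of these gadgets to the gates of an arbitrary circuit over the respective set -- and letting every gadget borrow and then return the \emph{same} ancilla -- then yields the claimed equivalence ``up to a single borrowed ancilla''.

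For direction (i), I would first use \cref{lem:constructCX} to pass to the larger gate set $\s{X, CX, CCX, H}$ at the cost of a borrowed ancilla. The key arithmetic fact is that, modulo $3$, one has $c^2 = 0$ when $c=0$ and $c^2 = 1$ when $c\in\s{1,2}$; hence the predicate ``$c=0$'' equals $1 - c^2$ in $\Z/(3)$, and $C_{\ket{0}}X$ acts on a target $t$ by $t \mapsto t + 1 - c^2$. The term $+1$ is a bare $X$ gate, and $t \mapsto t - c^2$ is a doubly-controlled $X$ both of whose controls carry the value $c$; I would realize it by duplicating $c$ onto the ancilla, applying a $CCX$, and uncomputing the duplication. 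Because the ancilla is borrowed (in some state $a$) rather than fresh, the copy step leaves it in state $a - c$ instead of $-c$, so the $CCX$ produces a spurious mixed term $ca$; a second $CCX$ pass, performed after the ancilla has been restored to $a$, cancels this term and leaves exactly $t \mapsto t - c^2$. (Working instead with $C_{\ket{0}}X^{2} = C_{\ket{0}}X^{-1}$ and squaring the result slightly streamlines the signs, since $C_{\ket{0}}X^{3}$ is the identity.) The net gadget is a short circuit of $X$, $CX$, and $CCX$ gates, which I would draw explicitly, in the style of the rest of this appendix.

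For direction (ii), the gates $X$ and $CCX$ are ternary classical reversible functions, so I would invoke \cref{thm:daryrev}: $CCX$ is realized by an ancilla-free circuit of $C_{\ket{0}}X$ gates, and the single-qutrit $X$ is realized with one borrowed ancilla by applying the same theorem to the two-qutrit reversible function $(i,a)\mapsto(i+1,a)$. Together with $H$, this gives $\s{X, CCX, H}$ over $\s{C_{\ket{0}}X, H}$ up to one borrowed ancilla, completing the equivalence.

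I expect the main obstacle to be the ancilla accounting in direction (i). The naive route -- \cref{lem:constructCX} for the $CX$ gates plus the duplicate/uncompute gadget -- wants two borrowed qutrits, since the duplication ancilla is continuously occupied precisely while the $CX$ subcircuits would need their own scratch qutrit. Bringing this down to a single borrowed ancilla is the real work: it requires either a careful reschedule so that the scratch qutrit of the $CX$ subcircuits is always free, or, more likely, a direct circuit for $C_{\ket{0}}X$ over $\s{X, CCX, H}$ that uses $H$ (rather than a separate $CX$ gadget) to supply the third register needed to detect ``$c=0$''. Producing that explicit circuit, and checking that it restores the borrowed ancilla on every computational-basis input, is where the bulk of the effort goes.
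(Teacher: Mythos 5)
Your proposal follows essentially the same route as the paper: the same arithmetic observation that $c^2$ computes the predicate ``$c\neq 0$'' over $\Z/(3)$ (so that $X\cdot(C_{\neg 0}X)^{-1}$ gives $t\mapsto t+1-c^2$, i.e.\ $C_{\ket{0}}X$), the same duplicate/$CCX$/uncompute gadget with a correction pass for the spurious cross term on a borrowed ancilla, and the same appeal to \cref{thm:daryrev} for the converse direction. The one point you leave open---the apparent need for a second scratch qutrit when the $CX$ gates of the gadget are themselves expanded via \cref{lem:constructCX}---resolves by your first suggested route: those $CX$ gates act only on the control wire and the borrowed ancilla, so the target wire of the $C_{\ket{0}}X$ is idle at exactly those moments and can serve as the borrowed qutrit for \cref{lem:constructCX}, keeping the total overhead to a single borrowed ancilla as claimed.
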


\begin{proof}
  The gates $X$ and $CCX$ are ternary classical reversible
  functions. Hence, by \cref{thm:daryrev}, they can both be
  represented by a circuit over $\s{X,C_{\ket{0}}X,H}$. Thus, every
  matrix that can be represented by a circuit over $\s{X, CCX, H}$ can
  be represented by a circuit over $\s{X,C_{\ket{0}}X,H}$. Conversely,
  we have
  \begin{equation}\label{eq:zcx-circ-decomp-bor-pf}
      \tikzfig{zcx-circ-decomp-bor-pf}
  \end{equation}
  where $x,y,z\in \{0,1,2\}$ are input qutrit computational basis
  states and the basis state on a wire is updated whenever it is
  changed by the circuit. The $\neg 0$ on the left-hand side of
  \cref{eq:zcx-circ-decomp-bor-pf} indicates that the $X$ gate is
  applied when the control is not in the state $\ket{0}$. To see that
  \cref{eq:zcx-circ-decomp-bor-pf} holds, note that $x^2=1$ for $x\neq
  0$ so that $z+x^2$ is indeed the desired state. Moreover, we have
  \begin{equation}
    \label{eq:zcx-in-circ}
    \tikzfig{zcx-in-circ}
  \end{equation}
  Therefore, multiplying the inverse of the circuit on the right-hand
  side of \cref{eq:zcx-circ-decomp-bor-pf} by an $X$ gate yields a
  representation of the $C_{\ket{0}}X$ over the gate $\s{X, CCX, H}$
  by \cref{lem:constructCX}. Hence, every matrix that can be
  represented by a circuit over $\s{X,C_{\ket{0}}X, H}$ can be
  represented by a circuit over $\s{X, CCX, H}$ using a single
  borrowed ancilla.
\end{proof}

\begin{remark}
  The construction in \cref{prop:construct-czerox} can be explained
  (and, in fact, was found) using the qupit ZH-calculus
  \cite{qupitZH}. In the qupit ZH-calculus, we have
  \begin{equation}
    \label{eq:zcx-in-zh}
    \tikzfig{zcx-in-zh}
  \end{equation}
  where the $d-1$ label indicates there are $d-1$ number of wires in
  parallel. We then get the following construction of the $\ket{\neg
    0}$-controlled $X$ gate:
  \begin{equation}
    \label{eq:postsel}
    \tikzfig{zh-pf-tof-to-zcx}
  \end{equation}
  The post-selected circuit in \cref{eq:postsel} can be made
  deterministic by adding a $CX^\dagger$ gate for uncomputation, which
  yields a construction requiring a fresh ancilla:
  \begin{equation}
      \tikzfig{zh-pf-tof-to-zcx-det}
  \end{equation}
  The construction is then modified in order to work with a borrowed
  ancilla, which yields the circuit in
  \cref{eq:zcx-circ-decomp-bor-pf}.
\end{remark}    

By \cref{lem:constructCX,prop:construct-czerox}, the gate set
$\gatesetk{1}$ is equivalent (up to a borrowed ancilla) to the gate
set consisting of the $X$ gate, the $\ket{0}$-controlled $X$ gate, and
the Hadamard gate. Hence, by \cref{thm:daryrev}, any ternary classical
reversible function can be represented by a circuit over
$\gatesetk{1}$ using at most one borrowed ancilla.

We now show that, when $k\geq 2$, the Clifford-cyclotomic gate set of
degree $3^k$ is equivalent, up to a borrowed ancilla, to the
Clifford+$T_k$ gate set. We take advantage of some constructions from
\cite{BocharovA2017ternaryshor} (see, in particular, Figure~6 in
\cite{BocharovA2017ternaryshor}).

\begin{lemma}
  \label{lem:ZCX-CliffT}
  We have:
  \[
    \tikzfig{ZCX-CliffT}
  \]
\end{lemma}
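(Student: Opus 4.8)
The statement is an equality of two circuit diagrams, so the natural strategy is to check that the two sides implement the \emph{same} unitary matrix on the nose --- not merely up to a scalar, since the paper insists that all constructions in this appendix be exact --- by evaluating both sides on the computational basis.

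I would begin by writing down the matrix of the gate on the left-hand side of \cref{lem:ZCX-CliffT}: it is a (two-qutrit) gate whose action is controlled in the computational basis, so its matrix is immediate. For the right-hand side I would expand the circuit gate by gate. The only non-Clifford ingredient is $T_k=\mathrm{diag}(1,\omega_k,\omega_k^2)$, which is diagonal, and conjugating a diagonal gate through the surrounding Clifford gates $H$, $S$, and $CX$ permutes and mixes the computational basis in a tractable way. I would therefore push all the $T_k$'s together, record which basis state accumulates which power of $\omega_k$, and check that the resulting phase pattern agrees with that of the left-hand side. The essential point to verify is that on \emph{every} branch --- in particular those branches that must come back unchanged --- no residual relative phase is left behind.

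The main obstacle is phase bookkeeping. Because $H$ carries the scalar $-\omega^2/\lambda$ (see \cref{def:matrices}), each Hadamard in the right-hand circuit contributes a cube-root-of-unity prefactor, and one must check that all of these (together with whatever phase the $T_k$'s contribute) telescope to $1$. To keep this under control I would (i) record the scalar prefactor of each $H$ separately from its permutation-like $3\times 3$ part, (ii) reduce, when convenient, to the case $k=2$, where the identity is the phase-corrected form of the construction in Figure~6 of \cite{BocharovA2017ternaryshor} that we are adapting, and (iii) observe that the derivation uses $T_k$ only through the defining relation $T_k=\mathrm{diag}(1,\omega_k,\omega_k^2)$, so that the verification carries over verbatim for every $k\ge 2$. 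If the direct expansion turns out to be unwieldy, the identity can alternatively be verified in the qupit ZH-calculus \cite{qupitZH}, in the spirit of the diagrammatic derivations used elsewhere in this appendix.
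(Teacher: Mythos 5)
The paper gives no written proof of this lemma at all---it is asserted with a pointer to Figure~6 of \cite{BocharovA2017ternaryshor} and to the companion code---and your plan of verifying the identity by direct evaluation on the computational basis, pushing the diagonal $T_k$'s through the Cliffords and explicitly tracking the scalar prefactor $-\omega^2/\lambda$ carried by each $H$, is exactly the verification the paper implicitly relies on. The only caveat is that your proposal remains a plan rather than an executed computation, but since the identity is a finite check on a $9\times 9$ matrix, carrying it out is mechanical and your outline contains all the ingredients needed to do so.
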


\begin{lemma}
  \label{lem:sgate}
  We have:
  \[
    \tikzfig{S-CliffT}
  \]
\end{lemma}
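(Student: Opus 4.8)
\cref{lem:sgate} is a circuit identity asserting that the single-qutrit phase gate $S=\mathrm{diag}(1,1,\omega)$ is implemented, up to a borrowed ancilla, by the depicted circuit over $\gatesetk{k}$. The plan is to prove it by \emph{direct verification}: I would compute the action of the right-hand side on a generic input $\ket{a}\otimes\ket{b}$, with $\ket{a}$ ranging over the computational basis of the target qutrit and $\ket{b}$ over the computational basis of the borrowed ancilla, and check that the output is $S\ket{a}\otimes\ket{b}$, i.e.\ $\ket{a}\otimes\ket{b}$ scaled by $\omega$ precisely when $a=2$.

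The circuit has the familiar ``compute--phase--uncompute'' shape. A short prefix of $CX$ (and possibly $CCX$) gates writes a low-degree polynomial in $a$ into the ancilla; because these gates permute the computational basis, this prefix is reversible and introduces no phase. A diagonal gate is then applied to accumulate the phase. The two facts that make this work are: (i) $Z=HXH^{\dagger}=\rotk{k}^{\,3^{k-1}}$ lies in $\gatesetk{k}$, so powers of $Z$ are available; and (ii) conjugating the target of $CX$ by $H$ yields the diagonal gate $\ket{i}\ket{j}\mapsto\omega^{ij}\ket{i}\ket{j}$, which is a routine discrete-Fourier computation using $\lambda^{\dagger}\lambda=3$ and $\omega^{3}=1$. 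Feeding the two registers in the state $\ket{a}\ket{a}$ (arranged by the copying prefix) through this diagonal gate contributes the phase $\omega^{a^{2}}$, which on the target is the gate $\mathrm{diag}(1,\omega,\omega)$; combining it with a suitable power of $Z$ then yields $S$, since $Z^{-1}\cdot\mathrm{diag}(1,\omega,\omega)=\mathrm{diag}(1,1,\omega^{2})=S^{-1}$ (and one inverts the whole circuit if $S$ rather than $S^{-1}$ is wanted). Finally the inverse $CX$/$CCX$ gates uncompute and return the ancilla to $\ket{b}$. Alternatively, and more in the spirit of the surrounding text, one can simply transcribe the construction of Bocharov et al.\ (Figure~6 of \cite{BocharovA2017ternaryshor}) and check that each of its gates lies in $\gatesetk{k}$.

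The one delicate point is exactness. The identity must hold on the nose, not merely up to a global or relative phase, so one must verify that the normalization factor $-\omega^{2}/\lambda$ carried by each $H$ cancels across the conjugation $HXH^{\dagger}$ (equivalently, one uses $Z=\rotk{k}^{\,3^{k-1}}$, which has no such factor), and that the borrowed ancilla is restored for \emph{every} input value $b$, not just for $b=0$. Neither is hard, but both are error-prone, so I would carry out the evaluation explicitly for each of the three basis values $a\in\{0,1,2\}$, keeping $b$ symbolic, and track the accumulated power of $\omega$ modulo $3$ throughout.
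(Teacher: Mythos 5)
The paper offers no proof of this lemma at all: it is a bare circuit identity, justified implicitly by direct matrix verification and by reference to the constructions in Figure~6 of \cite{BocharovA2017ternaryshor}. Your plan --- evaluate the depicted circuit on each basis state $\ket{a}\ket{b}$ with $b$ kept symbolic, track the accumulated power of $\omega$, and confirm exactness of the scalar on $H$ --- is therefore exactly the right (and only) way to establish it, and the two subtleties you flag are the correct ones to worry about.

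There is, however, a concrete flaw in the mechanism you sketch. When the ancilla is borrowed, the $CX$ prefix produces $\ket{a}\ket{a+b}$, not $\ket{a}\ket{a}$, so the subsequent $CZ$ contributes $\omega^{a(a+b)}=\omega^{a^{2}}\omega^{ab}$. The cross term $\omega^{ab}$ is not a global or relative phase on the target alone; left uncancelled it turns your circuit into an entangling diagonal operator rather than $\mathrm{diag}(1,\omega,\omega)\otimes I$. The compute--phase--uncompute skeleton therefore needs an explicit correcting $CZ^{-1}$ between the original registers (or an equivalent fix) to kill $\omega^{ab}$ --- exactly the kind of error your ``keep $b$ symbolic'' check is designed to catch, but as written the sketch does not yet constitute a proof. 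Note also that the way the lemma is invoked in \cref{prop:kgeq2equiv}, together with $T_2=T_k^{3^{k-2}}$, indicates that the circuit in the figure is built from $T_2$ rather than from $H$-conjugated $CX$ gates alone, so you should expect the actual diagram to differ from your reconstruction; verifying the lemma means verifying that specific circuit, gate by gate, on all nine inputs $\ket{a}\ket{b}$.
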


\begin{proposition}
  \label{prop:kgeq2equiv}
  When $k\geq 2$, the Clifford-cyclotomic gate set $\gatesetk{k}$ is
  equivalent to the Clifford+$T_k$ gate set up to a single borrowed
  ancilla.
\end{proposition}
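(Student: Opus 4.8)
The plan is to prove the two inclusions separately, and for each to reduce to the individual generators: since a borrowed ancilla can be returned and re-borrowed between gadgets, it suffices to show that every generator of one gate set is realized by a circuit over the other using at most one borrowed ancilla. Both $\gatesetk{k}$ and the Clifford+$\rotk{k}$ gate set contain $H$, $CX$, and $\rotk{k}$, so the only generators requiring work are the phase gate $S$ (present on the Clifford side only) and the classical gates $X$ and $CCX$ (present on the Clifford-cyclotomic side only). Throughout, the hypothesis $k\ge 2$ enters only through the identity $\rotk{2}=\rotk{k}^{3^{k-2}}$, which turns any construction phrased in terms of the qutrit $T$ gate $\rotk{2}$ into one over $\gatesetk{k}$, and dually any construction over Clifford+$\rotk{2}$ into one over Clifford+$\rotk{k}$.

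For the inclusion of $\gatesetk{k}$-circuits into Clifford+$\rotk{k}$-circuits, I would not decompose $X$ and $CCX$ by hand but invoke the reversible-circuit results already in place: by \cref{lem:ZCX-CliffT} the $\ket{0}$-controlled $X$ gate is a circuit over Clifford+$\rotk{2}$, hence over Clifford+$\rotk{k}$, and by \cref{thm:daryrev} every ternary classical reversible function --- in particular $X$ and $CCX$ --- is an ancilla-free circuit of $\ket{0}$-controlled $X$ (and $H$) gates. Substituting the circuit of \cref{lem:ZCX-CliffT} for each occurrence of the $\ket{0}$-controlled $X$ gate yields exact circuits for $X$ and $CCX$ over Clifford+$\rotk{k}$ using at most one borrowed ancilla; together with the shared generators $H$, $CX$, $\rotk{k}$, this gives one direction. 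For the reverse inclusion, the only nontrivial generator is $S$, which I would realize via \cref{lem:sgate} (adapted from Figure~6 of \cite{BocharovA2017ternaryshor}): that lemma exhibits $S$ as a short explicit circuit over $\gatesetk{k}$ augmented with $\rotk{2}$, using one borrowed ancilla, and since $\rotk{2}=\rotk{k}^{3^{k-2}}$ this is already a circuit over $\gatesetk{k}$. Combined with $H$, $CX$, $\rotk{k}\in\gatesetk{k}$, this establishes the equivalence.

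I expect \cref{lem:sgate} --- the realization of $S$ over $\gatesetk{k}$ --- to be the main obstacle. The other reductions are essentially bookkeeping built on \cref{thm:daryrev,lem:constructCX,prop:construct-czerox} and on the elementary observation that $\rotk{2}$ is a power of $\rotk{k}$, whereas this step produces a Clifford gate out of the genuinely non-Clifford gate $\rotk{2}$ and so cannot be a pure generation argument: one must exhibit a concrete circuit identity and check both that it computes $S$ on the primary wire and that the borrowed ancilla is restored to its initial state for every input. Verifying \cref{lem:ZCX-CliffT} is of the same nature but lighter, the target being a permutation; in both cases the verification reduces to a finite computation --- on small matrices, or in the qutrit ZH-calculus of \cite{qupitZH} --- rather than a conceptual hurdle.
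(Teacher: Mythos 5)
Your proposal is correct and follows essentially the same route as the paper's proof: both reduce the equivalence to representing $S$ over $\gatesetk{k}$ via \cref{lem:sgate} together with $T_2=T_k^{3^{k-2}}$, and representing $CCX$ over Clifford+$T_k$ via \cref{lem:ZCX-CliffT} and \cref{thm:daryrev}. The only immaterial difference is that the paper handles $X$ directly through the identity $X=H^\dagger T_2^3 H$ rather than folding it into the reversible-circuit argument.
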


\begin{proof}
  Recall that $\gatesetk{k}=\s{X, CX, CCX, H, \rotk{k}}$ and that
  Clifford+$\rotk{k}=\s{H, S, CX, \rotk{k}}$. To prove the
  proposition, we therefore need to show that the $S$ gate can be
  represented by a circuit over $\gatesetk{k}$ and that the $X$ and
  $CCX$ gates can be represented by Clifford+$\rotk{k}$ circuits. That
  the $S$ gate can be represented by a circuit over $\gatesetk{k}$
  follows from \cref{lem:sgate} and the fact that $T_2 =
  T_k^{3^{k-2}}$. That the $X$ can be represented by a
  Clifford+$\rotk{k}$ circuit simply follows from the fact that $X =
  H^\dagger T_2^3 H$. That the $CCX$ gate can be represented by a
  Clifford+$\rotk{k}$ circuit follows from \cref{lem:ZCX-CliffT} and
  \cref{thm:daryrev}.
\end{proof}

The propositions above show that there is some flexibility in the
definition of Clifford-cyclotomic gate sets and, in particular, that
the gate set $\s{X, CX, CCX, H, \rotk{k}}$ is by no means minimal.

\subsection{Circuit Representations for the Elements of \texorpdfstring{$\gens_{3^n}$}{G3n}}
\label{ssec:appcircs}

We now provide explicit constructions for the elements of
$\gens_{3^n}$. We focus on the matrices in $\gens_{3^n}$ where,
writing each computational basis state on $n$ qutrits as $n$ trits,
the levels are chosen to be those with the greatest value (taking the
last qutrit to have the least significant trit). Indeed, these
constructions can then be adapted to arbitrary levels by conjugating
them by ternary classical reversible circuits using
\cref{thm:daryrev,prop:construct-czerox}.

By \cref{thm:daryrev,prop:construct-czerox}, the multiply-controlled
$X$ gate can be expressed as a circuit over $\gatesetk{1}$ using a
single borrowed ancilla. We can therefore express the
multiply-controlled $Z$ gate as well, since $Z^\dagger = H X
H^\dagger$.

\begin{lemma}
  We have:
  \begin{equation*}
      \tikzfig{TCZ}
  \end{equation*}
\end{lemma}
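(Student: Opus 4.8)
The plan is to establish the displayed equality at the level of operators, by comparing how the two sides act on the computational basis — an operator being determined by this data — and then extending by linearity. Both sides are built from the gates of $\gatesetk{k}$ and their multiply-controlled variants; wherever Hadamard gates occur I would handle them through the conjugation rules $H X^{a} H^\dagger = Z^{a}$ and $H Z^{a} H^\dagger = X^{-a}$ (so that an $X$- or $Z$-type gate, possibly controlled, sandwiched between Hadamards is turned into its partner with no leftover scalar), while the remaining gates are classical reversible (permutations of basis states) or diagonal ($Z$, $S$, $T_k$), whose effect on a basis vector is immediate from the definitions recalled in \cref{sec:intro} and \cref{ssec:equivs}.

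Concretely I would take an arbitrary input $\ket{x_1,\ldots,x_m}$ — together with the value $\ket{a}$ of any borrowed ancilla — and propagate it through the right-hand circuit, at each step updating the trit on each wire (all trit arithmetic being modulo $3$) and multiplying in the phase contributed by each diagonal gate ($\ket{t}$ picks up $\omega_k^{t}$ under $T_k$, $\omega^{t}$ under $Z$, and so on), collapsing the Hadamard sandwiches via $H X^a H^\dagger = Z^a$. The algebraic simplifications needed are elementary: $x^2\equiv 1\pmod 3$ for $x\in\{1,2\}$ while $0^2\equiv 0$ (as already used in the proof of \cref{prop:construct-czerox}), $\omega_k^{3}=\omega_{k-1}$, and — when an ancilla is borrowed — the fact that the total increment applied to the ancilla wire vanishes modulo $3$, so the ancilla returns to $\ket{a}$. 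Matching the resulting output state and accumulated phase against those produced by the left-hand side, for every choice of $x_1,\ldots,x_m$ (and $a$), yields the identity.

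The main obstacle is that the appendix requires these identities to hold \emph{exactly}, not merely up to a global or relative phase, so the phase bookkeeping must be carried down to the last factor: the $\omega$-phases coming from the Hadamard gates and the $\omega_k$-phases coming from the $T_k$ gates have to recombine into precisely the phase appearing on the left-hand side, with no residual scalar, and a borrowed ancilla must be disentangled with no $a$-dependent phase. Granting this careful accounting, linearity finishes the proof. I would also note, as in \cref{ssec:equivs}, that the same identity can be derived graphically in the qupit ZH-calculus, which supplies a more conceptual alternative argument.
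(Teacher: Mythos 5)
Your approach matches the paper's: the lemma carries no explicit proof and is justified by the preceding one-line observation that $Z = H X H^\dagger$, so conjugating the already-constructed multiply-controlled $X$ gate by Hadamards on the target yields the multiply-controlled $Z$ --- exactly the collapse of Hadamard sandwiches via $H X^a H^\dagger = Z^a$ that you describe. Your basis-state propagation with phase bookkeeping is a more laborious but equivalent verification of the same identity.
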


From this and the fact that $(\omega)_{[2]}=S$ when acting on a single
qutrit, we can construct the 1-level matrix $(\omega)_{[x]}$ using a
single borrowed ancilla.

\begin{lemma}
  \label{lem:w1lvl}
  We have:
  \begin{equation*}
      \tikzfig{TCS}
  \end{equation*}
\end{lemma}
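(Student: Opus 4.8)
The plan is to verify the stated circuit identity directly on computational basis states. On the natural reading of the figure, its right-hand side implements a \emph{multiply-controlled $S$ gate}: on $n$ qutrits, when the $n-1$ control qutrits all hold $\ket{2}$ it applies $S=(\omega)_{[2]}=\mathrm{diag}(1,1,\omega)$ to the remaining (target) qutrit, and on every other computational basis input it acts as the identity. Since this is exactly the $1$-level matrix $(\omega)_{[3^n-1]}$, and since the only non-classical gate appearing in the circuit is the diagonal phase gate $\rotk{1}=Z$, the whole content of the verification is to track phases and to confirm that the target qutrit and the single borrowed ancilla are returned to their incoming states.

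Concretely, I would assemble the circuit from the multiply-controlled $Z$ gate of the preceding lemma — available over $\gatesetk{1}$ since $Z=HXH^\dagger$ and the multiply-controlled $X$ gate is classically reversible, hence realizable by \cref{thm:daryrev,prop:construct-czerox} — together with a $\ket{2}$-controlled increment of the borrowed ancilla by the target. The arrangement is: apply the multiply-controlled $Z^\dagger$ to the ancilla (controlled on the $\vec c$), add $1$ to the ancilla conditioned on the target being in state $\ket{2}$, apply the multiply-controlled $Z$ to the ancilla, and then undo the conditional increment. Tracing $\ket{\vec c}\ket{t}\ket{a}$: if some control differs from $\ket{2}$ the two multiply-controlled $Z$ gates do nothing and the increments cancel; if every control is $\ket{2}$ the ancilla picks up the phase $\omega^{-a}$ and then $\omega^{a+[t=2]}$, whose product $\omega^{[t=2]}$ is independent of the unknown value $a$, while the increments restore the ancilla and the target is never altered. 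Hence the net map is ``multiply by $\omega$ precisely when $\vec c$ and $t$ all hold $\ket{2}$'', i.e.\ $(\omega)_{[3^n-1]}$, as claimed.

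The one delicate point — and the reason the multiply-controlled $Z$ appears as an inverse pair straddling the increment rather than singly — is ensuring that the borrowed ancilla acquires no phase depending on its incoming value; this is the familiar subtlety of borrowed-ancilla phase synthesis, and the telescoping $\omega^{-a}\cdot\omega^{a+[t=2]}=\omega^{[t=2]}$ is what resolves it. What remains is routine bookkeeping: the $\ket{2}$-controlled increment is itself classically reversible, so by \cref{thm:daryrev,prop:construct-czerox} it is a circuit over $\gatesetk{1}$ whose auxiliary qutrit can be borrowed from among the control qutrits $\vec c$, leaving the phase-kickback qutrit as the only genuinely new ancilla and matching the claimed bound of one borrowed ancilla. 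Finally, to obtain $(\omega)_{[x]}$ for an arbitrary level $x$, as asserted in the text just before the lemma, I would conjugate the whole construction by a classically reversible permutation sending $\ket{x}$ to $\ket{3^n-1}$, which exists by \cref{thm:daryrev}.
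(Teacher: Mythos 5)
Your proposal is correct and matches the paper's construction in substance: the paper's Lemma~\ref{lem:w1lvl} is exactly a circuit figure realizing the multiply-controlled $S$ gate (i.e.\ $(\omega)_{[2\cdots2]}$) from the multiply-controlled $Z$ of the preceding lemma via phase kickback on a single borrowed ancilla, with a cancelling $Z^{\dagger}/Z$ pair straddling a classically reversible conditional increment so that the unknown ancilla value telescopes out. Your verification of the phase $\omega^{-a}\cdot\omega^{a+[t=2]}=\omega^{[t=2]}$, the restoration of the ancilla, and the reduction to arbitrary levels by \cref{thm:daryrev} is precisely the content the paper leaves implicit in the figure.
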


The next two lemmas let us construct the 1-level matrix
$(-1)_{[x]}$. When acting on a single qutrit, this is the $(-1)_{[2]}
= \text{diag}(1,1,-1)$ gate. This gate is also known as the
\textbf{metaplectic} gate
\cite{BocharovA2016optimalitymetaplectic,BocharovA2017ternaryshor,CuiS2015universalmetaplectic}
and in earlier work, we referred to this gate as the $R$ gate
\cite{metaplectic}.

\begin{lemma}
  \label{lem:TCHmw}
  We have:
  \begin{equation*}
      \scalebox{0.95}{\tikzfig{TCHmw}}
  \end{equation*}
\end{lemma}

\begin{lemma}
  \label{lem:minus1}
  We have:
  \begin{equation*}
      \tikzfig{TCminus1}
  \end{equation*}
\end{lemma}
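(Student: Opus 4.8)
\emph{Proof idea.} The plan is to verify the displayed circuit identity by reducing it to a short single-qutrit relation and then lifting that relation to the multiply-controlled setting, using the controlled-Hadamard-type gate of \cref{lem:TCHmw} together with the multiply-controlled $X$, $Z$, and $S$ gates already available from \cref{lem:constructCX}, \cref{lem:w1lvl}, and \cref{thm:daryrev}. It suffices to treat the ``maximal'' level $x = 3^n-1$, i.e.\ the level indexed by all $n$ qutrits being in state $\ket{2}$; arbitrary levels then follow by conjugating with a ternary classical reversible permutation circuit, which is available over $\gatesetk{1}$ by \cref{thm:daryrev} and \cref{prop:construct-czerox}, exactly as announced at the start of the subsection.

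The key algebraic input is the identity $H^2=-P$, where $P$ is the single-qutrit permutation fixing $\ket 0$ and swapping $\ket 1\leftrightarrow\ket 2$: writing $H=(-\omega^2/\lambda)F$ with $F$ the unnormalised Fourier matrix, one has $F^2=3P$ and $(-\omega^2/\lambda)^2=\omega^4/\lambda^2=\omega/(-3\omega)=-1/3$, hence $H^2=-P$. Now introduce one borrowed ancilla in an arbitrary state $\ket{\psi}$ and apply, controlled on the $n$ data qutrits all being in state $\ket 2$, the gate $H$ twice followed by $P^{-1}=P$, each acting on the ancilla. On the controlled branch the ancilla undergoes $P\,H^2=P(-P)=-I$, contributing a global phase $-1$ on that branch; on every other branch nothing happens and the ancilla is returned to $\ket{\psi}$. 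Thus the combined operator is precisely $(-1)_{[x]}$ tensored with the identity on the ancilla, for every initial ancilla state, which is what the displayed circuit asserts once the multiply-controlled $H$ is implemented via \cref{lem:TCHmw}, the multiply-controlled $P$ via \cref{thm:daryrev} and \cref{prop:construct-czerox}, and any residual scalar carried by the gate of \cref{lem:TCHmw} is absorbed into a diagonal correction built from the multiply-controlled $S$ and $Z$ gates. A bookkeeping check then confirms that the borrowed ancillae consumed by these sub-constructions can be reused sequentially, so the whole circuit stays within the stated ancilla budget.

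The main obstacle is exact-phase bookkeeping: the circuit must equal $(-1)_{[x]}$ on the nose -- not merely up to a global or relative phase -- and must restore the borrowed ancilla to its arbitrary initial state on every branch. This forces one to pin down the precise scalar appearing in \cref{lem:TCHmw} (so that squaring the controlled-$H$ yields exactly $-P$ rather than $-\zeta P$ for a stray root of unity $\zeta$), to check that the controlled classical gate is literally $P$ and not another element of the same coset, and to confirm that the multiply-controlled gates genuinely act as the identity when their controls are unsatisfied. Once these phase and control conditions are verified, the identity follows by the computation above.
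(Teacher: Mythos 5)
Your argument is correct and matches the mechanism behind the paper's construction: the paper states this lemma as a bare circuit identity (verified by direct computation and the accompanying code, with no written proof), and that circuit works precisely because $H^2=-P$ with $H=\tfrac{-\omega^2}{\lambda}F$, $\lambda^2=-3\omega$, so two applications of the controlled Hadamard-type gate of \cref{lem:TCHmw} on a borrowed ancilla, followed by the controlled classical $P$ and a controlled power-of-$\omega$ correction (never a $-1$, since the residual scalar is a square, so there is no circularity), yield exactly the controlled $-1$ phase. Your reduction to the maximal level and the ancilla bookkeeping likewise agree with the paper's stated conventions, so this is essentially the same proof, written out.
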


We can now synthesize the 3-level matrix $H_{[x_1,x_2,x_3]}$ matrix
over $\gatesetk{1}$. To do this, apply \cref{lem:TCHmw} as well as the
appropriate controlled global phase correction: a product of 1-level
$\omega_{[x]}$ matrices and $(-1)_{[x]}$ matrices.

\begin{lemma}
  \label{lem:TCH}
  We have:
  \begin{equation*}
      \tikzfig{TCH}
  \end{equation*}
\end{lemma}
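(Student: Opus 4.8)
The plan is to build $H_{[x_1,x_2,x_3]}$ in two stages. First, invoke \cref{lem:TCHmw}, whose circuit realises a controlled Hadamard up to a phase: the unitary $V$ it produces acts as $\zeta\,H$ on the three active levels $x_1,x_2,x_3$ and as the identity on every other level, where $\zeta$ is the overall scalar accumulated by that explicit decomposition. Second, cancel $\zeta$ by composing with a product of controlled $1$-level phase gates of the forms $(\omega)_{[x]}$ and $(-1)_{[x]}$, each of which is available over $\gatesetk{1}$ by \cref{lem:w1lvl} and \cref{lem:minus1} respectively.

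For the second stage, observe that the only global phases a $\gatesetk{1}$ circuit can introduce are powers of $\omega$ (from $Z = T_1$) and the prefactor $-\omega^2$ of $H$; since $-\omega^2$ already generates the group $\mu_6$ of sixth roots of unity, we have $\zeta \in \mu_6$ and may write $\zeta^{-1} = (-1)^{b}\omega^{a}$ with $b \in \{0,1\}$ and $a \in \{0,1,2\}$. The $1$-level product
\[
(-1)_{[x_1]}^{b}(-1)_{[x_2]}^{b}(-1)_{[x_3]}^{b}\,(\omega)_{[x_1]}^{a}(\omega)_{[x_2]}^{a}(\omega)_{[x_3]}^{a}
\]
acts as $\zeta^{-1}I_3$ on the levels $x_1,x_2,x_3$ and trivially elsewhere, so multiplying it onto $V$ produces exactly $H_{[x_1,x_2,x_3]}$. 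Because the levels $x_1,x_2,x_3$ arise here as the three values of a single target qutrit conditioned on a fixed configuration of the remaining qutrits, each factor above is a controlled gate of the same shape as the \cref{lem:TCHmw} circuit, and by \cref{lem:w1lvl,lem:minus1} the full composite is a circuit over $\gatesetk{1}$; this is the claimed identity.

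The only real work is bookkeeping: reading the scalar $\zeta$ off the gate sequence of \cref{lem:TCHmw}, and then checking that the displayed product of $(\omega)_{[x]}$ and $(-1)_{[x]}$ factors reproduces $\zeta^{-1}$ on all three active levels at once while fixing every inactive level. Once $\zeta$ is pinned down the identity follows by a direct matrix multiplication, and the circuit on the right-hand side of the statement is the concatenation of the \cref{lem:TCHmw} construction with the correction-factor circuits from \cref{lem:w1lvl,lem:minus1}.
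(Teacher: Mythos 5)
Your proposal matches the paper's construction: the displayed identity is obtained by composing the circuit of \cref{lem:TCHmw} (a controlled Hadamard up to a controlled phase) with a controlled global-phase correction built from the $1$-level $(\omega)_{[x]}$ and $(-1)_{[x]}$ gates of \cref{lem:w1lvl,lem:minus1}, exactly as the text preceding the lemma describes. The only looseness is your appeal to ``the only global phases a $\gatesetk{1}$ circuit can introduce'' to conclude $\zeta\in\mu_6$ --- in the paper one simply reads the sixth root of unity off the explicit \cref{lem:TCHmw} circuit --- but this does not change the argument.
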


We have now constructed all of the required 1-, 2-, and 3-level
matrices (up to a permutation). We can therefore prove
\cref{prop:circuits}, which we restate below, making the ancilla
requirements explicit.

\begin{proposition*}
  If $U\in\gensnn{3^n}$, then $U$ can be represented by a circuit over
  $\gatesetk{1}$ using at most 2 borrowed ancillae. Explicitly,
  \begin{itemize}
      \item $(-1)_{[x]}$ requires 2 borrowed ancillae,  
      \item $(\omega)_{[x]}$ requires 1 borrowed ancillae,
      \item $X_{[x_1,x_2]}$ requires 1 borrowed ancilla, and
      \item $H_{[x_1,x_2,x_3]}$ requires 1 borrowed ancillae.
  \end{itemize}
\end{proposition*}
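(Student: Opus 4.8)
The plan is to reduce an arbitrary element of $\gensnn{3^n}$ to one of four canonical generators, each of which is a multiply-controlled single-qutrit gate, and then to read off the ancilla cost from the explicit constructions in \cref{lem:w1lvl,lem:TCHmw,lem:minus1,lem:TCH}, using \cref{thm:daryrev,prop:construct-czerox} for all the classical-reversible bookkeeping.

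First I would make the reduction to canonical levels precise. Writing each computational basis state of $(\C^3)^{\otimes n}$ as an $n$-trit string with the last qutrit least significant, the canonical level sets are: for $H_{[x_1,x_2,x_3]}$, the three strings of the form $(2,\dots,2,\ast)$, so that $H_{[x_1,x_2,x_3]}$ is a Hadamard on the last qutrit multiply-controlled on the first $n-1$ qutrits all being $2$; similarly $(\omega)_{[x]}$ and $(-1)_{[x]}$ are a multiply-controlled $S=(\omega)_{[2]}$ and a multiply-controlled metaplectic gate $(-1)_{[2]}$; and $X_{[x_1,x_2]}$ is simply the transposition of two basis states. A generator at an arbitrary level set is obtained from the canonical one by conjugating with a ternary classical reversible permutation that moves the relevant basis states into canonical position; by \cref{thm:daryrev,prop:construct-czerox} such a permutation is expressible over $\gatesetk{1}$ using a single borrowed ancilla, and since that ancilla is returned unchanged, the same one can be reused for the conjugation and for the central construction, so the reduction costs nothing extra.

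Next I would treat the four cases. The matrix $X_{[x_1,x_2]}$ is already a ternary classical reversible function, so \cref{thm:daryrev,prop:construct-czerox} express it over $\gatesetk{1}$ with one borrowed ancilla. For $(\omega)_{[x]}$, a multiply-controlled $Z$ is a multiply-controlled $X$ conjugated by $H$ on the target (since $Z=HXH^\dagger$), hence available with one borrowed ancilla, and \cref{lem:w1lvl} converts it into the multiply-controlled $S$ realizing $(\omega)_{[x]}$, still with one borrowed ancilla. For $(-1)_{[x]}$, \cref{lem:TCHmw,lem:minus1} build the multiply-controlled metaplectic gate; inspecting those circuits shows two borrowed ancillae suffice. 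For $H_{[x_1,x_2,x_3]}$, \cref{lem:TCHmw} realizes the multiply-controlled $H$ up to a diagonal phase on the control register, and \cref{lem:TCH} corrects that phase with a product of $(\omega)_{[x]}$- and $(-1)_{[x]}$-type $1$-level matrices on the controls; from the explicit construction one checks that the phase-correction part reuses the single borrowed ancilla already present, so one borrowed ancilla suffices. Taking the maximum over the four cases yields the stated bound of $2$ borrowed ancillae.

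I expect the real work to be the ancilla bookkeeping rather than the mere existence of the circuits: in particular, justifying why $(-1)_{[x]}$ genuinely needs a second borrowed ancilla while $(\omega)_{[x]}$ and $H_{[x_1,x_2,x_3]}$ each get by with one. The subtle point for $H_{[x_1,x_2,x_3]}$ is that its phase correction involves $(-1)_{[x]}$-type factors, which in isolation would want two ancillae; one must use that in the controlled-$H$ construction this correction acts only on the control qutrits at a stage where no data qutrit is entangled with the ancilla, so it fits inside the single ancilla already allocated. Verifying that the conjugation-by-permutation reduction and each sub-circuit can share ancillae without collision is the part that needs the most care.
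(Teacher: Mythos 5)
Your proposal is correct and follows essentially the same route as the paper: reduce each generator to its canonical-level form (a multiply-controlled single-qutrit gate), conjugate by a ternary classical reversible permutation obtained from \cref{thm:daryrev,prop:construct-czerox} to handle arbitrary levels, and read the ancilla counts off the explicit circuits in \cref{lem:w1lvl,lem:TCHmw,lem:minus1,lem:TCH}. The paper's proof is just a citation of those lemmas, so your write-up is the same argument with the bookkeeping (correctly) made explicit.
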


\begin{proof}
  This follows from
  \cref{lem:constructCX,prop:construct-czerox,thm:daryrev,lem:w1lvl,lem:minus1,lem:TCH}.
\end{proof}

The number of ancillae required to represent the elements of
$\gensnn{3^n}$ is, to a certain extent, an artifact of the choice of
gate set. For example, including the $\ket{0}$-controlled $X$ gate to
the gate set would lower the ancilla-count for some of the elements of
$\gensnn{3^n}$.

The proposition above shows that the matrices that can be represented
by a multiqutrit circuit over the Clifford+$(-1)_{[2]}$ gate set (also
known as the \textbf{Clifford+$R$} or the \textbf{metaplectic} gate
set) are a subset of those representable by a circuit over
$\gatesetk{1}$. At the time of writing, we do not know whether this
inclusion is strict, although the conjecture in
\cite{BocharovA2016ternaryarithmetics} that not all ternary classical
reversible gates can be exactly represented over the
Clifford+$(-1)_{[2]}$ gate set lends credence to this idea.

If a matrix can be represented by a circuit over $\gatesetk{k}$, it
can also be represented by a circuit over $\gatesetk{k+1}$. It
therefore follows from the proposition above that all of the elements
of $\gensnn{3^n}$ can be represented by a circuit over
$\gatesetk{2}$. We close this appendix by showing that the 1-level
matrix $(\omega_2)_{[x]}$ can be represented by a circuit over
$\gatesetk{2}$ and by providing further generalizations of the above
constructions. This paves the way for a direct proof of exact
synthesis for Clifford+$T$ circuits (rather than the more indirect one
using catalytic embeddings, as in \cref{thm:characterization}). Over
$\gatesetk{2}$, the ancilla requirements are lowered, since the
$\ket{0}$-controlled $X$ gate can be represented by an ancilla-free
circuit by \cref{lem:ZCX-CliffT}. To construct $(\omega_2)_{[x]}$, we
first build a modification of $(\omega)_{[x]}$ which differs by a
controlled global phase of $\omega_2$.

\begin{lemma}
  \label{lem:w2Sdag}
  We have:
  \begin{equation*}
      \tikzfig{w2Sdag}
  \end{equation*}
\end{lemma}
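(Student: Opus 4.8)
The plan is to establish the identity by a direct computation on computational basis states, exactly as one would for the earlier circuit lemmas of this appendix. Write an input as $\ket{x_1\cdots x_n}$ with $x_i\in\{0,1,2\}$ and the last qutrit carrying the least significant trit. The left-hand circuit is assembled from multiply-controlled $X$ gates and their Hadamard conjugates --- which, by \cref{thm:daryrev} together with \cref{lem:ZCX-CliffT}, are available ancilla-free over $\gatesetk{2}$ --- interleaved with the diagonal single-qutrit layers $T_2$ and $S^\dagger$. First I would resolve the control structure: the classical part is arranged so that the $T_2$ and $S^\dagger$ layers fire only when the control qutrits spell out the prefix of the level index $x$, and act as the identity otherwise. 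On the complementary subspace the circuit is therefore the identity, agreeing with $(\omega)_{[x]}$ there, and the whole problem reduces to a few-qutrit computation on the controlled subspace.

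On the controlled subspace I would expand the diagonal layers explicitly, using $T_2=\mathrm{diag}(1,\omega_2,\omega_2^2)$, $S=\mathrm{diag}(1,1,\omega)$, the relation $\omega=\omega_2^3$, and $\omega_2^{9}=1$, together with the single-qutrit facts $(\omega)_{[2]}=S$ and $X=H^\dagger T_2^3 H$ used to rewrite everything over $\gatesetk{2}$. The $T_2$ layers contribute the desired power of $\omega_2$ at level $x$ but also spurious $\omega_2$-powers at the neighbouring levels; the $S^\dagger$ layer is precisely what cancels those spurious powers, leaving $(\omega)_{[x]}$ at level $x$ (since $\omega=\omega_2^3$) multiplied by one residual factor of $\omega_2$ spread uniformly over the entire controlled subspace. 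That uniform residual factor is exactly the ``controlled global phase of $\omega_2$'' announced before the statement, and checking that it is genuinely uniform on the controlled subspace (rather than level-dependent) is the crux of the verification.

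I expect the phase bookkeeping to be the only real obstacle: one must track, branch by branch of the control, which power of $\omega_2$ each $T_2$ contributes, confirm that the $S^\dagger$ correction reduces all of these to the single intended $\omega$ at level $x$, and confirm that what is left over is the constant $\omega_2$ on every controlled basis state and $1$ elsewhere. The classical/reversible part is routine given \cref{thm:daryrev}. Once \cref{lem:w2Sdag} is in hand, the target gate $(\omega_2)_{[x]}$ is obtained by composing the gate of \cref{lem:w2Sdag} with a controlled $\omega_2^{-1}$ global phase on the same controlled subspace --- itself a $\gatesetk{2}$ circuit, e.g.\ by iterating the construction or by conjugating a controlled $T_2$ --- which removes the residual factor and so completes the groundwork for the promised direct exact-synthesis argument for Clifford+$T$ circuits in the style of \cref{thm:characterization}.
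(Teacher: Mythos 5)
The paper offers no written proof of \cref{lem:w2Sdag}: like the other circuit identities in this appendix, it is stated as a directly verifiable matrix equality, with the verification delegated to inspection and to the implementations in the accompanying repository. Your strategy --- peel off the classical control structure (available ancilla-free over $\gatesetk{2}$ by \cref{thm:daryrev} and \cref{lem:ZCX-CliffT}) so that the identity is trivial off the controlled subspace, then check on the controlled subspace that the $T_2$ layers supply the wanted $\omega_2$-powers, that the $S^\dagger$ layer cancels the spurious ones, and that what remains is a uniform factor of $\omega_2$ --- is exactly the computation that establishes the lemma, so your approach matches the one the paper implicitly relies on.

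As written, however, your proposal is an outline rather than a proof. The step you yourself single out as the crux --- verifying, branch by branch, that after the $S^\dagger$ correction the residual phase is the \emph{constant} $\omega_2$ on every controlled basis state rather than a level-dependent power of $\omega_2$ --- is precisely the content of the lemma, and you defer it (``I expect the phase bookkeeping to be the only real obstacle''). Until that finite computation on diagonal matrices is carried out explicitly, using $T_2=\mathrm{diag}(1,\omega_2,\omega_2^2)$, $\omega=\omega_2^3$, and $\omega_2^9=1$, the identity is asserted but not established. Your closing paragraph also drifts into the construction of $(\omega_2)_{[x]}$ itself, which is the business of \cref{lem:w2} rather than of this lemma; moreover, the ``controlled $\omega_2^{-1}$ global phase'' you propose to compose with is essentially a gate of the same difficulty as the one being constructed, so it cannot be invoked as freely available without its own $\gatesetk{2}$ circuit.
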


We note that unlike the construction in Lemma~\ref{lem:w1lvl} which
required one (additional) borrowed ancilla, this construction requires
no (additional) borrowed ancillae. By combining the construction of
\cref{lem:w2Sdag} and that of \cref{lem:TCH}, we can therefore
represent $H_{[x_1,x_2,x_3]}$ without ancillae. Similarly, by
combining the construction of \cref{lem:w2Sdag} and that of
\cref{lem:minus1}, we can represent $(-1)_{[x]}$ using a single
borrowed ancillae. Finally, $(\omega_2)_{[x]}$ can be constructed as
in the next lemma using 2 borrowed ancillae.

\begin{lemma}
  \label{lem:w2}
  We have:
  \begin{equation*}
      \tikzfig{w2}
  \end{equation*}
\end{lemma}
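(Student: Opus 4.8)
The statement is a circuit identity: the displayed circuit over $\gatesetk{2}$, which borrows two ancillary qutrits, implements the one-level matrix $(\omega_2)_{[x]}$. The plan is to verify it by computing the action of the right-hand circuit on computational basis states and comparing with $(\omega_2)_{[x]}$, reusing the sub-circuits of the preceding lemmas rather than unrolling everything. Following the standing convention of this subsection, it suffices to take $x=3^n-1$, so that $(\omega_2)_{[x]}$ multiplies the all-$2$ basis state $\ket{2\cdots2}$ by $\omega_2$ and fixes every other computational basis state; equivalently it is the single-qutrit gate $\mathrm{diag}(1,1,\omega_2)$ on the last qutrit, controlled on the first $n-1$ qutrits being in state $2$. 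A general level $x$ is recovered by conjugating with a ternary reversible permutation carrying $x$ to $3^n-1$, available over $\gatesetk{2}$ without further ancillae by \cref{thm:daryrev,prop:construct-czerox,lem:ZCX-CliffT}.

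The right-hand circuit is assembled from the circuit of \cref{lem:w2Sdag}, which realizes $(\omega)_{[x]}$ up to a controlled global phase of $\omega_2$ and which introduces no borrowed ancilla beyond those inherent in multiply-controlling a single-qutrit phase gate, together with the multiply-controlled-$S$ construction of \cref{lem:w1lvl} (and, should a residual sign need absorbing, the $(-1)_{[x]}$ construction of \cref{lem:minus1}), which supplies the second borrowed ancilla. The net effect must cancel the spurious phases that the \cref{lem:w2Sdag} gadget leaves on the two basis states of the controlled three-level block other than $\ket{2\cdots2}$, and must turn the phase on $\ket{2\cdots2}$ itself into exactly $\omega_2$, while fixing every off-block basis state. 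I would carry this out as a finite check: record the phase each constituent piece contributes to $\ket{2\cdots2}\ket{0}$, $\ket{2\cdots2}\ket{1}$, and $\ket{2\cdots2}\ket{2}$, confirm the products are $1$, $1$, and $\omega_2$, and observe that all remaining basis states, along with both borrowed ancillae in their arbitrary input states, are returned unchanged, since each sub-circuit is ancilla-clean by the lemma it comes from and the composition of ancilla-clean circuits is ancilla-clean.

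The main obstacle is the phase and ancilla bookkeeping, and there is a genuine reason it cannot be trivialized: as a $3^n\times3^n$ matrix $(\omega_2)_{[x]}$ has determinant $\omega_2$, a primitive $9$th root of unity, whereas every circuit over $\gatesetk{2}$ on two or more qutrits has determinant $\pm1$ (the reversible gates have determinant $1$, a Hadamard on a single wire contributes $\det H=-1$, and a $T_2$ gate on a single wire contributes $\det T_2=\omega$, which is trivial as soon as there are at least two wires). Hence $(\omega_2)_{[x]}$ cannot be implemented without ancillae, and a single borrowed ancilla does not help either, since tensoring with one qutrit only replaces the determinant by its cube $\omega_2^3=\omega\neq\pm1$; two borrowed ancillae bring it to $\omega_2^9=1$, which is exactly why the construction uses two. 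In particular the circuit cannot be a mere stack of multiply-controlled diagonal gates built from $T_2$, so the genuinely non-diagonal step is unavoidable, and it is isolated in \cref{lem:w2Sdag}; what remains here is to check that pre- and post-composing that gadget with the diagonal corrections above yields $(\omega_2)_{[x]}$ on the computational subspace and restores the two borrowed ancillae. Once the circuit is written out, this is a routine if slightly tedious matrix computation.
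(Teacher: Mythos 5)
The paper gives no written proof of this lemma: it is asserted as a bare circuit identity whose justification is precisely the finite basis-state check you outline (the authors defer to reference implementations), so your overall approach matches the paper's implicit one. Your determinant argument is a correct and genuine addition that the paper does not make: the reversible generators have determinant $1$, $\det H=-1$, and $\det T_2=\omega$ becomes trivial once the gate acts on two or more wires, so a borrowed-ancilla circuit over $\gatesetk{2}$ realizes a unitary of determinant $\pm 1$, while $\det\big((\omega_2)_{[x]}\otimes I_{3^a}\big)=\omega_2^{3^a}$ lies in $\{\pm1\}$ only for $a\geq 2$; this shows the two borrowed ancillae are necessary and not merely sufficient, a lower bound the paper never states.

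Two cautions, though. First, the assembly you guess does not work as described: the gadget of \cref{lem:w2Sdag} realizes the controlled $\omega_2 S^\dagger=\mathrm{diag}(\omega_2,\omega_2,\omega_2\omega^2)$ on the active block, and composing it with the controlled $S=\mathrm{diag}(1,1,\omega)$ of \cref{lem:w1lvl} yields the controlled \emph{global} phase $\omega_2 I_3$, which multiplies all three basis states of the block by $\omega_2$ rather than just $\ket{2\cdots2}$. The leftover $\mathrm{diag}(\omega_2,\omega_2,1)$ is not a sign and cannot be absorbed by $(-1)_{[x]}$ or by further controlled-$S$ gates, which only contribute powers of $\omega$ and $-1$; one needs an additional determinant-one diagonal such as the multiply-controlled $\mathrm{diag}(1,\omega_2,\omega_2^\dagger)$ appearing in the paper's final proposition. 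Your own bookkeeping (checking that the three phases come out to $1$, $1$, $\omega_2$) would catch this, but as written the sketch fails that very check. Second, since the computation is deferred rather than carried out, the proposal is a correct plan rather than a completed verification --- though, to be fair, that is also all the paper provides.
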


\begin{proposition}
  \label{prop:app-ct}
  The 1-, 2-, and 3-level matrices $(-1)_{[x]}$, $(\omega_2)_{[x]}$,
  $X_{[x_1, x_2]}$, and $H_{[x_1,x_2,x_3]}$ can be represented by a
  circuit over $\gatesetk{2}$ using at most 2 borrowed
  ancillae. Explicitly,
  \begin{itemize}
      \item $(-1)_{[x]}$ requires 1 borrowed ancilla, 
      \item $(\omega_2)_{[x]}$ requires 2 borrowed ancillae, 
      \item $X_{[x_1,x_2]}$ requires 0 borrowed ancillae, and
      \item $H_{[x_1,x_2,x_3]}$ requires 0 borrowed ancillae.
  \end{itemize}
\end{proposition}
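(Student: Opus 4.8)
The plan is to verify the four families of level matrices separately, reducing each to a circuit identity already proved in this appendix and tracking the borrowed-ancilla count, and to isolate the two reasons the counts drop relative to the $\gatesetk{1}$ case: \cref{lem:ZCX-CliffT}, which makes the $\ket{0}$-controlled $X$ gate---and hence, by \cref{thm:daryrev}, every ternary classical reversible subcircuit, including the multiply-controlled $X$ gates appearing in the $\gatesetk{1}$ constructions---ancilla-free over $\gatesetk{2}$; and \cref{lem:w2Sdag}, which realizes the $(\omega)_{[x]}$ building block without the borrowed ancilla that \cref{lem:w1lvl} requires. Since the lemmas of \cref{ssec:appcircs} are stated for the levels of greatest value, I would first observe that an arbitrary choice of $x$, $(x_1,x_2)$, or $(x_1,x_2,x_3)$ is reduced to that special case by conjugating with a ternary classical reversible permutation, which by \cref{thm:daryrev,lem:ZCX-CliffT} is itself an ancilla-free circuit over $\gatesetk{2}$ and so does not increase the ancilla budget.

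I would then treat the cases in increasing order of cost. For $X_{[x_1,x_2]}$: this matrix is the transposition of the computational basis states indexed by $x_1$ and $x_2$, hence a ternary classical reversible function on $n$ trits, so \cref{thm:daryrev} together with \cref{lem:ZCX-CliffT} gives an ancilla-free circuit over $\gatesetk{2}$, i.e.\ $0$ borrowed ancillae. For $H_{[x_1,x_2,x_3]}$: \cref{lem:TCHmw} supplies the multiply-controlled $H$ up to a diagonal phase, and \cref{lem:TCH} shows the phase correction is a product of $(\omega)_{[x]}$ and $(-1)_{[x]}$ matrices; substituting the ancilla-free building block of \cref{lem:w2Sdag} for the one of \cref{lem:w1lvl} then yields $H_{[x_1,x_2,x_3]}$ with $0$ borrowed ancillae. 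For $(-1)_{[x]}$: the same substitution into the construction of \cref{lem:minus1}, which uses two borrowed ancillae over $\gatesetk{1}$, removes one of them, leaving $1$ borrowed ancilla. Finally, $(\omega_2)_{[x]}$ is given directly by \cref{lem:w2}, using $2$ borrowed ancillae. Taking the maximum over the four cases gives the headline bound of $2$ borrowed ancillae.

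The one point I would be careful about---and the closest thing to an obstacle---is the global-phase bookkeeping around \cref{lem:w2Sdag}: that lemma does not implement $(\omega)_{[x]}$ on the nose, but a matrix differing from it by an $\omega_2$ phase conditioned on the control qutrits. When this block is plugged into the constructions of \cref{lem:TCH} and \cref{lem:minus1}, one must check that the accumulated spurious controlled-$\omega_2$ phases either cancel outright or can be corrected by further diagonal gates over $\gatesetk{2}$ without reintroducing ancillae, so that the resulting circuits implement $H_{[x_1,x_2,x_3]}$ and $(-1)_{[x]}$ exactly rather than merely up to phase. Apart from this check, the proof is a routine assembly of the cited lemmas, and I expect no further difficulty.
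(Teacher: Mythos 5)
Your proposal is correct and follows essentially the same route as the paper: the paper's proof is precisely the assembly of \cref{lem:ZCX-CliffT,lem:w2Sdag,lem:w2,lem:minus1,lem:TCH,thm:daryrev}, with the ancilla savings attributed to the ancilla-free $\ket{0}$-controlled $X$ over $\gatesetk{2}$ and to substituting the construction of \cref{lem:w2Sdag} for that of \cref{lem:w1lvl}. Your flagged concern about the controlled-$\omega_2$ phase is exactly the point the paper addresses by designing \cref{lem:w2Sdag} as a phase-shifted variant whose spurious phase is absorbed in the controlled global phase corrections of \cref{lem:TCH} and \cref{lem:minus1}.
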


\begin{proof}
  This follows from
  \cref{lem:ZCX-CliffT,lem:w2Sdag,lem:w2,lem:minus1,lem:TCH,thm:daryrev}.
\end{proof}

We can generalize the above construction to Clifford-cyclotomic gate
sets of higher degree.

\begin{proposition}
  Let $k\geq 1$. The 1-level matrix $(\omega_k)_{[x]}$ can be
  represented by a circuit over $\gatesetk{k}$ using $k$ borrowed
  ancillae.
\end{proposition}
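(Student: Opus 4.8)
The plan is to prove this by induction on $k$. The base case $k=1$ is the statement of \cref{prop:circuits} (in the form restated above): $(\omega)_{[x]}$ is representable over $\gatesetk{1}$ using one borrowed ancilla. So suppose $k\geq 2$, and assume as induction hypothesis that every $1$-level matrix of the form $(\omega_{k-1})_{[y]}$ — in any dimension — is representable over $\gatesetk{k-1}$, and hence, since $T_{k-1}=T_k^3$, over $\gatesetk{k}$, using $k-1$ borrowed ancillae.

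Fix a level $x$ in dimension $3^n$, adjoin a single qutrit $a$, and let $\Lambda_x(\Omega_k)$ denote the gate on these $n+1$ qutrits that applies $\Omega_k$ to $a$ exactly when the first $n$ qutrits are in the computational basis state $x$. By \cref{prop:catembfunc,prop:catembkmone} we have $\Lambda_x(\Omega_k)=\phi_k\big((\omega_k)_{[x]}\big)$ (here the matrices $A,B,C$ for $(\omega_k)_{[x]}$ are $I-\ket{x}\!\bra{x}$, $\ket{x}\!\bra{x}$, and $0$) and $\Lambda_x(\Omega_k)\big(\ket u\otimes\ket{c_k}\big)=\big((\omega_k)_{[x]}\ket u\big)\otimes\ket{c_k}$. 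I would then invoke three elementary single-qutrit identities: (i) $\Omega_k=X\,(\omega_{k-1})_{[2]}$, so that $\Lambda_x(\Omega_k)=\Lambda_x(X)\cdot\Lambda_x\big((\omega_{k-1})_{[2]}\big)$; (ii) $\ket{c_k}\propto T_k^{\dagger}H\ket0$, so $T_k^{\dagger}H$ prepares the catalyst from $\ket0$; and (iii) $H^{\dagger}T_k\,\Omega_k\,T_k^{\dagger}H=\omega_k Z^{\dagger}$. Identity (iii) is the crucial one: the left-hand matrix is conjugate to $\Omega_k$, hence diagonalizable with eigenvalues the three cube roots $\omega_k,\omega_k\omega,\omega_k\omega^2$ of $\omega_{k-1}$; by (ii) its $\omega_k$-eigenvector is $\ket0$, and one checks that $T_k^{\dagger}H$ carries $\ket1,\ket2$ to the remaining two eigenvectors of $\Omega_k$, so the matrix is diagonal in the computational basis with those eigenvalues in exactly the order yielding $\omega_k Z^{\dagger}$.

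Now set $D:=(I\otimes H^{\dagger}T_k)\cdot\Lambda_x(\Omega_k)\cdot(I\otimes T_k^{\dagger}H)$. Running $D$ on $\ket u\otimes\ket\psi$ for an \emph{arbitrary} state $\ket\psi$ of $a$ and using (iii), the $\ket x$-component of $\ket u$ picks up the operator $\omega_k Z^{\dagger}$ on $a$ while every other component is untouched; that is, $D=\Lambda_x\big(\omega_k Z^{\dagger}\big)=\big((\omega_k)_{[x]}\otimes I\big)\cdot\Lambda_x(Z^{\dagger})$, whence
\[
  (\omega_k)_{[x]}\otimes I \;=\; D\cdot\Lambda_x(Z).
\]
It remains to assemble the right-hand side over $\gatesetk{k}$. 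By \cref{lem:constructCX,prop:construct-czerox,thm:daryrev} together with $Z=HXH^{\dagger}$, the multiply-controlled gates $\Lambda_x(X)$ and $\Lambda_x(Z)$ are representable over $\gatesetk{1}\subseteq\gatesetk{k}$ using one borrowed ancilla; the single-qutrit gates $T_k^{\dagger}H$ and $H^{\dagger}T_k$ lie in $\gatesetk{k}$; and $\Lambda_x\big((\omega_{k-1})_{[2]}\big)$ is itself the $1$-level matrix $(\omega_{k-1})_{[x']}$ on $3^{n+1}$ dimensions, where $x'$ is the level ``$x$ followed by $2$'', which by the induction hypothesis is representable over $\gatesetk{k}$ using $k-1$ borrowed ancillae. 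Drawing the single borrowed ancilla needed by $\Lambda_x(X)$ and $\Lambda_x(Z)$ from among those $k-1$, the total cost is the qutrit $a$ together with $k-1$ further borrowed qutrits, i.e.\ $k$ ancillae; and all $k$ are genuinely borrowed, because the correction $\Lambda_x(Z)$ restores $a$ to its initial state.

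The step I expect to be the main obstacle is precisely this fresh-versus-borrowed bookkeeping for the catalyst qutrit $a$: a naive use of the catalytic embedding would force $a$ to begin in $\ket0$ (so that $T_k^{\dagger}H\ket0=\ket{c_k}$), and the real content of the argument is identity (iii), which shows that the same circuit run on a borrowed $a$ implements $(\omega_k)_{[x]}$ up to the Clifford correction $\Lambda_x(Z)$. This is exactly the mechanism behind the $k=2$ case (\cref{lem:w2}), where $\Omega_2=XS$ is Clifford and no recursion is required; the general case is the natural lift of that construction one order at a time, the genuinely new point being that $\Lambda_x\big((\omega_{k-1})_{[2]}\big)$ is an instance of the very gate being synthesized one order down. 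Verifying (i)--(iii) and chasing the computational-basis action of $D$ are routine $3\times 3$ computations (and can alternatively be read off from the qupit ZH-calculus, as in the rest of the appendix), which I would not spell out in detail.
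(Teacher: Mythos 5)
Your proof is correct: the identities you rely on all check out ($\Omega_k = X\,(\omega_{k-1})_{[2]}$; $T_k\Omega_kT_k^\dagger=\omega_k X$ and $H^\dagger XH=Z^\dagger$, hence $H^\dagger T_k\Omega_kT_k^\dagger H=\omega_k Z^\dagger$), the ancilla bookkeeping works (the catalyst qutrit is genuinely borrowed because the synthesized operator is $(\omega_k)_{[x]}\otimes I$, and for $k\geq 2$ the $\Lambda_x(X)$ and $\Lambda_x(Z)$ corrections can borrow one of the $k-1$ qutrits already present for the recursive call), and the base case and the reduction to $(\omega_{k-1})_{[x']}$ on $3^{n+1}$ dimensions match \cref{prop:circuits} and the induction hypothesis. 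Your route shares the paper's overall recursive skeleton — both reduce $(\omega_k)$ to a multiply-controlled $\omega_{k-1}$-phase at the cost of one additional borrowed ancilla per level, exactly as in the $k=2$ case of \cref{lem:w2} — but the derivation of the inductive step differs. The paper proceeds by a chain of explicit multiply-controlled diagonal-gate identities: it first builds the controlled $M=\mathrm{diag}(1,\omega_k,\omega_k^\dagger)$, then the controlled $\omega_k\,\mathrm{diag}(1,1,\omega_{k-1}^\dagger)$, and composes with the controlled $\mathrm{diag}(1,1,\omega_{k-1})$, verifying each step by a circuit figure. You instead derive the same kind of circuit conceptually, by observing that $\Lambda_x(\Omega_k)=\phi_k\bigl((\omega_k)_{[x]}\bigr)$ and that running this catalytic-embedding image on a \emph{borrowed} rather than fresh catalyst costs only the Clifford correction $\Lambda_x(Z)$, via the diagonalization $H^\dagger T_k\Omega_kT_k^\dagger H=\omega_k Z^\dagger$. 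What your version buys is an explicit link between this appendix construction and the catalytic embeddings of \cref{prop:catembfunc,prop:catembkmone} (and an explanation of \emph{why} the recursion terminates in a controlled $(\omega_{k-1})_{[2]}$, namely that $\Omega_k$ factors as $X$ times that phase); what the paper's version buys is a self-contained sequence of verifiable circuit equalities that does not presuppose the machinery of \cref{sec:catembs}.
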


\begin{proof}
  First, we build the multiply-controlled $M$ gate, where $M =
  \textnormal{diag}(1,\omega_k,\omega_k^\dagger)$.
  \begin{equation}
    \tikzfig{omegak1}
  \end{equation}
  Then, we can build the multiply-controlled one-qutrit gate
  $\omega_k(\omega_{k-1})^\dagger_{[2]} = \omega_k
  \textnormal{diag}(1,1,\omega_{k-1}^\dagger)$.
  \begin{equation}
    \tikzfig{omegak2}
  \end{equation}
  Finally, we can combine this with the multiply-controlled one-qutrit
  gate $(\omega_{k-1})_{[2]}=\textnormal{diag}(1,1,\omega_{k-1})$ to get
  $(\omega_k)_{[2...2]}$.
  \begin{equation}
    \tikzfig{omegak3}
  \end{equation}
  Since a single borrowed ancilla suffices to build $(\omega)$ and 2
  borrowed ancillae suffice to build $(\omega_2)$, the above equation
  shows that $k$ ancillae suffice to build $(\omega_k)$.
\end{proof}

\end{document}